\documentclass[12pt, letterpaper]{article}

\RequirePackage[T1]{fontenc}
\usepackage{amsmath}
\numberwithin{equation}{section}
\usepackage{amssymb}
\RequirePackage{amsthm}
\usepackage{authblk}
\usepackage{caption}
\usepackage{comment}
\usepackage{dsfont}
\usepackage{eurosym}
\usepackage[top=1in, bottom=2in, left=1in, right=1in]{geometry}
\usepackage[colorlinks=true, allcolors=blue]{hyperref}
\usepackage{natbib}
\bibpunct{(}{)}{,}{a}{}{,}
\usepackage{mathrsfs}
\usepackage{mathtools}
\usepackage{upquote}
\usepackage{subfig}
\usepackage{booktabs}
\usepackage{lscape}
\usepackage[dvipsnames]{xcolor}
\usepackage[normalem]{ulem}
\usepackage{makecell}

\theoremstyle{plain}
\newtheorem{assumption}{Assumption}[section]
\newtheorem{lemma}{Lemma}[section]
\newtheorem{proposition}{Proposition}[section]
\newtheorem{remark}{Remark}[section]
\newtheorem{theorem}{Theorem}[section]

\theoremstyle{remark}
\newtheorem{definition}{Definition}[section]
\newtheorem{example}{Example}[section]

\DeclareMathOperator{\Exp}{\mathds E}

\DeclareMathOperator{\Prob}{\mathds P}

\DeclareMathOperator{\diag}{\mathrm diag}

\begin{document}
\title{An operator-level ARCH Model}

\author[1]{Alexander Aue}
\author[2]{Sebastian K\"uhnert}
\author[3]{Gregory Rice}
\author[4]{Jeremy VanderDoes}

\affil[1]{Department of Statistics, University of California, Davis, CA 95616 Davis, USA, \textsuperscript{\ddag}\href{mailto:aaue@ucdavis.edu}{aaue@ucdavis.edu}}
\affil[2]{Department of Mathematics, Ruhr University Bochum, 44780 Bochum, Germany, \textsuperscript{*}\href{mailto:sebastian.kuehnert@ruhr-uni-bochum.de}{sebastian.kuehnert@ruhr-uni-bochum.de}}
\affil[3,4]{Department of Statistics and Actuarial Science, University of Waterloo, ON N2L 0A4 Waterloo, Canada, 
\textsuperscript{\dag}\href{mailto:grice@uwaterloo.ca}{grice@uwaterloo.ca}, 
\href{mailto:Jeremy.VanderDoes@uwaterloo.ca}{Jeremy.VanderDoes@uwaterloo.ca}}

\date{March 15, 2026}

\maketitle

\begin{abstract} AutoRegressive Conditional Heteroscedasticity (ARCH) models are standard for modeling time series exhibiting volatility, with a rich literature in univariate and multivariate settings. In recent years, these models have been extended to function spaces. However, functional ARCH and generalized ARCH (GARCH) processes established in the literature have thus far been restricted to model ``pointwise'' variances. In this paper, we propose a new ARCH framework for data residing in general separable Hilbert spaces that accounts for the full evolution of the conditional covariance operator. We define a general operator-level ARCH model. For a simplified  Constant Conditional Correlation version of the model, we establish conditions under which such models admit strictly and weakly stationary solutions, finite moments, and weak serial dependence. Additionally, we derive consistent Yule--Walker-type estimators of the infinite-dimensional model parameters. The practical relevance of the model is illustrated through simulations and a data application to high-frequency cumulative intraday returns.
\end{abstract}

\noindent{\small \textit{MSC 2020 subject classifications:} 60G10, 62F12, 62R10}

\noindent{\small \textit{Keywords:} ARCH; ﬁnancial time series; functional data; parameter estimation; stationary solutions}

\section{Introduction}\label{Introduction}

Conditionally heteroscedastic processes are fundamental in financial modeling. To capture their dynamics, \cite{Engle1982} introduced the \textit{Autoregressive Conditional Heteroscedastic} (ARCH) model, later extended by \cite{Bollerslev1986} to the \textit{Generalized ARCH} (GARCH) model. Extensive reviews of uni- and multivariate (G)ARCH as well as other related volatility models can be found in \cite{AndersenEtAl2009}, \cite{mvgarch2}, \cite{mvgarch1}, \cite{FrancqZakoian2019}, and \cite{Gourieroux1997}.

With advances in data processing and the growing availability of high-frequency financial data, functional volatility and (G)ARCH (f(G)ARCH) models have gained prominence. As a motivating example, consider observing on consecutive days $k\in\{1, \dots,N\}$ the price of the S\&P 500 index at a resolution of one minute. Conveniently, these observations can be viewed as a sample of curves or functions $\{P_k(t)$, $i \in \{1,\dots,N\},\;t\in[0,1]\}$, where intraday time is normalized to the unit interval; see the left panel of Figure \ref{fig:sp500}. A transformation of particular interest are the overnight intraday log-returns $X_k(t)= \log P_k(t) - \log P_{k-1}(1)$; see the right panel of Figure \ref{fig:sp500}. Representation of the data as functions constructed from noisy high-dimensional intraday observations allows for the use of techniques from functional data analysis. We refer the reader to \cite{HorvathKokoszka2012} and \cite{HsingEubank2015} for introductions to functional data analysis, and to  \cite{Bosq2000} and \cite{BosqBlanke2007} for introductions to linear time series processes in function spaces.

\begin{figure}
    \centering
    \subfloat[S\&P 500 price curves]{
        \includegraphics[width=0.4\textwidth, trim=6.5cm 1.5cm 3cm 5cm, clip]{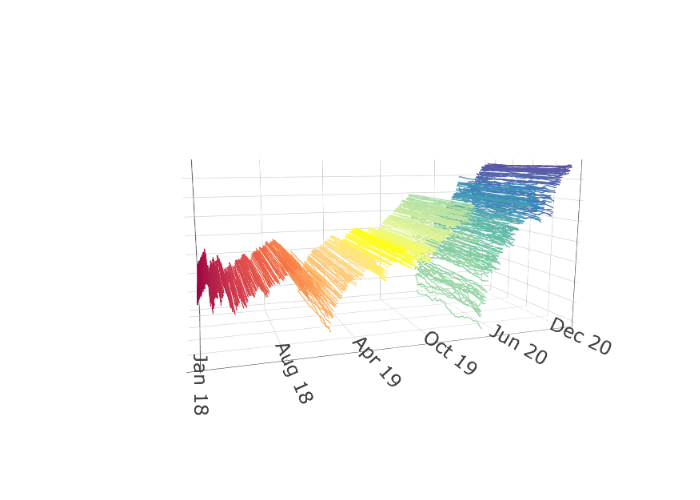}
    }\hfill
    \subfloat[S\&P 500 OCIDR]{
        \includegraphics[width=0.4\textwidth, trim=6.5cm 1.5cm 3cm 5cm, clip]{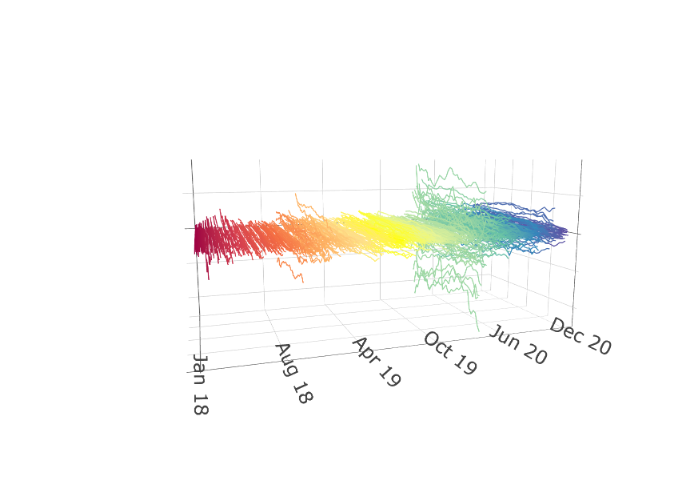}
    }
    \caption{\textbf{S\&P500 Data.} Visualization of price (left) and overnight cumulative intraday return (right) curves based on 15-minute resolution S\&P 500 stock market prices from 2018 to 2020}
    \label{fig:sp500}
\end{figure}

In seminal work, \cite{Hoermannetal2013} introduced the fARCH$(1)$ process to model conditional heteroscedasticity in functional time series data. Their model was extended to fGARCH$(1,1)$ by \cite{Aueetal2017}, and to general fGARCH$(p,q)$ by \cite{Ceroveckietal2019}. For further developments and modified functional GARCH and volatility models, see \cite{Andersen02042024}, \cite{KearneyShangZhao2023}, \cite{Kuehnert2019, Kuehnert2020}, \cite{Laksacietal2025}, \cite{LiSunLiu2025}, \cite{RiceEtAL2023}, \cite{SunYu2020}. 

Henceforth, we refer to these existing models as ``pointwise'' f(G)ARCH (pw-f(G)ARCH) models, since they are formulated as follows: For the function spaces $C[0,1]$ and $L^2[0,1]$ of continuous and square-integrable functions on $[0,1],$ the pw-fARCH$(p)$ model as introduced in \cite{Ceroveckietal2019} takes the form
\begin{align}\label{eq:pw-fARCH def}
    X_{k}(t) &= \sigma_{k}(t)\varepsilon_{k}(t), \quad
    \sigma^2_{k}(t) = \delta(t) + \sum_{i=1}^{p} \,\alpha_i(X^2_{k-i})(t), \quad k\in\mathbb{Z}; \qquad \Exp(\varepsilon^2_0(t))=1  . 
\end{align}
Here, the model parameters are $\delta$, a positive function, and $\alpha_i$, bounded linear operators mapping non-negative functions to non-negative functions, with the $\varepsilon_k$ being i.i.d.~and centered innovations. When there exists a stationary and causal solution to \eqref{eq:pw-fARCH def} with finite second-order moments, the condition $\Exp(\varepsilon^2_0(t))=1$ identifies $\sigma_k^2(t)= \mbox{Var}(X_k(t)| \mathcal{F}_{k-1})$ as the pointwise conditional variance, where $\mathcal{F}_{k} =\sigma( \varepsilon_i, \; i \le k)$ is the information filtration generated by the innovations. 

Despite its success in numerous applications, this model has some limitations. First, it only explicitly models the serial dependence structure of the pointwise conditional variance $\mbox{Var}(X_k(t)| \mathcal{F}_{k-1})$. For many tasks, for example the construction of a prediction set for $X_k$, one wishes to estimate the full conditional covariance kernel/operator of the sequence. Under model \eqref{eq:pw-fARCH def}, so long as it is well-defined, the conditional covariance kernel takes the form 
$$
    \mbox{Cov}\big(X_k(t),X_k(s) \big| \mathcal{F}_{k-1}\big) \;=\; \Exp\!\big[ X_k(t) X_k(s) \big| \mathcal{F}_{k-1}\big] \;=\; \sigma_k(t) \sigma_k(s)C_\varepsilon(t,s),
$$
where $C_\varepsilon$ is the covariance kernel of the innovations. As such, the conditional covariance forecast implied by model \eqref{eq:pw-fARCH def} constitutes a ``rank-one update'' to the covariance of the innovations, which may be overly simplistic. 

Similar critiques of the multivariate diagonal GARCH model have led to a plethora of alternative models of varying complexity, including the VEC, CCC, DCC, and BEKK GARCH; see \citet[][Ch.~10]{FrancqZakoian2019}. Analogues of these models for functional time series have not been considered, to the best of our knowledge. One potential reason for this is the fact that the identity map is not compact in general, separable Hilbert spaces; an issue that is circumvented in the formulation of pw-fARCH models. The non-compactness of the identity map prevents the user from specifying in a useful way an innovation process with identity covariance so that the conditional covariance of the model, and the model parameters, can be identified. This identifiability issue can be  overcome, however, by specifying that the innovations possess some known, injective covariance operator, which is a critical idea underlying our model.   

In this paper, we propose a new ARCH model, termed ``operator-level ARCH'', in general separable Hilbert spaces. It provides a more direct generalization of ARCH processes to the functional setting in that the model is formulated directly for the conditional covariance operator, rather than the pointwise variance as with pw-fARCH models. Since the general model is, from a theoretical point of view, quite challenging, we focus on an operator-level counterpart of the \emph{Constant Conditional Correlation} (CCC) ARCH model in the multivariate setting, which is more feasible. We establish conditions on the model parameters and innovations such that it admits strictly and weakly stationary solutions, refining the classical notion of the (functional) \emph{top Lyapunov exponent} for strict stationarity to suit our general setting. Additionally, we provide sufficient conditions for the existence of finite moments and weak dependence. Our estimation approach targets the ARCH operators under known orders and relies on pseudo Yule--Walker (YW) equations. We derive consistent YW estimators for the intercept term $\Delta,$ which is a self-adjoint and positive definite operator in our setting, and the operators $\alpha_i,$ which map between operator spaces. The estimates are developed under a specific diagonal representation in finite- and infinite-dimensional settings. The identifiability issue that arises when directly deriving the YW-type equations are addressed using specific transformations of the data. In the infinite-dimensional case, we introduce a Sobolev-type condition  inspired by \cite{hall:meister:2007} to quantify the approximability of infinite-dimensional operators by their finite-dimensional approximations. The estimation is conducted via Tikhonov-type pseudoinverses in the YW framework. While the theoretical properties of the estimators are well established, their complexity may present practical challenges. The applicability of the proposed methodology is illustrated through a simulation study an application to intraday log-returns.

The structure of the paper is as follows. Section \ref{Sec:General model} introduces the general model and outlines the main assumptions. Section \ref{Sec:CCC} introduces the corresponding CCC model and derives stationarity conditions and probabilistic properties. Section \ref{Sec:estimation} addresses parameter estimation. Section \ref{Section Simulation} presents a simulation study and Section \ref{Section Applications} provides applications. Section \ref{Section Discussion} concludes. Additionally, the appendix contains preliminaries (Sections \ref{Appendix: Section preliminaries} and \ref{Appendix: Section Notes}), and proofs of key results (Section \ref{Section proofs}).

We adopt the following notation. The identity map is denoted  by $\mathbb{I}$. On a Cartesian product space $V^n$, $n \in \mathbb{N}$, we define inner product and norm by $\langle x, y \rangle = \sum_{i=1}^n \langle x_i, y_i \rangle$ and $\|x\|^2= \sum_{i=1}^n \|x_i\|^2$ for $x = (x_1, \dots, x_n)^{\!\top},~y = (y_1, \dots, y_n)^{\!\top} \in V^n$, assuming $V$ has inner product $\langle \cdot, \cdot \rangle$ and norm $\|\cdot\|$. The spaces of bounded linear, Hilbert-Schmidt (H-S), and nuclear/trace-class operators from $\mathcal{H}$ to $\mathcal{H}_\star$ are respectively denoted by $\mathcal{L}_{\mathcal{H},\mathcal{H}_\star}$, $\mathcal{S}_{\mathcal{H},\mathcal{H}_\star}$, and $\mathcal{N}_{\mathcal{H},\mathcal{H}_\star}$, with norms $\|\cdot\|_{\mathcal{L}}, \|\cdot\|_{\mathcal{S}}, \|\cdot\|_{\mathcal{N}}$ and H-S inner product $\langle \cdot, \cdot\rangle_{\mathcal{S}}$. When $\mathcal{H}, \mathcal{H}_\star$ are clear, we write $\mathcal{T}$ for $\mathcal{T}_{\mathcal{H},\mathcal{H}_\star}$ and $\mathcal{T_H} \coloneqq \mathcal{T}_{\mathcal{H},\mathcal{H}},$ where $\mathcal{T} \in \{\mathcal{L,S,N}\}.$ For $A \in \mathcal{L}$, $A^\ast$ denotes the adjoint. Write $\mathcal{T}_{\geq 0}$ ($\mathcal{T}_{> 0}$) for the self-adjoint non-negative (positive) definite elements of $\mathcal{T}$. For $x \in \mathcal{H}, y \in \mathcal{H}_\star$, the \emph{tensor product operator} is $x \otimes y \coloneqq \langle x, \cdot\rangle y$, with $x^{\otimes 2} \coloneqq x \otimes x,$ and $x\!\otimes_{\mathcal{S}}\!y$ is used when $x, y$ are H-S operators. For $p \in [1,\infty)$, $L^p_\mathcal{H} = L^p_\mathcal{H}(\Omega, \mathfrak{A},\mathbb{P})$ is the space of $X \in \mathcal{H}$ with $\mathbb{E}\|X\|^p < \infty$. The \emph{cross-covariance operator} is
$$
    \mathscr{C}_{\!X,Y} \coloneqq \mathbb{E}[X-\mathbb{E}X ]\otimes [Y-\mathbb{E}Y], \quad X\in L^2_{\mathcal{H}},~ Y \in L^2_{\mathcal{H}_\star},
$$ 
and the \emph{covariance operator} is $\mathscr{C}_{\!X} = \mathscr{C}_{\!X,X}$, with expectation in the Bochner sense.  For weakly stationary $\boldsymbol{X}=(X_k)$ and jointly weakly stationary $\boldsymbol{X}, \boldsymbol{Y}$, the \emph{lag-$h$-covariance} and \emph{lag-$h$-cross-covariance} operators are $\mathscr{C}^h_{\!\boldsymbol{X}} \coloneqq \mathscr{C}_{\!X_0,X_h}$ and $\mathscr{C}^h_{\!\boldsymbol{X},\boldsymbol{Y}} \coloneqq \mathscr{C}_{\!X_0,Y_h}$, with $\mathscr{C}_{\!\boldsymbol{X}} \coloneqq \mathscr{C}^0_{\!\boldsymbol{X}}$ and $\mathscr{C}_{\!\boldsymbol{X},\boldsymbol{Y}} \coloneqq \mathscr{C}^0_{\!\boldsymbol{X},\boldsymbol{Y}}$. Further, $\boldsymbol{X} = (X_k)$ is a \emph{weak white noise} (WWN) if it is weakly stationary, centered, with $\mathbb{E}\|X_0\|^2 > 0$, and $\mathscr{C}^h_{\!\boldsymbol{X}} = 0$ for all $h \neq 0$; and a \emph{strong white noise} (SWN) is an i.i.d.~WWN. For sequences $(b_n), (c_n) \subset \mathbb{R}$, $b_n \asymp c_n$ (as $n \to \infty$) denotes asymptotic equivalence up to constants. Unless stated otherwise, all limits are taken as $N \to \infty$.

\section{General model and assumptions}\label{Sec:General model}
 
Let $\mathcal{H}$ be a separable Hilbert space equipped with the inner product $\langle\cdot, \cdot \rangle$.

\begin{definition}\label{Definition of fvec-GARCH} We call $(X_k)_{k\in\mathbb{Z}}\subset \mathcal{H}$ an \emph{operator-level ARCH$(p)$ process} (op-ARCH$(p)$) if 
\begin{align}\label{eq:definition op-ARCH}
	X_{k} = \Sigma^{1/2}_{k}(\varepsilon_{k}), \qquad \Sigma_{k} = \Delta + \sum_{i=1}^{p}\,\alpha_i(X^{\otimes 2}_{k-i}), \quad k\in\mathbb{Z},\allowdisplaybreaks
\end{align}
where $\boldsymbol{\varepsilon} = (\varepsilon_k)$ is a SWN with covariance operator $\mathscr{C}_{\boldsymbol{\varepsilon}}$, $p\in\mathbb{N},$ $\Delta \in \mathcal{S}_{>0},$ and $\alpha_i : \mathcal{S}\to \mathcal{S}$ are bounded linear operators with $\alpha_i:\mathcal{S}_{\geq 0}\to \mathcal{S}_{\geq 0}$ for all $i=1, 2, \dots, p,$ with $\alpha_p\neq 0.$ 
\end{definition}

\noindent As in the pw-fARCH framework, we call the parameter $\Delta$ the intercept term, and the parameters $\alpha_1, \dots, \alpha_p$ the (op-)ARCH operators. Here, $\Delta$ is a deterministic, self-adjoint, positive definite H-S operator, and each $\alpha_i$ is a bounded and linear operator mapping H-S to H-S operators, preserving self-adjointness and non-negative definiteness. As a consequence, $\Sigma_k$ is self-adjoint and positive definite. If \eqref{eq:definition op-ARCH} admits a strictly stationary solution with finite second moments, then
\begin{align}\label{eq:cond variance}
    \Exp\!\big[X^{\otimes 2}_{k}\,\big|\, \mathcal{F}_{k-1}\big] = \Sigma^{1/2}_{k}\mathscr{C}_{\boldsymbol{\varepsilon}}\Sigma^{1/2}_{k},\quad k \in \mathbb{Z}, 
\end{align}
so that the process exhibits conditional heteroscedasticity. Note that the model remains well-defined if $\Delta$ is merely bounded linear rather than H-S, but since consistent estimation from finite samples requires compactness, we keep the slightly stronger H-S assumption. The formula \eqref{eq:cond variance} highlights the standard identifiability issue in GARCH models: neither $\Sigma_k$ nor its defining parameters are uniquely determined by \eqref{eq:cond variance}, since unitary transformations of $\Sigma_k$ and $\mathscr{C}_{\boldsymbol{\varepsilon}}$ leave the conditional covariance unchanged. In multivariate GARCH models this is avoided by imposing $\mathscr{C}_{\boldsymbol{\varepsilon}} = \mathbb{I}$, identifying $\Sigma_k$ as the conditional covariance operator. In infinite-dimensional Hilbert spaces, however, the identity map $\mathbb{I}$ is not compact and this is not feasible. Instead, if $\mathscr{C}_{\boldsymbol{\varepsilon}}$ is any known injective covariance operator, then \eqref{eq:cond variance} uniquely identifies $\Sigma_k$. Hence, throughout we assume:

\begin{assumption}\label{as:C_eps known and injective} 
The covariance operator $\mathscr{C}_{\boldsymbol{\varepsilon}}$ is known and injective.    
\end{assumption}

\noindent When for example $\mathcal{H}=L^2[0,1]$, and 
$$
    \mathscr{C}_{\boldsymbol{\varepsilon}}(f)(t) = \int_0^1 C_\varepsilon(t,s)f(s)ds,
$$
for a covariance kernel $C_\varepsilon$, natural choices satisfying the above are Brownian motion errors, where $C_\varepsilon(t,s) = \sigma^2 \min\{t,s\}$, and Ornstein--Uhlenbeck errors, where $C_\varepsilon(t,s) = \sigma \exp(-\theta|t-s|).$ Although the conditional covariance operator does not coincide with $\Sigma_k$, it can be easily recovered from it using the known $\mathscr{C}_{\boldsymbol{\varepsilon}}$. 

In pw-f(G)ARCH models of higher order, Markovian state-space forms have been used to establish stationarity \citep[cf.][]{Ceroveckietal2019,Kuehnert2020}. Our op-ARCH$(p)$ process also admits a Markovian representation, though in a more intricate form:
\begin{gather}\label{quasi-state-space form}
    X^{\otimes2,[p]}_{\!k} = \Upsilon_{\!k}\big(\boldsymbol{\Delta} + \boldsymbol{\Psi}(X^{\otimes2,[p]}_{\!k-1})\big), \quad k\in\mathbb{Z}. 
\end{gather}
Here $X^{\otimes2,[p]}_{\!k}$ collects the past $p$ ``squared'' functions, so
\begin{align}\label{sk3}
    X^{\otimes2,[p]}_{\!k} \coloneqq \big(X^{\otimes2}_{k}, X^{\otimes2}_{k-1}, \dots, X^{\otimes2}_{k-p+1}\big)^\top\!,
\end{align}
with constant part $\boldsymbol{\Delta} \coloneqq (\Delta, 0, \dots, 0)^{\!\top}\!$, $\boldsymbol{\Psi}: \mathcal{S}^p_{\ge0}\to \mathcal{S}^p_{\ge0}$ refers to the operator-valued matrix
\begin{gather}\label{eq:matrix state-space form}
    \boldsymbol{\Psi} \coloneqq
    \begin{bmatrix}
        \alpha_1 & \cdots & \cdots & \cdots & \alpha_p\\
        \mathbb{I} & 0 & \cdots & \cdots & 0\\
        0 & \mathbb{I} & 0 & \cdots & 0\\
        \vdots & \ddots & \ddots & \ddots & \vdots\\
        0 & \cdots & 0 & \mathbb{I} & 0
    \end{bmatrix},
\end{gather}
and $\Upsilon_{\!k}:\mathcal{S}^p_{\ge0}\to\mathcal{S}^p_{\ge0}$ refers to the map defined by $\Upsilon_{\!k}(A) \coloneqq (A_1^{1/2}\varepsilon^{\otimes2}_{k}A_1^{1/2}, A_2, \dots, A_p)^\top$ for $A=(A_1,\dots,A_p)^\top\!.$ Although the representation \eqref{quasi-state-space form} is natural, establishing stationarity is challenging: the nonlinearity of $\Upsilon_k$ rules out the use of a top Lyapunov exponent \citep[cf.][]{Kingman1973,Liggett1985}, and the geometric moment contraction condition of \cite{WuShao2004}, as employed in \cite{Hoermannetal2013} for pw-fARCH models, does not appear applicable to the transformation $A\mapsto\Upsilon_k(\boldsymbol{\Delta}+\boldsymbol{\Psi}(A))$. Nevertheless, a strictly stationary solution can still be constructed algorithmically \citep[see][p.~289 for the $\mathrm{vech}$\,(G)ARCH case]{FrancqZakoian2019}. A more tractable model that is amenable to theoretical analysis will be introduced below.

Several of the results below, including crucially methods to estimate the operators in \eqref{eq:definition op-ARCH}, are simplified considerably by assuming the following: 

\begin{assumption}\label{as:Assumptions of (eps_k) and assoc. covop} 
$\mathscr{C}_{\boldsymbol{\varepsilon}}$ commutes with $\Sigma_k$ in \eqref{eq:definition op-ARCH} for all $k.$
\end{assumption}

\noindent Assumptions \ref{as:C_eps known and injective}--\ref{as:Assumptions of (eps_k) and assoc. covop} imply that $\Delta$ and the ranges of $\alpha_1,\dots,\alpha_p$ are diagonalizable with respect to the (known) eigenbasis of $\mathscr{C}_\varepsilon$. In finite-dimensional (G)ARCH models, this is typically ensured by taking $\mathscr{C}_\varepsilon = \mathbb{I}$. The choice of $\mathscr{C}_\varepsilon$ may thus be viewed as selecting the basis that diagonalizes the conditional covariance. In the results below, we explicitly state which arguments rely on this assumption. Moreover, to delve more deeply into the structure of the op-ARCH model, and provide comparisons to other multivariate GARCH models, we use the notation  
\begin{align}\label{sk1}
    \boldsymbol{\alpha}: \mathcal{S}^p_{\geq 0} \to \mathcal{S}_{\geq 0}, ~A=(A_1,\dots,A_p)^\top\mapsto \boldsymbol{\alpha}(A) = \sum^{p}_{i=1}\alpha_i(A_i),
\end{align}
and assume the following. 

\begin{assumption}\label{as: bold alpha_p H-S} 
The operator $\boldsymbol{\alpha}$ is H-S. 
\end{assumption}

\noindent Let $(e_j)_{j=1}^\infty$ be the eigenbasis associated to the covariance operator  $\mathscr{C}_{\boldsymbol{\varepsilon}}$. Assumption \ref{as: bold alpha_p H-S}, together with the fact that the space of H-S operators from $\mathcal{S}^p$ to $\mathcal{S}$ is a separable Hilbert space yields
\begin{align}\label{eq:expansion of H-S operators from S^p to S}
    \boldsymbol{\alpha} = \sum_{i=1}^p\sum_{j=1}^\infty\sum_{k=1}^\infty\sum_{\ell=1}^\infty\sum_{m=1}^\infty \,a_{ijk\ell m}\big[E_{ijk} \!\otimes_\mathcal{S}\! (e_\ell\otimes e_m)\big],
\end{align}
where $E_{ijk}$ is the vector placing $e_j \otimes e_k$ at position $i$ and zeros elsewhere. By the definitions of $\Sigma_k$ in the op-ARCH$(p)$ equation and $\boldsymbol{\alpha}$, and with $a_{ij\ell}\coloneqq a_{ijj\ell \ell},$ this leads to the simplified representation 
\begin{align}\label{eq:expansion of bold a_p for vech}
    \boldsymbol{\alpha} = \sum_{i=1}^p\sum_{j=1}^\infty\sum_{\ell=1}^\infty \,a_{ij\ell}\big[E_{ijj} \!\otimes_\mathcal{S}\! (e_\ell\otimes e_\ell)\big].
\end{align}
Each component $\alpha_i$ of $\boldsymbol{\alpha}$ maps $\mathcal{S}_{\geq 0}$ to $\mathcal{S}_{\geq 0}$, i.e.~non-negative definite, self-adjoint H-S operators into itself. Thus, for any $A = (A_1, \dots, A_p)^\top \in \mathcal{S}^p_{\geq 0}$, $\boldsymbol{\alpha}(A)$ is self-adjoint and non-negative definite under mild conditions. To establish the latter, note that we need
$$ 
    \big\langle \boldsymbol{\alpha}(A) (x), x \big\rangle \;=\; \sum^p_{i=1}\sum^\infty_{j=1}\sum^\infty_{\ell=1}\,a_{ij\ell}\langle A_i(e_j), e_j\rangle \langle x, e_\ell\rangle^2 \;\geq\; 0, \quad x \in \mathcal{H}.
$$
Since $\langle A_i(e_j), e_j\rangle \geq 0$ for all $i,j$, a sufficient condition for $\boldsymbol{\alpha}(A)$ to be non-negative definite is $a_{ij\ell} \geq 0$ for all $i,j,\ell.$

\section{CCC-op-ARCH model and structure}\label{Sec:CCC}

As indicated above, we now introduce a more parsimonious model. The expansion \eqref{eq:expansion of bold a_p for vech} is the most general form of the op-ARCH operators satisfying Assumption \ref{as:Assumptions of (eps_k) and assoc. covop}. In analogy to multivariate GARCH, this is akin to a ``VEC-ARCH'' specification \citep[see][Ch.~10.2.2]{FrancqZakoian2019}. Although such a model allows for a flexible serial dependence structure, it is often considered overparameterized, challenging to estimate, and theoretically difficult to analyze. A more tractable model is obtained by retaining only the diagonal terms in \eqref{eq:expansion of bold a_p for vech}. This section is devoted to the discussion of such a model, where Assumptions \ref{as:C_eps known and injective}--\ref{as: bold alpha_p H-S} hold. 

\begin{definition}\label{ccc-def} We say that a process $(X_k)_{k \in \mathbb{Z}}$ is a {\emph{Constant Conditional Correlation operator-level ARCH$(p)$}} (CCC-op-ARCH$(p)$) process if \eqref{eq:definition op-ARCH} holds with 
\begin{align}\label{eq:diagARCH}         
    \boldsymbol{\alpha} = \sum^p_{i=1}\sum^\infty_{\ell=1}\, a_{i\ell\ell}\big[E_{i\ell\ell}\!\otimes_\mathcal{S}\!(e_\ell\otimes e_\ell)\big].  
\end{align}
\end{definition}

\noindent This is called a ``CCC'' model since it assumes that  $\boldsymbol{\alpha}$ only depends on the diagonal-terms in its expansion, while the components of the process $X_i$ remain ``conditionally correlated'' through their common dependence on $\mathscr{C}_\varepsilon$. We note that one might also consider a CCC model of the form \eqref{eq:definition op-ARCH} where for a compact covariance operator $\mathscr{G}$, $\Sigma_k = H_k^{1/2} \mathscr{G} H_k^{1/2}$, and $H_k$ satisfies the recursion on the right-hand side of \eqref{eq:definition op-ARCH}. Under the commutativity Assumption \ref{as:Assumptions of (eps_k) and assoc. covop} and with $\boldsymbol{\alpha}$ following \eqref{eq:diagARCH}, this reduces to the given CCC model.  

Throughout this section, $(X_k)$ is assumed to be a CCC-op-ARCH$(p)$ model for some $p\in\mathbb{N}.$ One of the benefits of this model is that there exists a linear Markovian form associated with it. To be precise, we have (see also Example \ref{Ex:Markovianforms})
\begin{align}\label{sk2}
    X^{\otimes2,[p]}_{\!k,\mathrm{d}} = \boldsymbol{\Delta}_k + \boldsymbol{\Psi}_{\!k}\big(X^{\otimes2,[p]}_{\!k-1,\mathrm{d}}\big), \quad k\in\mathbb{Z},
\end{align}
where $X^{\otimes2,[p]}_{\!k,\mathrm{d}}$ is the diagonal part of $X^{\otimes2,[p]}_{\!k}$ in \eqref{sk3}, which is for each component defined by
\begin{align}\label{sk7}
    X^{\otimes2}_{\!k-i+1,\mathrm{d}} \coloneqq  \sum^\infty_{\ell=1}\,\langle X_{k-i+1},e_\ell\rangle^2(e_\ell\otimes e_\ell), \quad i \in \{1,...,p\},\;~k\in\mathbb{Z}.
\end{align}
Here $\boldsymbol{\Delta}_k \coloneqq (\psi_k(\Delta), 0, \dots, 0)^\top\!,$ and $\boldsymbol{\Psi}_{\!k}$ has the same form as $\boldsymbol{\Psi}$ in \eqref{eq:matrix state-space form} with each $\alpha_i$ replaced by $\psi_k\circ\alpha_i,$ where $\circ$ refers to composition, and 
\begin{align}\label{sk8}
    \psi_k(B)\coloneqq \sum^\infty_{j=1}\,\langle \varepsilon_k, e_j\rangle^2b_j(e_j\otimes e_j), \quad \mbox{ for } B = \sum^\infty_{j=1}b_j(e_j\otimes e_j)\in \mathcal{S}_{\geq 0}.
\end{align}

It should be noted that a host of other potential, simplified models starting from \eqref{eq:expansion of bold a_p for vech} might be considered. For ease of presentation and due to the empirical performance of the CCC-op-ARCH$(p)$ model in our  analyses below, we have chosen to focus on this case. We discuss this and avenues for future research in Section \ref{Section Discussion}. 

\subsection{Strict stationarity}
To deduce conditions under which this model admits a stationary solution, we introduce a \emph{top Lyapunov exponent} \citep[cf.][]{Kingman1973,Liggett1985}. To this end, we define
\begin{align}\label{eq:tau}
    \tau: \mathcal{B}_{\mathcal{S}^p_{\geq 0,\mathrm{d}}} \to [0,\infty), \quad \tau(B)\coloneqq \sup_{\|A\|_{\mathcal{S}^p}\le 1,\,A\in\mathcal{S}^p_{\ge0,\mathrm{d}}}\|B(A)\|_{\mathcal{S}},
\end{align}
where $\mathcal{S}_{\ge0,\mathrm{d}}\subset \mathcal{S}_{\ge0}$ denotes the subset of non-negative, self-adjoint H-S operators with the diagonal form $A=\sum_j a_j(e_j\otimes e_j)$, and $\mathcal{B}_{\mathcal{S}^p_{\ge0,\mathrm{d}}}$ the set of bounded linear operators on $\mathcal{S}_{\ge0,\mathrm{d}}.$ Although the functional $\tau$ does not define a norm, it satisfies several useful properties for our analysis. It is dominated by the operator norm, $\tau(B) \le \|B\|_{\mathcal{L}}$, compatible with the H-S norm so that  $\|B(A)\|_{\mathcal{S}} \le \tau(B)\|A\|_{\mathcal{S}},$ and sub-multiplicative, $\tau(B_1 \circ B_2) \le \tau(B_1)\tau(B_2)$.

\begin{proposition}\label{prop:existence gamma}
The \emph{top Lyapunov exponent} defined by
\begin{align}\label{Definition top Lyapunov exponent}
    \gamma \coloneqq \lim_{k\to\infty}\frac{1}{k}\ln\tau\big(\boldsymbol{\Psi}_{\!k}\circ\boldsymbol{\Psi}_{\!k-1}\circ\cdots\circ\boldsymbol{\Psi}_{\!1}\big) \;=\; \lim_{k\to\infty}\frac{1}{k}\Exp\ln\tau\big(\boldsymbol{\Psi}_{\!k}\circ\boldsymbol{\Psi}_{\!k-1}\circ\cdots\circ\boldsymbol{\Psi}_{\!1}\big), 
\end{align}
exists with $\gamma\in[-\infty,\infty)$, where the first limit holds almost surely. Moreover,
\begin{align}\label{sk6}
    \gamma = \inf_{k\in\mathbb{N}}\frac{1}{k}\Exp\ln\tau\big(\boldsymbol{\Psi}_{\!k}\circ\boldsymbol{\Psi}_{\!k-1}\circ\cdots\circ\boldsymbol{\Psi}_{\!1}\big).
\end{align}
\end{proposition}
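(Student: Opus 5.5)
The plan is to apply Kingman's subadditive ergodic theorem \citep{Kingman1973,Liggett1985}, in its classical form for the random products $\boldsymbol{\Psi}_{\!k}\circ\cdots\circ\boldsymbol{\Psi}_{\!1}$ used for multivariate GARCH. For $0\le m<n$ I set
$$
    S_{m,n} \coloneqq \ln\tau\big(\boldsymbol{\Psi}_{\!n}\circ\boldsymbol{\Psi}_{\!n-1}\circ\cdots\circ\boldsymbol{\Psi}_{\!m+1}\big)\in[-\infty,\infty).
$$
The three hypotheses of Kingman's theorem are subadditivity, stationarity of the shifted array, and an integrability bound. I check them in that order and then read off both the almost sure limit and the representation \eqref{sk6}.

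\textbf{Step 1 (subadditivity).} Each $\boldsymbol{\Psi}_{\!k}$ maps $\mathcal{S}^p_{\ge0,\mathrm{d}}$ into itself. The reason is that $\psi_k$ multiplies diagonal coefficients by $\langle\varepsilon_k,e_j\rangle^2\ge0$, and $\alpha_i$ acts diagonally with $a_{i\ell\ell}\ge0$ by \eqref{eq:diagARCH}. Hence $\tau$ is defined on all the compositions. Sub-multiplicativity of $\tau$, stated just before the proposition, then gives $S_{m,n}\le S_{m,l}+S_{l,n}$ for $m<l<n$. For this I need $\tau(B_1\circ B_2)\le\tau(B_1)\tau(B_2)$ precisely because $B_2$ preserves the cone $\mathcal{S}^p_{\ge0,\mathrm{d}}$. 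This yields $\|B_1(B_2(A))\|\le\tau(B_1)\|B_2(A)\|\le\tau(B_1)\tau(B_2)\|A\|$; I will verify this carefully, including the convention $\ln 0=-\infty$.

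\textbf{Step 2 (stationarity and ergodicity).} $\boldsymbol{\Psi}_{\!k}$ is a measurable function of $\varepsilon_k$ alone, through the coefficients $\langle\varepsilon_k,e_j\rangle^2$. Since $(\varepsilon_k)$ is i.i.d., the array $(S_{m+1,n+1})$ has the same joint law as $(S_{m,n})$. The shift is also ergodic, because i.i.d.\ sequences are ergodic. Measurability of $\tau(\boldsymbol{\Psi}_{\!n}\circ\cdots)$ is handled by taking the supremum in \eqref{eq:tau} over a countable dense subset of the unit ball of $\mathcal{S}^p_{\ge0,\mathrm{d}}$, which is possible by separability.

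\textbf{Step 3 (integrability).} I need $\Exp S_{0,1}^+<\infty$. I bound $\tau(\boldsymbol{\Psi}_{\!1})\le\|\boldsymbol{\Psi}_{\!1}\|_{\mathcal{L}}$. By the diagonal structure, $\psi_1$ acting on diagonal operators has norm $\sup_j\langle\varepsilon_1,e_j\rangle^2\le\|\varepsilon_1\|^2$. Therefore
$$
    \tau(\boldsymbol{\Psi}_{\!1}) \le C\big(1+\|\varepsilon_1\|^2\max_i\|\alpha_i\|_{\mathcal{L}}\big),
$$
and $\ln^+\tau(\boldsymbol{\Psi}_{\!1})\le C'+\ln(1+\|\varepsilon_1\|^2)$. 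This is integrable since $\Exp\|\varepsilon_1\|^2=\operatorname{tr}\mathscr{C}_{\boldsymbol{\varepsilon}}<\infty$. Subadditivity then also gives $\Exp S_{0,k}^+\le k\,\Exp S_{0,1}^+<\infty$.

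\textbf{Conclusion and main obstacle.} With Steps 1 to 3, Kingman's theorem yields that $\frac1k S_{0,k}$ converges almost surely and that $\frac1k\Exp S_{0,k}$ converges. Both limits equal $\gamma=\inf_k\frac1k\Exp S_{0,k}\in[-\infty,\infty)$, where the value $-\infty$ is allowed because only the positive parts are required to be integrable. Ergodicity makes the almost sure limit a constant, which proves \eqref{Definition top Lyapunov exponent} and \eqref{sk6}.

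I expect the main obstacle to be Step 1 together with measurability. The functional $\tau$ is not a norm and is defined only through the cone, so I must make sure that the compositions remain bounded operators on that cone. I must also confirm that $\tau$ is genuinely sub-multiplicative and measurable there. The remaining steps are routine adaptations of the finite-dimensional argument in \citet{FrancqZakoian2019}.
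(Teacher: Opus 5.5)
Your proposal is correct and follows essentially the same route as the paper. Both arguments verify subadditivity of $\ln\tau$ via sub-multiplicativity of $\tau$, stationarity from the i.i.d.\ innovations, and integrability of $\ln^+\tau(\boldsymbol{\Psi}_{\!1})$ via $\Exp\|\varepsilon_0\|^2<\infty$, and then invoke a subadditive ergodic theorem; the only differences are minor, namely that the paper cites Liggett's improved version of Kingman's theorem and obtains \eqref{sk6} separately from Fekete's lemma, whereas you read everything off the classical Kingman theorem and also make explicit the cone-preservation behind sub-multiplicativity of $\tau$ and the measurability of $\tau$.
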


\begin{theorem}\label{theo:strict_stat}
If 
\begin{align}\label{sk5}
    \gamma<0,
\end{align}
the CCC-op-ARCH$(p)$ process $(X_k)$ admits a strictly stationary, causal, and almost surely unique solution. The same holds for the associated process $(\Sigma_k)$.
\end{theorem}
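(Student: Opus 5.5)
We follow the classical Bougerol--Picard / Francq--Zakoïan scheme for random-coefficient linear recursions, applied to the linear Markovian form \eqref{sk2} on the cone $\mathcal{S}^p_{\ge0,\mathrm{d}}$. Writing $Y_k \coloneqq X^{\otimes2,[p]}_{\!k,\mathrm{d}}$, we iterate \eqref{sk2} backwards. This suggests the candidate solution
\begin{align*}
    Y_k \coloneqq \boldsymbol{\Delta}_k + \sum_{n=1}^\infty \boldsymbol{\Psi}_{\!k}\circ\boldsymbol{\Psi}_{\!k-1}\circ\cdots\circ\boldsymbol{\Psi}_{\!k-n+1}\big(\boldsymbol{\Delta}_{k-n}\big),
\end{align*}
with all terms in $\mathcal{S}^p_{\ge0,\mathrm{d}}$. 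The first step is to show that the series converges absolutely in $\mathcal{S}^p$ almost surely. Each summand is bounded in norm by $\tau(\boldsymbol{\Psi}_{\!k}\circ\cdots\circ\boldsymbol{\Psi}_{\!k-n+1})\,\|\boldsymbol{\Delta}_{k-n}\|_{\mathcal{S}}$, by the compatibility property of $\tau$.

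Since $(\boldsymbol{\Psi}_{\!k},\boldsymbol{\Delta}_k)$ is i.i.d., the product over $k-n+1,\dots,k$ has the same law as $\boldsymbol{\Psi}_{\!n}\circ\cdots\circ\boldsymbol{\Psi}_{\!1}$. Applying Proposition \ref{prop:existence gamma} to the reversed-time sequence (which is again i.i.d.) gives $\limsup_n \frac1n\ln\tau(\boldsymbol{\Psi}_{\!k}\circ\cdots\circ\boldsymbol{\Psi}_{\!k-n+1})=\gamma<0$ almost surely. For the intercept terms, $\|\boldsymbol{\Delta}_{k-n}\|_{\mathcal{S}}\le \|\Delta\|_{\mathcal{S}}\|\varepsilon_{k-n}\|^2$. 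Because $\Exp\|\varepsilon_0\|^2<\infty$, we have $\Exp\ln^+\|\varepsilon_0\|^2<\infty$, and Borel--Cantelli yields $\frac1n\ln^+\|\varepsilon_{k-n}\|^2\to0$ almost surely. A Cauchy root test with rate $e^{(\gamma+\delta)n}$ for small $\delta>0$ then gives almost sure convergence. If $\gamma=-\infty$, we take any negative rate instead.

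The main obstacle is that this pathwise argument needs an almost sure version of the Lyapunov limit along the backward product. Proposition \ref{prop:existence gamma} is stated for forward products, so we carry out the distributional-equality argument explicitly rather than assume it.

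Next we establish measurability, stationarity and the recursion. $Y_k$ is a measurable function of $(\varepsilon_k,\varepsilon_{k-1},\dots)$ given by the same map for every $k$. Hence $(Y_k)$ is strictly stationary and ergodic, and causal with respect to $\mathcal{F}_k$. Substituting the series shows $Y_k=\boldsymbol{\Delta}_k+\boldsymbol{\Psi}_{\!k}(Y_{k-1})$, where the interchange of $\boldsymbol{\Psi}_{\!k}$ with the series is justified by boundedness. We then recover the original processes. We set $\Sigma_k\coloneqq \Delta+\sum_i\alpha_i\big((Y_{k-1})_i\big)$. By the CCC form \eqref{eq:diagARCH}, $\alpha_i$ only sees diagonal parts, so $\alpha_i(X_{k-i}^{\otimes2})=\alpha_i(X_{k-i,\mathrm d}^{\otimes2})$. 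Under Assumption \ref{as:Assumptions of (eps_k) and assoc. covop}, $\Sigma_k$ is diagonal in $(e_\ell)$. We define $X_k\coloneqq\Sigma_k^{1/2}(\varepsilon_k)$ and verify that $\langle X_k,e_\ell\rangle^2=\sigma_{k,\ell}\langle\varepsilon_k,e_\ell\rangle^2$, which matches the first component of $Y_k$, i.e.\ $\psi_k(\Sigma_k)$. Thus $(X_k)$ solves \eqref{eq:definition op-ARCH}, and both $(X_k)$ and $(\Sigma_k)$ are stationary and causal as measurable functionals of $(\varepsilon_j)_{j\le k}$.

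Uniqueness goes as follows. Let $(\tilde X_k)$ be any strictly stationary solution and $\tilde Y_k$ its diagonal state vector, which also satisfies \eqref{sk2}. Iterating $N$ times gives
\begin{align*}
    \tilde Y_k - Y_k = \boldsymbol{\Psi}_{\!k}\circ\cdots\circ\boldsymbol{\Psi}_{\!k-N}\big(\tilde Y_{k-N-1}\big) - R_N,
\end{align*}
where $R_N$ is the tail of the series and tends to $0$ almost surely. The norm of the first term is at most $\tau(\boldsymbol{\Psi}_{\!k}\circ\cdots\circ\boldsymbol{\Psi}_{\!k-N})\|\tilde Y_{k-N-1}\|$. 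Here $\tau(\cdot)$ decays exponentially almost surely, while $\|\tilde Y_{k-N-1}\|$ is tight by stationarity. So the product converges to $0$ in probability, and the almost sure limit $\tilde Y_k-Y_k$ must vanish. Hence $\tilde\Sigma_k=\Sigma_k$ and $\tilde X_k=X_k$ almost surely.
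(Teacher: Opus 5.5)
Your proposal is correct and follows essentially the same route as the paper: backward iteration of the linear Markovian form \eqref{sk2}, a Cauchy root test driven by $\gamma<0$, strict stationarity as a fixed measurable functional of the i.i.d.\ innovations, and uniqueness among stationary solutions via the tail decomposition, using $\tau(\boldsymbol{\Psi}_{\!k}\circ\cdots\circ\boldsymbol{\Psi}_{\!k-N})\to 0$ and tightness of the competing solution. In a few places you are more careful than the paper: you bound the intercepts only through $\Exp\ln^+\|\varepsilon_0\|^2<\infty$, you conclude uniqueness via convergence in probability rather than almost surely, and you flag that the almost sure Lyapunov limit must be justified for backward products (literally applying Proposition~\ref{prop:existence gamma} to the time-reversed sequence gives the opposite composition order, but rerunning the subadditive ergodic theorem or a block/SLLN argument on the backward family works, since $\tau$ is sub-multiplicative in either order).
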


The condition \eqref{sk5} is difficult to verify in practice, even with known, relatively simple, operators $\alpha_1,...,\alpha_p$. The following two sufficient conditions are more tractable.

\begin{proposition}\label{prop: explciti suff conditions for stat sol} Condition \eqref{sk5} is satisfied if, for some $n\in\mathbb{N}$ and $\nu > 0$,
\begin{align}\label{eq:exp opGARCH smaller 1}
	\Exp\tau^\nu\big(\boldsymbol{\Psi}_{\!n}\circ\boldsymbol{\Psi}_{\!n-1}\circ \,\cdots\, \circ\boldsymbol{\Psi}_{\!1}\big) <\, 1. \allowdisplaybreaks
\end{align}
More explicitly, \eqref{eq:exp opGARCH smaller 1} is satisfied with $n=p$ and $\nu=1$ if 
\begin{gather}     
    \|\boldsymbol{\alpha}\|_\mathcal{L}\Exp\!\|\varepsilon_0\|^2\sum^{p}_{\ell=1}\,\ell\big(\|\boldsymbol{\alpha}\|_\mathcal{L}\Exp\!\|\varepsilon_0\|^2\big)^{p-\ell} < 1\label{easier suff cond exp smaller 1}.\allowdisplaybreaks
\end{gather}
\end{proposition}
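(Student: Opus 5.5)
The argument has two parts: first show that \eqref{eq:exp opGARCH smaller 1} implies $\gamma<0$, and then show that \eqref{easier suff cond exp smaller 1} implies \eqref{eq:exp opGARCH smaller 1} with $n=p$ and $\nu=1$.

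\textbf{First part.} We use the infimum representation \eqref{sk6} from Proposition~\ref{prop:existence gamma}. Set $T_n\coloneqq\tau(\boldsymbol{\Psi}_{\!n}\circ\cdots\circ\boldsymbol{\Psi}_{\!1})$. Jensen's inequality for the concave logarithm gives
\[
\tfrac{\nu}{n}\Exp\ln T_n=\tfrac1n\Exp\ln T_n^\nu\le \tfrac1n\ln\Exp T_n^\nu<0 .
\]
Hence $\gamma\le \frac1n\Exp\ln T_n<0$. If $\Exp\ln T_n=-\infty$, then $\gamma=-\infty$ and the conclusion holds trivially. Integrability of $\ln^+T_n$ is automatic, since $\Exp T_n^\nu<\infty$.

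\textbf{Second part: reduction to block norms.} Write $\boldsymbol{\Psi}_{\!k}=\Phi_k+\mathbf{S}$. Here $\Phi_k$ has first block row $(\psi_k\circ\alpha_1,\dots,\psi_k\circ\alpha_p)$ and zeros elsewhere, and $\mathbf{S}$ is the deterministic shift with identities on the subdiagonal. Two bounds are needed.
\begin{itemize}
\item For diagonal $B=\sum_j b_j e_j\otimes e_j$ we have $\|\psi_k(B)\|_{\mathcal S}^2=\sum_j\langle\varepsilon_k,e_j\rangle^4b_j^2\le\|\varepsilon_k\|^4\|B\|_{\mathcal S}^2$. Therefore $\tau(\Phi_k)\le\|\varepsilon_k\|^2\|\boldsymbol{\alpha}\|_{\mathcal L}$, using Cauchy--Schwarz across the $p$ blocks through $\|\boldsymbol{\alpha}(A)\|\le\|\boldsymbol{\alpha}\|_{\mathcal L}\|A\|_{\mathcal S^p}$.
\item For the shift, $\tau(\mathbf S)\le1$ and $\mathbf S^p=0$.
\end{itemize}
Set $c\coloneqq\|\boldsymbol{\alpha}\|_{\mathcal L}\Exp\|\varepsilon_0\|^2$.

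\textbf{Second part: expansion of the product.} Expand $\boldsymbol{\Psi}_{\!p}\circ\cdots\circ\boldsymbol{\Psi}_{\!1}=\prod_{k}(\Phi_k+\mathbf S)$ into $2^p$ words. The word consisting only of $\mathbf S$ is $\mathbf S^p=0$. Any other word contains at least one factor $\Phi$. The sub-multiplicativity of $\tau$ is available, but a naive bound summing over all words gives $(1+\|\varepsilon\|^2\|\boldsymbol\alpha\|)^p-1$, which is weaker than the target $\sum_\ell \ell\,c^{\,p-\ell+1}$.

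To reach the exact form of \eqref{easier suff cond exp smaller 1}, we exploit structure. $\Phi_k$ only writes into the first block, and $\mathbf S$ moves blocks downward. A word with $m$ factors of $\Phi$ therefore vanishes unless the consecutive runs of $\mathbf S$ between the $\Phi$'s are compatible. Every nonzero word can be grouped according to the position of its last $\Phi$. This regrouping is expected to produce terms $c^{\,p-\ell+1}$ with multiplicity at most $\ell$.

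After the grouping, take expectations. Independence of the $\varepsilon_k$ turns products of $\|\varepsilon_k\|^2$ into powers of $\Exp\|\varepsilon_0\|^2$. This yields $\Exp T_p\le c\sum_{\ell=1}^p\ell\,c^{\,p-\ell}$, which is $<1$ by assumption.

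\textbf{Main obstacle.} The combinatorial bookkeeping just described is the hard step: identifying which words survive and checking that their multiplicities match the weights $\ell$. If the exact counting resists, the fallback is to bound the first block row of the product recursively. Writing the product through an ARCH-type recursion $Y_k=\Phi$-part$(Y_{k-1},\dots,Y_{k-p})$ and unrolling it $p$ steps should yield the same polynomial bound in $c$ by induction.
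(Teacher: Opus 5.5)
\textbf{First part.} This is the paper's argument verbatim: Jensen applied to the infimum representation \eqref{sk6}. It is correct.

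\textbf{Second part.} This is not proved, and the mechanism you hope for does not exist. For $n=p$, no word other than $\mathbf S^p$ vanishes. After a $\Phi$ the output sits in block $1$. A run of $r\le p-1$ shifts moves it to block $r+1\le p$, and the next $\Phi_k$ reads that block through $\psi_k\circ\alpha_{r+1}$. A leading run of $r<p$ shifts feeds $(0,\dots,0,A_1,\dots,A_{p-r})$ into $\alpha_{r+1},\dots,\alpha_p$.

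So there are $\binom pm$ generically nonzero words with $m$ factors $\Phi$. Grouping them by the position of the last $\Phi$ only groups them by output block. Your bounds $\tau(\Phi_k)\le\|\varepsilon_k\|^2\|\boldsymbol\alpha\|_{\mathcal L}$ and $\tau(\mathbf S)\le1$, together with independence, then give exactly $(1+c)^p-1$.

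Your fallback recursion gives the same: $\Exp\|Y_k\|\le c\,(1+\sum_{j<k}\Exp\|Y_j\|)$, hence $\Exp\|Y_k\|\le c(1+c)^{k-1}$. The coefficient of $c^m$ is therefore $\binom pm$ rather than $p-m+1$. The two agree only for $p\le2$; for $p=3$ you get $3c+3c^2+c^3$, not $3c+2c^2+c^3$.

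\textbf{The target inequality is false in general.} You aim for $\Exp\tau(\boldsymbol\Psi_{p,p})\le c\sum_{\ell=1}^p\ell c^{p-\ell}$ as a uniform bound, with $\boldsymbol\Psi_{p,p}=\boldsymbol\Psi_p\circ\cdots\circ\boldsymbol\Psi_1$. This fails for large $p$, so no refined bookkeeping can establish it.

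\begin{itemize}
\item For each admissible $A$ we have $\|(\Exp\boldsymbol\Psi_{p,p})(A)\|\le\Exp\tau(\boldsymbol\Psi_{p,p})$, and by independence $\Exp\boldsymbol\Psi_{p,p}=\bar{\boldsymbol\Psi}^p$ with $\bar{\boldsymbol\Psi}=\Exp\boldsymbol\Psi_1$.
\item Take $\mathcal H=\mathbb R$, $p=100$, $\alpha_i(x)=0.1x$ and $\Exp\varepsilon_0^2=1$. Then $c=1$ and $\bar{\boldsymbol\Psi}$ is the companion matrix with first row $(0.1,\dots,0.1)$.
\item For the unit vector $A=(0.1,\dots,0.1)$, the first entry $Y_k$ of $\bar{\boldsymbol\Psi}^kA$ satisfies $Y_1=1$ and $Y_k=1.1\,Y_{k-1}-0.01$ for $k\le100$.
\item Hence $\Exp\tau(\boldsymbol\Psi_{p,p})\ge Y_{100}=0.9\cdot1.1^{99}+0.1>10^4$, while $c\sum_\ell\ell c^{p-\ell}=5050$.
\end{itemize}

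\textbf{Comparison with the paper.} The paper's proof has the same hole. It displays the block recursion for $\boldsymbol\Psi_{p,p}$, bounds $\|\psi_k\circ\alpha_i\|_{\mathcal L}\le\|\boldsymbol\alpha\|_{\mathcal L}\|\varepsilon_k\|^2$, and then asserts the polynomial. Yet its own $p=3$ display already contains three degree-two products: $\alpha_{3,1}\alpha_{2,2}$, $\alpha_{3,2}\alpha_{1,1}$ and $\alpha_{2,1}\alpha_{1,1}$.

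What your tools rigorously give is that $(1+c)^p<2$ implies \eqref{eq:exp opGARCH smaller 1} with $n=p$, $\nu=1$. This coincides with \eqref{easier suff cond exp smaller 1} only for $p\le2$. Obtaining \eqref{easier suff cond exp smaller 1} itself for $p\ge3$ would need a genuinely different argument. It would have to exploit how the individual $\alpha_i$ enter each word, together with the smallness of $c$ that the hypothesis forces, rather than a uniform polynomial bound in $c$.
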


\begin{remark} {\rm 
\mbox{}\\[-2.5ex]
\begin{itemize}
    \item[\textnormal{(a)}] In contrast to real-valued (G)ARCH processes \citep[cf.][Theorem 2.4]{FrancqZakoian2019}, the condition \eqref{sk5} is not necessary, since norms on infinite-dimensional spaces are not equivalent \citep[cf.][Remark 1]{Ceroveckietal2019}. 

    \item[\textnormal{(b)}] The sufficient condition \eqref{easier suff cond exp smaller 1} is convenient, but relatively crude. A sharper analysis of this constraint may further relax the requirement, which we do not pursue here. Further, as each $\boldsymbol{\Psi}_{\!k}$ contains identity maps---and hence has norm at least $1$---\eqref{easier suff cond exp smaller 1} can, by the   definition of $\boldsymbol{\Psi}_{\!k}$, only hold for compositions $\boldsymbol{\Psi}_{\!n}\circ \boldsymbol{\Psi}_{\!n-1}\circ\cdots\circ \boldsymbol{\Psi}_{\!1}$ with $n\ge p$ (see the proof in Section \ref{Section proofs}). For simplicity, we focus on the case $n=p$.
\end{itemize}}
\end{remark}

\begin{example} Suppose $p=1$. Then considering \eqref{eq:exp opGARCH smaller 1} with $n=1$ leads to the sufficient stationarity condition 
$$ 
    \Exp\left[\,\sup_{j \ge 1} a_{1jj}^\nu\langle \varepsilon_1, e_j \rangle^{2\nu} \right]<1,
$$  
for some $\nu >0$. Moreover, one may see from this that 
$$
    \Exp\tau^\nu\big( \boldsymbol{\Psi}_{\!1}\big) \le \| {\alpha}_{1} \|_{\mathcal{L}}^\nu \Exp\left[\,\sup_{j \ge 1} \langle \varepsilon_1, e_j \rangle^{2\nu} \right].
$$
With $\nu =1$, this leads to the sufficient condition $\| {\alpha}_{1} \|_{\mathcal{L}} \|\mathscr{C}_\varepsilon\|_{\mathcal{L}}<1$.
\end{example}

\subsection{Existence of moments, weak dependence, and weak stationarity }\label{subsec:momentsWeakdependence}
 
Under suitable conditions, CCC-op-ARCH processes possess finite moments and display weak serial dependence. In particular, they are $L^p$-$m$-approximable, a weak dependence notion for functional data initially introduced in \cite{HoermannKokoszka2010}. A process $(Y_{k})_{k\in\mathbb{Z}}\subset\mathcal{H}$ is \emph{$L^p$-$m$-approximable} if (a) it admits a Bernoulli shift representation, i.e.~$Y_{k} = f(\varepsilon_{k}, \varepsilon_{k-1}, \ldots)$ for some i.i.d.~sequence $(\varepsilon_{k})_{k\in\mathbb{Z}}$ taking values in a measurable space $\mathbb{S}$, and a measurable mapping $f: \mathbb{S}^{\mathbb{N}} \to \mathcal{H},$ and (b) with $Y^{(m)}_{k}\!\coloneqq f(\varepsilon_{k}, \ldots, \varepsilon_{k-m+1}, \varepsilon'_{k-m}, \varepsilon'_{k-m-1}, \ldots)$, where $(\varepsilon'_k)_{k \in \mathbb{Z}}$ is an independent sequence of copies of $\varepsilon_{0},$ and with $\xi_{Y,p}(m) = (\Exp\!\|Y_{0}\!- Y^{(m)}_{0}\|_\mathcal{H}^p)^{1/p}$, 
\begin{align}\label{lp-m-cond}    
    \sum^\infty_{m=1}  \xi_{Y,p}(m) < \infty. 
\end{align}

\noindent According to (a), $L^p$-m-approximable processes are strictly stationary and ergodic. If \eqref{lp-m-cond} holds with $p \ge 2$, then the process $Y_k$ satisfies for example the central limit theorem. This condition is satisfied by many standard models, including functional AR and linear processes \citep[as illustrated in][]{HoermannKokoszka2010}, and also pw-(G)ARCH models. The following result provides sufficient conditions for this property and for the existence of finite moments. In this result, the notation $\|\cdot\|_4$ appears, referring to the norm of \emph{Schatten class operators of order~4} (see Section~\ref{Appendix: Section preliminaries}).
  
\begin{proposition}\label{prop:Explicit moments} 
Let \eqref{eq:exp opGARCH smaller 1} and $(\varepsilon_k) \subset L^{2\nu}_{\mathcal{H}}$ for the same $\nu$ hold. Then:
\begin{itemize}
    \item[\textnormal{(a)}] $\Exp\!\|X^{\otimes2}_{0}\|^{\nu}_{\mathcal{S}} = \Exp\!\|X_{0}\|^{2\nu} < \infty$ and $\Exp\!\|\Sigma^{1/2}_{0}\|^{2\nu}_4 = \Exp\!\|\Sigma_{0}\|^{\nu}_{\mathcal{S}} < \infty.$
    \item[\textnormal{(b)}] The processes $X_k$ and $\Sigma^{1/2}_k$ are  $L^{2\nu}$-$m$-, and $Y_k=X^{\otimes2}_{k}$ and $Y_k= \Sigma_{k}$ are $L^{\nu}$-$m$-approximable, each with geometrically decaying approximation errors, i.e.~$\xi_{X, 2\nu}(m) \leq c_\dagger \rho^m$ and $\xi_{Y, \nu}(m)\le c_\dagger \rho^m$ for some $c_\dagger>0$ and $\rho\in(0,1)$. 
\end{itemize}
\end{proposition}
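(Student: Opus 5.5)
We work with the causal series solution behind Theorem \ref{theo:strict_stat}, obtained by iterating the linear Markovian form \eqref{sk2}:
\begin{align*}
    X^{\otimes2,[p]}_{\!k,\mathrm{d}} = \boldsymbol{\Delta}_k + \sum_{j=1}^{\infty}\boldsymbol{\Psi}_{\!k}\circ\cdots\circ\boldsymbol{\Psi}_{\!k-j+1}\big(\boldsymbol{\Delta}_{k-j}\big).
\end{align*}
Under Assumption \ref{as:Assumptions of (eps_k) and assoc. covop} and \eqref{eq:diagARCH}, $\Sigma_k=\Delta+\boldsymbol{\alpha}(X^{\otimes2,[p]}_{k-1,\mathrm{d}})$ is diagonal in $(e_\ell)$. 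Hence $\|X_k\|^2=\sum_\ell \sigma_{k,\ell}\langle\varepsilon_k,e_\ell\rangle^2=\|X^{\otimes2}_{k,\mathrm{d}}\|_{\mathcal{N}}$, while $\|X^{\otimes 2}_k\|_{\mathcal{S}}=\|X_k\|^2$ for a rank-one operator. Similarly, $\|\Sigma^{1/2}_k\|_4^4=\sum_\ell\sigma_{k,\ell}^2=\|\Sigma_k\|_{\mathcal{S}}^2$. These identities give the equalities in (a), so it remains to bound $\Exp\|\Sigma_0\|^\nu_{\mathcal{S}}$ and $\Exp\|X_0\|^{2\nu}$.

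For (a), the $\nu$-th moment of the series has to be controlled. If $\nu\le1$, I would use subadditivity $(\sum x_j)^\nu\le\sum x_j^\nu$. If $\nu>1$, I would use Minkowski in $L^\nu$. Split $j=qn+r$ with $0\le r<n$. By sub-multiplicativity of $\tau$, the independence of the blocks $(\boldsymbol{\Psi}_{k-in},\dots,\boldsymbol{\Psi}_{k-in-n+1})$, and stationarity,
\begin{align*}
    \Exp\tau^\nu\big(\boldsymbol{\Psi}_{\!k}\circ\cdots\circ\boldsymbol{\Psi}_{\!k-j+1}\big)\le \big(\Exp\tau^\nu(\boldsymbol{\Psi}_{\!n}\circ\cdots\circ\boldsymbol{\Psi}_{\!1})\big)^{q}\,\Exp\tau^\nu\big(\boldsymbol{\Psi}_{\!r}\circ\cdots\circ\boldsymbol{\Psi}_{\!1}\big).
\end{align*}
The first factor is $\rho_0^q$ with $\rho_0<1$ by \eqref{eq:exp opGARCH smaller 1}. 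Each remainder factor is finite because $\tau(\boldsymbol{\Psi}_{\!k})\le C(1+\|\boldsymbol{\alpha}\|_{\mathcal{L}}\sup_\ell\langle\varepsilon_k,e_\ell\rangle^2)\le C(1+\|\boldsymbol{\alpha}\|_{\mathcal{L}}\|\varepsilon_k\|^2)$ and $(\varepsilon_k)\subset L^{2\nu}_{\mathcal{H}}$.

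The summands also involve $\|\boldsymbol{\Delta}_{k-j}\|_{\mathcal{S}}\le\|\Delta\|_{\mathcal{S}}\|\varepsilon_{k-j}\|^2$, which is independent of the operator product. Its $\nu$-th moment is therefore finite and factors out. This gives a geometrically convergent bound on $\Exp\|X^{\otimes2,[p]}_{0,\mathrm{d}}\|^\nu_{\mathcal{S}}$, and hence on $\Exp\|\Sigma_0\|^\nu_{\mathcal{S}}$ via $\|\Sigma_0\|_{\mathcal{S}}\le\|\Delta\|_{\mathcal{S}}+\|\boldsymbol{\alpha}\|_{\mathcal{L}}\|X^{\otimes2,[p]}_{-1,\mathrm{d}}\|$.

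The main obstacle is the Hilbert–Schmidt-versus-trace issue: $\|X_0\|^2$ is the trace norm of $X^{\otimes2}_{0,\mathrm{d}}$, not its H-S norm. I would handle it directly. Since $\Sigma_0$ is $\mathcal{F}_{-1}$-measurable and independent of $\varepsilon_0$,
\begin{align*}
    \|X_0\|^2\le\|\Sigma_0\|_{\mathcal{L}}\|\varepsilon_0\|^2, \qquad \Exp\|X_0\|^{2\nu}\le \Exp\|\Sigma_0\|^\nu_{\mathcal{S}}\;\Exp\|\varepsilon_0\|^{2\nu}<\infty.
\end{align*}
The same device shows that $\tau$, measured in the diagonal H-S norm, suffices throughout.

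For (b), the series gives a Bernoulli shift $\Sigma_k=f(\varepsilon_{k-1},\varepsilon_{k-2},\dots)$, and $X_k=\Sigma_k^{1/2}(\varepsilon_k)$. For the coupled version $\Sigma^{(m)}_0$, the first $m-1$ summands coincide with those of $\Sigma_0$. Each remaining difference contains a product of at least $m-1$ factors $\boldsymbol{\Psi}$ built from the original $\varepsilon$'s, composed with the coupled tail. The block-splitting bound above then gives $\Exp\|\Sigma_0-\Sigma_0^{(m)}\|^\nu_{\mathcal{S}}\le C\rho_0^{\lfloor m/n\rfloor}$, which yields $\xi_{\Sigma,\nu}(m)\le c_\dagger\rho^m$ (for $\nu<1$, $\xi$ is the $1/\nu$-th power, which is still geometric).

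For the square roots, the operators are diagonal and commuting, so $|\sqrt a-\sqrt b|\le\sqrt{|a-b|}$ holds eigenvalue-wise. This gives $\|\Sigma_0^{1/2}-\Sigma_0^{(m)1/2}\|_4^4\le\|\Sigma_0-\Sigma^{(m)}_0\|_{\mathcal{S}}^2$, so $\Sigma^{1/2}$ is $L^{2\nu}$-approximable. Then
\begin{align*}
    \|X_0-X_0^{(m)}\|\le\|\Sigma_0^{1/2}-\Sigma_0^{(m)1/2}\|_{\mathcal{L}}\|\varepsilon_0\|,
\end{align*}
and independence gives the bound for $X$. Finally, for $X^{\otimes2}$,
\begin{align*}
    \|x^{\otimes2}-y^{\otimes2}\|_{\mathcal{S}}\le\|x-y\|\big(\|x\|+\|y\|\big),
\end{align*}
and Cauchy–Schwarz with the moments from (a) gives the $L^\nu$ geometric bound.
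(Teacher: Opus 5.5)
Your proposal is correct and follows essentially the same route as the paper. You iterate the linear Markovian form \eqref{sk2} and split $j=qn+r$, using sub-multiplicativity of $\tau$ together with independence (subadditivity for $\nu\le 1$, Minkowski for $\nu>1$). You then pass the moment and coupling bounds from $X^{\otimes2,[p]}_{\cdot,\mathrm{d}}$ to $\Sigma_k$, to $\Sigma_k^{1/2}$ via $\|\cdot\|_4$, to $X_k$ and to $X_k^{\otimes2}$. Your eigenvalue-wise proof of $\|\Sigma_0^{1/2}-(\Sigma_0^{(m)})^{1/2}\|_4^4\le\|\Sigma_0-\Sigma_0^{(m)}\|_{\mathcal S}^2$ and your direct Cauchy--Schwarz step for $X^{\otimes2}$ simply make explicit what the paper asserts without proof or cites from \cite{HoermannKokoszka2010}. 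When bounding the coupled tail differences, apply the triangle inequality before invoking $\tau$, as the paper does, since $\tau$ is defined only through non-negative inputs.
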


\begin{proposition}\label{prop:weak stationarity}
Let $(X_k)$ be a weakly stationary CCC-op-ARCH$(p)$ process. Then, $(X_k)$ is a weak white noise. Further, 
$\mu_{\boldsymbol{\Sigma}} \coloneqq \Exp(\Sigma_0) = \Exp(\Sigma_k)$ for all $k,$ with
\begin{gather}\label{eq:Mean vol process ARCH}
    \mu_{\boldsymbol{\Sigma}} = \Delta + \sum^p_{i=1}\alpha_i(\mathscr{C}_{\!\boldsymbol{X}}).
\end{gather}
In addition, if Assumption \ref{as:Assumptions of (eps_k) and assoc. covop} and also  
\begin{align}\label{eq:product op and innovation cov op norm}
    \left\|\,\sum^p_{i=1}\alpha_i(\mathscr{C}_{\boldsymbol{\varepsilon}})\,\right\|_\mathcal{L} < 1
\end{align}
are satisfied, then $\mu_{\boldsymbol{\Sigma}}$ has the more explicit representation 
\begin{align}\label{Garch positive cond - inverse}
    \mu_{\boldsymbol{\Sigma}} = \bigg(\boldsymbol{\mathbb{I}} - \sum^p_{i=1}\,\alpha_i(\mathscr{C}_{\boldsymbol{\varepsilon}})\bigg)^{\!-1}(\Delta).
\end{align}
\end{proposition}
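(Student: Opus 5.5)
The argument has three parts, in order: the white-noise property, the mean identity \eqref{eq:Mean vol process ARCH}, and the inversion giving \eqref{Garch positive cond - inverse}. Throughout I would work with the causal solution adapted to $\mathcal{F}_k=\sigma(\varepsilon_i,\,i\le k)$. Then $\Sigma_k$ is $\mathcal{F}_{k-1}$-measurable and $\varepsilon_k$ is independent of $\mathcal{F}_{k-1}$.

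\emph{White noise.} Weak stationarity gives finite second moments, so conditional expectations are well defined in the Bochner sense. I would first show that $X_k$ is centred. Since $\Sigma_k^{1/2}$ is $\mathcal{F}_{k-1}$-measurable and $\varepsilon_k$ is centred and independent of $\mathcal{F}_{k-1}$,
\[
\Exp[X_k\mid\mathcal{F}_{k-1}]=\Sigma_k^{1/2}(\Exp\varepsilon_k)=0 .
\]
For $h\ge1$, $X_0$ is $\mathcal{F}_{h-1}$-measurable, so the tower property gives
\[
\Exp[X_0\otimes X_h]=\Exp\big[X_0\otimes \Exp[X_h\mid\mathcal{F}_{h-1}]\big]=0 .
\]
Negative lags follow by taking adjoints. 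Finally $\Exp\|X_0\|^2>0$, because $\Sigma_0\ge\Delta>0$ and $\mathscr{C}_{\boldsymbol\varepsilon}\ne0$. The integrability needed for the tower step is a routine check: $\|X_0\|\|X_h\|$ is integrable by Cauchy--Schwarz.

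\emph{Mean identity.} I would take expectations in $\Sigma_k=\Delta+\sum_i\alpha_i(X_{k-i}^{\otimes2})$. Each $\alpha_i$ is bounded and linear on $\mathcal{S}$, and $\Exp\|X^{\otimes2}_{k-i}\|_{\mathcal S}=\Exp\|X_{k-i}\|^2<\infty$. Hence the expectation can be moved inside $\alpha_i$, which gives $\Exp\alpha_i(X^{\otimes2}_{k-i})=\alpha_i(\Exp X^{\otimes2}_{k-i})=\alpha_i(\mathscr{C}_{\boldsymbol X})$, using centredness and stationarity. This yields \eqref{eq:Mean vol process ARCH}, and $\Exp\Sigma_k$ does not depend on $k$.

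\emph{Explicit form.} Next I would express $\mathscr{C}_{\boldsymbol X}$ through $\mu_{\boldsymbol\Sigma}$. By \eqref{eq:cond variance} and Assumption \ref{as:Assumptions of (eps_k) and assoc. covop}, $\Sigma_k^{1/2}$ commutes with $\mathscr{C}_{\boldsymbol\varepsilon}$, since the square root is a limit of polynomials in $\Sigma_k$. This gives
\[
\mathscr{C}_{\boldsymbol X}=\Exp[\Sigma_k^{1/2}\mathscr{C}_{\boldsymbol\varepsilon}\Sigma_k^{1/2}]=\Exp[\mathscr{C}_{\boldsymbol\varepsilon}^{1/2}\Sigma_k\mathscr{C}_{\boldsymbol\varepsilon}^{1/2}]=\mathscr{C}_{\boldsymbol\varepsilon}^{1/2}\mu_{\boldsymbol\Sigma}\mathscr{C}_{\boldsymbol\varepsilon}^{1/2}.
\]
All these operators are diagonal in $(e_\ell)$. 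Write $\mathscr{C}_{\boldsymbol\varepsilon}=\sum_\ell\lambda_\ell e_\ell\otimes e_\ell$ and $\mu_{\boldsymbol\Sigma}=\sum_\ell m_\ell e_\ell\otimes e_\ell$. Then $\mathscr{C}_{\boldsymbol X}=\sum_\ell\lambda_\ell m_\ell\,e_\ell\otimes e_\ell$, which is the composition $\mathscr{C}_{\boldsymbol\varepsilon}\circ\mu_{\boldsymbol\Sigma}$ of the two commuting diagonal operators. The diagonal CCC form \eqref{eq:diagARCH} then gives
\[
\alpha_i(\mathscr{C}_{\boldsymbol X})=\sum_\ell a_{i\ell\ell}\lambda_\ell m_\ell\,e_\ell\otimes e_\ell=\alpha_i(\mathscr{C}_{\boldsymbol\varepsilon})\circ\mu_{\boldsymbol\Sigma}.
\]
Substituting into \eqref{eq:Mean vol process ARCH} gives
\[
\Big(\mathbb{I}-\sum_i\alpha_i(\mathscr{C}_{\boldsymbol\varepsilon})\Big)\mu_{\boldsymbol\Sigma}=\Delta .
\]

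Under \eqref{eq:product op and innovation cov op norm}, the operator $\mathbb{I}-\sum_i\alpha_i(\mathscr{C}_{\boldsymbol\varepsilon})$ is invertible on $\mathcal{H}$ by a Neumann series. Hence $\mu_{\boldsymbol\Sigma}=(\mathbb{I}-\sum_i\alpha_i(\mathscr{C}_{\boldsymbol\varepsilon}))^{-1}\Delta$, which is the meaning of \eqref{Garch positive cond - inverse}. Equivalently, the inverse of left multiplication can be read as an operator on $\mathcal{S}$, with norm controlled by the same bound.

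The main obstacle is this last step. One must interpret $\alpha_i(\mathscr{C}_{\boldsymbol\varepsilon})$ both as an operator on $\mathcal{H}$ and as the factor that turns the map $A\mapsto\alpha_i(\mathscr{C}_{\boldsymbol\varepsilon}^{1/2}A\mathscr{C}_{\boldsymbol\varepsilon}^{1/2})$ into left multiplication. This identification relies on commutativity and the diagonal structure, so I would carry out the computation coordinatewise in the eigenbasis $(e_\ell)$ to avoid ambiguity.
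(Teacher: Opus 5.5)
Your proposal is correct and follows the paper's proof essentially step for step. Independence of $\varepsilon_k$ from the past, which you phrase through the tower property, gives the weak white noise property. Linearity and boundedness of the $\alpha_i$ give \eqref{eq:Mean vol process ARCH}. Commutativity together with the diagonal CCC form yields $\mathscr{C}_{\!\boldsymbol{X}}=\mu_{\boldsymbol{\Sigma}}\mathscr{C}_{\boldsymbol{\varepsilon}}$ and $\alpha_i(\mathscr{C}_{\!\boldsymbol{X}})=\mu_{\boldsymbol{\Sigma}}\,\alpha_i(\mathscr{C}_{\boldsymbol{\varepsilon}})$, and the Neumann series finishes the argument. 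Your coordinatewise computation in $(e_\ell)$ only makes explicit what the paper asserts in one line. Like the paper, it relies on $\Delta$, and hence $\mu_{\boldsymbol{\Sigma}}$, being diagonal in $(e_\ell)$, which the paper takes as a consequence of the commutativity assumption; given that, whether $\mu_{\boldsymbol{\Sigma}}$ multiplies on the left or the right is immaterial.
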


\begin{remark}
\mbox{}\\[-2.5ex]
\begin{itemize}
    \item[\textnormal{(a)}] Eq.~\eqref{Garch positive cond - inverse} parallels the classical formula for the unconditional variance of univariate (G)ARCH models \citep[cf.][]{FrancqZakoian2019}. Moreover, \eqref{eq:product op and innovation cov op norm} may still hold even when $\sum_{i=1}^p\|\alpha_i\|_{\mathcal{L}}\ge 1$, since $\|\mathscr{C}_{\boldsymbol{\varepsilon}}\|_{\mathcal{L}}<1$ for most relevant innovation processes (e.g., the standard Brownian motion).
        
    \item[\textnormal{(b)}] The moment condition in Proposition~\ref{prop:Explicit moments} for $\nu\le 1$ is stated only for completeness and is redundant, since finite second moments of $\varepsilon_0$ were already assumed.
\end{itemize}
\end{remark}

\subsection{Examples}

We illustrate the above results with two classes of op-ARCH processes on $\mathcal{H}=L^2[0,1]$, where $(\varepsilon_k)$ denotes an i.i.d.~sequence of standard Brownian motions on $[0,1]$.

\begin{example}\label{ex:1}
Let $p\in\mathbb{N}$. Consider the integral operator $\Delta$ with kernel $\min(t,s)$, $s,t\in[0,1]$, and define $\alpha_i(A)=a_i A$ for $A\in\mathcal{S}$ and scalars $a_i\ge 0$ for  $1\le i < p,$ and $a_p\neq 0.$ Then $\Delta\in\mathcal{S}_{>0}$ and each $\alpha_i\in\mathcal{L}_{\mathcal S}$ maps $\mathcal{S}_{\ge 0}$ to $\mathcal{S}_{\ge 0}$.
\begin{itemize}
    \item[\textnormal{(a)}] Let $p=1.$ Since $\mathbb{E}\|\varepsilon_0\|^2=\|\mathscr{C}_{\boldsymbol{\varepsilon}}\|_{\mathcal N}=1/2$ and $\|\boldsymbol{\alpha}\|_{\mathcal L}=a_1$, condition~\eqref{easier suff cond exp smaller 1} holds whenever $a_1<2$. In this case, Proposition~\ref{prop:Explicit moments} yields $\mathbb{E}\|X_0\|^2<\infty$ and $\mathbb{E}\|\Sigma_0\|_{\mathcal S}<\infty$, so $(X_k)$ is weakly stationary, a WWN by     Proposition~\ref{prop:weak stationarity}, and $L^2$-$m$-approximable.

    \item[\textnormal{(b)}] Let $p=3.$ Here, $\|\boldsymbol{\alpha}\|_{\mathcal L}\le a\coloneqq a_1+a_2+a_3$. Hence, \eqref{easier suff cond exp smaller 1} is satisfied if
    \[
        a(a^2+4a+12) < 8,
    \]
    which holds roughly for $a \leq 0.551$.
\end{itemize}
\end{example}

\begin{example}\label{ex:2}
In the following, let
\[
    \Delta=\sum_{\ell=1}^{\infty} d_\ell(e_\ell\!\otimes\!e_\ell), 
    \quad  
    \alpha_i=\sum_{\ell=1}^{\infty} a_{i\ell\ell}\,(e_\ell\!\otimes\!e_\ell)\!\otimes_{\mathcal S}\!(e_\ell\!\otimes\!e_\ell),
    \quad 1\le i\le p,
\]
where $(d_\ell)$ and $(a_{i\ell\ell})_\ell$ are positive, strictly decreasing, square-summable sequences. Then
\[
    \Sigma_k=\sum_{\ell=1}^{\infty} Z_{k,\ell}(e_\ell\!\otimes\!e_\ell), 
    \quad 
    X_k=\sum_{\ell=1}^{\infty} Z^{1/2}_{k,\ell}\,\langle\varepsilon_k,e_\ell\rangle e_\ell,
\]
with
\[
    Z_{k,\ell} = d_\ell +\sum_{i=1}^{p} a_{i\ell\ell}\,Z_{k-i,\ell}\,\langle\varepsilon_{k-i},e_\ell\rangle^2.
\]
To highlight structural aspects, let $p=1$ and assume $a_{1\ell\ell}=a\ell^{-2}$ for some $a>0$. Then
\[
    \|\boldsymbol{\alpha}\|_{\mathcal L}\,\mathbb{E}\|\varepsilon_0\|^2
    \;=\; \sup_{\ell \ge 1} |a_{1\ell\ell}|/2 \;=\; a/2,
\]
so, by Example~\ref{ex:1}, $(X_k)$ is strictly stationary, $L^2$-$m$-approximable, and a WWN when  $a<2$. Further in this case Assumption~\ref{as:Assumptions of (eps_k) and assoc. covop} and \eqref{eq:product op and innovation cov op norm} hold. Thus, by Proposition~\ref{prop:weak stationarity},
\[
    \mu_{\boldsymbol{\Sigma}}
    = \big(\boldsymbol{\mathbb{I}}-\alpha_1(\mathscr{C}_{\boldsymbol{\varepsilon}})\big)^{-1}(\Delta).
\]
\end{example}

\section{Estimation in the CCC-op-ARCH model}\label{Sec:estimation}
We now turn to the estimation of the parameters  $\alpha_1,...,\alpha_p,$ and $\Delta$  in the CCC-op-ARCH$(p)$. Henceforth, $\boldsymbol{X} = (X_k) \subset \mathcal{H}$ refers to a stationary CCC-op-ARCH$(p)$ process that possesses finite second moments from which we have observed a stretch of length $N$, $X_1,...,X_N$. We will assume throughout the remainder of this article that Assumption \ref{as:Assumptions of (eps_k) and assoc. covop} holds so that $\mathscr{C}_\varepsilon$ and $\Sigma_k$ commute. Under this assumption, 
$$
    \mu_{\boldsymbol{\Sigma}} = \Exp(\Sigma_k), \quad \mathscr{C}_{\!\boldsymbol{X}} = \Exp(X_k^{\otimes2}) = \mu_{\boldsymbol{\Sigma}} \mathscr{C}_{\boldsymbol{\varepsilon}}, \quad \Sigma_k^{1/2} \mathscr{C}_{\boldsymbol{\varepsilon}} \Sigma_k^{1/2}= \Sigma_k \mathscr{C}_{\boldsymbol{\varepsilon}}, \quad k \in \mathbb{Z},
$$
and the op-ARCH equations imply that
\begin{align}\label{yw-base}
    X^{\otimes2}_{k} - \mathscr{C}_{\!\boldsymbol{X}} 
    = \Sigma^{1/2}_{k}(\varepsilon^{\otimes 2}_{\!k} - \mathscr{C}_{\boldsymbol{\varepsilon}})\Sigma^{1/2}_{k} + \bigg(\sum^p_{i=1}\alpha_i(X^{\otimes2}_{k-i} - \mathscr{C}_{\!\boldsymbol{X}})\bigg)\mathscr{C}_{\boldsymbol{\varepsilon}}.
\end{align}

If we calculate the covariances of the above with $X^{\otimes2}_{k-i} $, $1\leq i \leq p$, due to causality the covariance with the first term on the right-hand side of \eqref{yw-base} will vanish, and the covariances with the second term may be expressed in terms of the $\alpha$ operators. In order to shift the nuisance operator $\mathscr{C}_\varepsilon$ off of the terms containing the $\alpha$'s,  we proceed first by applying on the right a \emph{Tikhonov}-regularized inverse 
\begin{align}\label{ceps-tik}
    \mathscr{C}^\dagger_{\!\boldsymbol{\varepsilon}} \coloneqq (\mathscr{C}_{\boldsymbol{\varepsilon}} + \vartheta_{\!N} \mathbb{I})^{-1}, 
\end{align}
where $\mathbb{I}$ denotes the identity map and $\vartheta_{\!N} > 0$ is an asymptotically vanishing regularization parameter, i.e., $\vartheta_{\!N} \to 0$. After this regularization, we also project onto a finite-dimensional space. Let $(a_j, e_j)$ be the eigenpairs of $\mathscr{C}_{\boldsymbol{\varepsilon}},$ i.e.~$a_1\geq a_2 \geq \cdots >0$ are the eigenvalues associated with the eigenfunctions $e_1, e_2, \dots$ of $\mathscr{C}_{\boldsymbol{\varepsilon}}.$ Then, the projection operator onto the linear space spanned by $e_1, \dots, e_K,$ $K \in \mathbb{N}$, is denoted by $\coprod^{e_K}_{e_1},$ and we define $\mathscr{C}^\ddagger_{\!\boldsymbol{\varepsilon}} \coloneqq \mathscr{C}_{\boldsymbol{\varepsilon}} \mathscr{C}^\dagger_{\!\boldsymbol{\varepsilon}}.$ According again to \eqref{yw-base}, 
\begin{align}
    \big(X^{\otimes2}_{k} - \mathscr{C}_{\!\boldsymbol{X}}\big)\mathscr{C}^\dagger_{\!\boldsymbol{\varepsilon}}\!\coprod^{e_K}_{e_1}
    &\,=\, \Sigma^{1/2}_{k}(\varepsilon^{\otimes 2}_{\!k} - \mathscr{C}_{\boldsymbol{\varepsilon}})\Sigma^{1/2}_{k}\mathscr{C}^\dagger_{\!\boldsymbol{\varepsilon}}\!\coprod^{e_K}_{e_1}\;  + \;\bigg(\sum^p_{i=1}\alpha_i(X^{\otimes2}_{k-i} - \mathscr{C}_{\!\boldsymbol{X}})\bigg)\mathscr{C}^\ddagger_{\!\boldsymbol{\varepsilon}}\!\coprod^{e_K}_{e_1}.\label{eq:quasi-YW}
\end{align}
Further, let $\boldsymbol{X}^{\otimes2}_{\!\boldsymbol{\varepsilon}} =(X^{\otimes2}_{\!k,\boldsymbol{\varepsilon}})_{k\in\mathbb{Z}}\subset\mathcal{S}$ and $\boldsymbol{X}^{\otimes2,[p]}\! = (X^{\otimes2,[p]}_{\!k})_{k\in\mathbb{Z}}\subset\mathcal{S}^{p}$ be the processes defined by
\begin{gather}
    X^{\otimes2}_{\!k,\boldsymbol{\varepsilon}}\coloneqq X^{\otimes2}_{\!k}\mathscr{C}^\dagger_{\!\boldsymbol{\varepsilon}}\!\coprod^{e_K}_{e_1}\,, \; \mbox{ and }  \;  X^{\otimes2,[p]}_{\!k}\coloneqq (X^{\otimes2}_{\!k}, \dots, X^{\otimes2}_{\!k-p+1})^\top\!, \quad k\in\mathbb{Z},
\end{gather}
and their centered versions by
\begin{gather}
    \tilde{X}^{\otimes2}_{\!k,\boldsymbol{\varepsilon}}\coloneqq X^{\otimes2}_{\!k,\boldsymbol{\varepsilon}} - \,\mathscr{C}_{\!\boldsymbol{X}}\mathscr{C}^\dagger_{\!\boldsymbol{\varepsilon}}\!\coprod^{e_K}_{e_1}\,, \; \mbox{ and }  \;  \tilde{X}^{\otimes2,[p]}_{\!k}\coloneqq X^{\otimes2,[p]}_{\!k} - (\mathscr{C}_{\!\boldsymbol{X}}, \dots, \mathscr{C}_{\!\boldsymbol{X}})^\top\!, \quad k\in\mathbb{Z}.
\end{gather}
We see according to \eqref{eq:quasi-YW} that the lag-1 cross-covariance operators $\mathscr{D} = \mathscr{D}_{K,N} = \mathscr{C}^1_{\!\boldsymbol{X}^{\otimes2,[p]}, \boldsymbol{X}^{\otimes2}_{\!\boldsymbol{\varepsilon}}} \in \mathcal{N}_{\mathcal{S}^p\!,\mathcal{S}}$ satisfy the following Yule–Walker (YW)-type equation:
\begin{align}\label{eq:YW initial}
    \mathscr{D} = \boldsymbol{\alpha}\mathscr{C} + \mathscr{R},
\end{align}
where
\begin{align}
    \mathscr{C} = \mathscr{C}_{\!\boldsymbol{X}^{\otimes2,[p]}} \in \mathcal{N}_{\mathcal{S}^p}, 
\end{align}
with $\boldsymbol{\alpha}$ from \eqref{sk1}, and where the remainder $\mathscr{R} = \mathscr{R}_{K,N}$ is defined by
\begin{align}\label{eq:Def remainders R_TNp}
    \mathscr{R}\coloneqq\Exp\!\big\langle \tilde{X}^{\otimes2,[p]}_0, \cdot\big\rangle_{\!\mathcal{S}}\Bigg[\,\Big[\boldsymbol{\alpha}\big(\tilde{X}^{\otimes2,[p]}_0\big)\Big]\bigg(\mathscr{C}^\ddagger_{\!\boldsymbol{\varepsilon}}\!\coprod^{e_K}_{e_1}-\,\mathbb{I}\bigg)\Bigg]\,.
\end{align}
Notice that all the operators in the YW-type equation \eqref{eq:YW initial} are well-defined due to causality of the involved processes and Proposition \ref{prop:Explicit moments}. 

If the remainder term is small, it seems natural in view of \eqref{eq:YW initial} to estimate $\boldsymbol{\alpha}$ by $\hat{\mathscr{D}}\hat{\mathscr{C}}^{-1}$. This, however, does not yield a useful estimate, as $\boldsymbol{\alpha}$ is not identifiable from $\boldsymbol{\alpha}\mathscr{C}$ when $\mathscr{C}$ is not injective. For example, in $\mathcal{H} = L^2[0,1]$ with $p=1$, the operator $J \in \mathcal{S}$ with kernel $j(s,t) = -1$ if $s \leq t$ and $j(s,t) = 1$ otherwise, satisfies $\langle \mathscr{C}(K), K \rangle_{\mathcal{S}} = 0$. This issue parallels the multivariate case where, for $X \in \mathbb{R}^d$, $\mbox{vec}(XX^\top) \in \mathbb{R}^{d^2}$ does not have a full-rank covariance matrix, while the ``half-vectorization'' $\mbox{vech}(XX^\top) \in \mathbb{R}^{d(d+1)/2}$ typically does. In what follows, we derive  estimators for the CCC-op-ARCH$(p)$ operators, that is, to reiterate, $\boldsymbol{\alpha}$ in \eqref{eq:expansion of bold a_p for vech} satisfies 
\begin{align}\label{eq:diagARCH-2}         
    \boldsymbol{\alpha} = \sum^p_{i=1}\sum^\infty_{\ell=1}\, a_{i\ell\ell}\big[E_{i\ell\ell}\!\otimes_\mathcal{S}\!(e_\ell\otimes e_\ell)\big]\,,  
\end{align}
based on the YW-type equation \eqref{eq:YW initial}, suitably modified to ensure identifiability, as well as estimators for the intercept term $\Delta$. 

\subsection{Finite-dimensional setting}\label{Sec: Estimation ARCH ops}

Although our ultimate goal is to derive consistent estimators for the infinite-dimensional operators in \eqref{eq:diagARCH-2}, we begin with estimating the CCC-op-ARCH operators $\alpha_1, \dots, \alpha_p,$ under the simplifying assumption that 
\begin{align}\label{eq:representation alpha_p, diag, fin}         
    \boldsymbol{\alpha} = \boldsymbol{\alpha}_K \coloneqq\sum^p_{i=1}\sum^K_{\ell=1}\, a_{i\ell\ell}[E_{i\ell\ell}\!\otimes_\mathcal{S}\!(e_\ell\otimes e_\ell)]\,   
\end{align}
for a finite integer $K$. Further, we assume that the observed curves $X_i$ are finite-dimensional, so that  with $X_{k,i} = \langle X_k, e_i\rangle:$
\begin{gather}
    X_k = \sum^K_{i=1}\,X_{k,i}e_i, \quad k\in\mathbb{Z}.\label{eq:finite-dim. data} 
\end{gather}  
The representation of $X_k$ yields with $X_{k,ij}\coloneqq X_{k,i}X_{k,j}:$
\[
    X^{\otimes 2}_k = \sum^K_{i=1} \sum^K_{j=1}\,X_{k,ij}(e_i\otimes e_j),
\]
implying that each $X^{\otimes 2, [p]}_k,$ with $p\in\mathbb{N},$ is characterized by the block matrix
\begin{align*}
    \Big(\big(X_{k,ij}\big)_{i,j=1}^K, \big(X_{k-1,ij}\big)_{i,j=1}^K, \dots, \big(X_{k-p+1,ij}\big)_{i,j=1}^K \Big)^\top \in \mathbb{R}^{pK \times K}.    
\end{align*}
 
Throughout, the map $\diag : \mathbb{R}^K \to \mathbb{R}^{K \times K}$ and its adjoint $\diag^\ast : \mathbb{R}^{K \times K} \to \mathbb{R}^K$ construct a diagonal matrix from a vector and create a vector by extracting the diagonal of the input matrix, respectively. Note that $\diag : \mathbb{R}^{pK} \to \mathbb{R}^{pK \times K}$ and $\diag^\ast : \mathbb{R}^{pK \times K} \to \mathbb{R}^{pK}$ are also component-wise defined, i.e. for any vectors $x_1, \dots, x_p\in\mathbb{R}^{K}$ and matrices $A_1, \dots, A_p\in\mathbb{R}^{K\times K},$
\begin{align*}
    \diag(x_1, \dots, x_p) &\coloneqq \big(\diag(x_1), \dots, \diag(x_p)\big)^{\!\top}\!, \mbox{ and }\\[1ex]
    \diag^\ast(A_1, \dots, A_p) &\coloneqq \big(\diag^\ast(A_1), \dots, \diag^\ast(A_p)\big)^{\!\top}\!.
\end{align*}
From \eqref{eq:representation alpha_p, diag, fin} and \eqref{eq:finite-dim. data}, it follows with $\tilde{X}_{m,ij} \coloneqq X_{m,ij} - \Exp(X_{m,ij})$ for any $m$:
\begin{align*}
    &\diag^\ast\!\boldsymbol{\alpha}\mathscr{C}\!\diag(x_1, \dots, x_p)\\
    &\quad= \sum^p_{i=1}\sum^p_{j=1}\sum^K_{k=1}\,x^{(k)}_i\diag^\ast\Exp\!\Big(\tilde{X}_{1-i,kk}\alpha_j\big(\tilde{X}^{\otimes 2}_{1-j}\big)\Big)\allowdisplaybreaks\\
    &\quad = \sum^p_{i=1}\sum^p_{j=1}\sum^K_{k=1}\,x^{(k)}_i\Exp\!\Big(\tilde{X}_{1-i,kk}\big(a_{j11}\tilde{X}_{1-j,11}, a_{j22}\tilde{X}_{1-j,22}, \dots, a_{jKK}\tilde{X}_{1-j,KK} \big)^{\!\top}\Big)\\
    &\quad = \boldsymbol{\alpha}_{\mathrm{d}}\mathscr{C}_{\mathrm{d}}(x_1, \dots, x_p),
\end{align*}
with $\mathscr{C}_{\mathrm{d}} \coloneqq \diag^\ast\mathscr{C}\diag \in \mathbb{R}^{pK\times pK},$ and where $\boldsymbol{\alpha}_{\mathrm{d}} \in \mathbb{R}^{K\times pK}$ is the block matrix 
$$
    \boldsymbol{\alpha}_{\mathrm{d}} \coloneqq \big(\diag(a_{111}, a_{122}, \dots, a_{1KK})\; \cdots \;\diag(a_{p11}, a_{p22}, \dots, a_{pKK})\big).
$$
Therefore, due to the identity \eqref{eq:YW initial}, it holds that
\begin{align}\label{eq:YW of diag alpha_p} 
\mathscr{D}_{\mathrm{d}} = \boldsymbol{\alpha}_{\mathrm{d}}\mathscr{C}_{\mathrm{d}} + \mathscr{R}_{\mathrm{d}},
\end{align}
where $\mathscr{D}_{\mathrm{d}}\coloneqq \diag^\ast\!\mathscr{D}\diag \in \mathbb{R}^{pK\times K}$ and $\mathscr{R}_{\mathrm{d}}\coloneqq \diag^\ast\mathscr{R}\diag \in \mathbb{R}^{pK \times K},$ with $\mathscr{D}, \mathscr{R}$ from \eqref{eq:YW initial}. Moreover, as all the coefficients of interest $a_{ijj},$ with $1\leq i \leq p,$ $1\leq j \leq K,$ are contained in 
$$
    \tilde{\boldsymbol{\alpha}}_{\mathrm{d}} \coloneqq \diag^\ast(\boldsymbol{\alpha}_{\mathrm{d}}),
$$ 
we propose the estimator
\begin{align*} 
    \hat{\boldsymbol{\alpha}}_{\mathrm{d}} \coloneqq \diag^\ast\!\big(\hat{\mathscr{D}}_{\mathrm{d}}\hat{\mathscr{C}}^{-1}_{\mathrm{d}}\big),
\end{align*}
provided $\hat{\mathscr{C}}_{\mathrm{d}}\coloneqq\diag^\ast\hat{\mathscr{C}}\diag$ is non-singular, and $\hat{\mathscr{D}}_{\mathrm{d}}\coloneqq \diag^\ast\!\hat{\mathscr{D}}\diag$, with 
\begin{gather}
    \hat{\mathscr{C}}\coloneqq \frac{1}{N}\sum_{k=p}^{N}\,\big(X^{\otimes2, [p]}_{\!k}\! - \hat{m}_p\big)\otimes_\mathcal{S}\big(X^{\otimes2, [p]}_{\!k} - \hat{m}_p\big), \mbox{ and }\label{eq:est lag-0-cov-op}\\[1ex]
  \hat{\mathscr{D}}\coloneqq \frac{1}{N}\sum_{k=p}^{N-1}\,\big(X^{\otimes2, [p]}_{\!k} - \hat{m}_p\big)\otimes_\mathcal{S}\bigg[\big(X^{\otimes2}_{\!k+1} - \hat{m}'_1\big)\mathscr{C}^\dagger_{\!\boldsymbol{\varepsilon}}\!\coprod^{e_K}_{e_1}\bigg].\label{eq:est lag-1-cross-cov-op}
\end{gather}

\begin{assumption}\label{as: C_d and hat C_d finite injective} 
\mbox{}\\[-2.5ex]
\begin{itemize}
    \item[\textnormal{(a)}] For all $N$ sufficiently large, with probability one  $\hat{\mathscr{C}}_{\mathrm{d}} \in \mathbb{R}^{K\times K}\!$ is non-singular.
    
    \item[\textnormal{(b)}] The matrix $\mathscr{C}_{\mathrm{d}} \in \mathbb{R}^{K\times K}$ is non-singular. 
\end{itemize}
\end{assumption}

\begin{theorem}\label{Th: Cons est alphap fin} Let Assumptions \ref{as:Assumptions of (eps_k) and assoc. covop}, \ref{as: C_d and hat C_d finite injective},  \eqref{eq:representation alpha_p, diag, fin}, \eqref{eq:finite-dim. data}, and the conditions of Proposition \ref{prop:Explicit moments} with $\nu >4$ hold, and $\vartheta_{\!N}=\mathrm O(N^{-1/2})$ with $\vartheta_{\!N}$ in \eqref{ceps-tik} hold. Then, for any norm $\|\cdot\|$ on $\mathbb{R}^{K\times pK},$ 
$$
    \|\hat{\boldsymbol{\alpha}}_{\mathrm{d}} - \tilde{\boldsymbol{\alpha}}_{\mathrm{d}}\| = \mathrm{O}_{\Prob}(N^{-1/2}).
$$
\end{theorem}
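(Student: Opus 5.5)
Since $\diag^\ast$ is linear and continuous, we have $\tilde{\boldsymbol{\alpha}}_{\mathrm{d}} = \diag^\ast(\boldsymbol{\alpha}_{\mathrm{d}})$. By \eqref{eq:YW of diag alpha_p} and Assumption \ref{as: C_d and hat C_d finite injective}(b),
\[
\boldsymbol{\alpha}_{\mathrm{d}} = (\mathscr{D}_{\mathrm{d}} - \mathscr{R}_{\mathrm{d}})\mathscr{C}_{\mathrm{d}}^{-1}.
\]
It therefore suffices to bound
\[
\hat{\mathscr{D}}_{\mathrm{d}}\hat{\mathscr{C}}_{\mathrm{d}}^{-1} - \boldsymbol{\alpha}_{\mathrm{d}} = (\hat{\mathscr{D}}_{\mathrm{d}}-\mathscr{D}_{\mathrm{d}})\hat{\mathscr{C}}_{\mathrm{d}}^{-1} + \mathscr{D}_{\mathrm{d}}\big(\hat{\mathscr{C}}_{\mathrm{d}}^{-1}-\mathscr{C}_{\mathrm{d}}^{-1}\big) + \mathscr{R}_{\mathrm{d}}\mathscr{C}_{\mathrm{d}}^{-1}.
\]
All objects are finite matrices, so norms are equivalent and we may work entrywise. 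The plan has three steps: (i) $\hat{\mathscr{C}}_{\mathrm{d}}-\mathscr{C}_{\mathrm{d}} = \mathrm O_{\Prob}(N^{-1/2})$ and $\hat{\mathscr{D}}_{\mathrm{d}}-\mathscr{D}_{\mathrm{d}} = \mathrm O_{\Prob}(N^{-1/2})$; (ii) inversion is locally Lipschitz near the nonsingular $\mathscr{C}_{\mathrm{d}}$, so $\hat{\mathscr{C}}_{\mathrm{d}}^{-1}-\mathscr{C}_{\mathrm{d}}^{-1}=\mathrm O_{\Prob}(N^{-1/2})$ and $\|\hat{\mathscr{C}}_{\mathrm{d}}^{-1}\|=\mathrm O_{\Prob}(1)$ (Assumption \ref{as: C_d and hat C_d finite injective}(a) makes the estimator well defined); (iii) $\|\mathscr{R}_{\mathrm{d}}\| = \mathrm O(\vartheta_N)=\mathrm O(N^{-1/2})$.

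For (i), the entries of $\hat{\mathscr{C}}_{\mathrm{d}}$ are sample autocovariances of the scalar processes $X_{k,jj}=\langle X_k,e_j\rangle^2$ at lags $0,\dots,p-1$. The entries of $\hat{\mathscr{D}}_{\mathrm{d}}$ are lag-$1,\dots,p$ cross-covariances between $X_{k,jj}$ and $X_{k+1,\ell\ell}\,a_\ell/(a_\ell+\vartheta_N)$; here I use that $\mathscr{C}^\dagger_{\boldsymbol\varepsilon}$ and the projection act diagonally on $e_\ell$, with $\ell\le K$. The factor $a_\ell/(a_\ell+\vartheta_N)\in(0,1]$ is deterministic and bounded, so it does not affect the rate. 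By Proposition \ref{prop:Explicit moments}(b), $X_k^{\otimes 2}$ is $L^\nu$-$m$-approximable with geometric rate. Since $\nu>4$, products $X_{k,jj}X_{k+h,\ell\ell}$ are $L^{\nu/2}$-$m$-approximable with $\nu/2>2$ and geometrically decaying coefficients: use $|ab-a'b'|\le|a||b-b'|+|b'||a-a'|$ and Hölder's inequality. The standard variance bound for partial sums of $L^2$-$m$-approximable sequences, as in \cite{HoermannKokoszka2010}, then gives $\Exp|\frac1N\sum_k (Y_k-\Exp Y_k)|^2=\mathrm O(N^{-1})$. The sample-mean corrections $\hat m_p,\hat m'_1$ contribute a product of two $\mathrm O_{\Prob}(N^{-1/2})$ terms, hence $\mathrm O_{\Prob}(N^{-1})$, and the edge effects from the summation ranges and the $1/N$ normalisation are $\mathrm O(N^{-1})$.

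For (iii), a direct computation settles the bias. Because $\boldsymbol{\alpha}(\tilde X^{\otimes2,[p]}_0)$ is diagonal in $(e_\ell)$, applying $\mathscr{C}^\ddagger_{\boldsymbol\varepsilon}\coprod^{e_K}_{e_1}-\mathbb{I}$ multiplies its $\ell$-th diagonal entry by $a_\ell/(a_\ell+\vartheta_N)-1 = -\vartheta_N/(a_\ell+\vartheta_N)$ for $\ell\le K$. For $\ell>K$ the entry vanishes under \eqref{eq:representation alpha_p, diag, fin}. Hence
\[
\|\mathscr{R}_{\mathrm{d}}\|\le \frac{\vartheta_N}{a_K}\,\|\boldsymbol{\alpha}\|_{\mathcal L}\,\Exp\|\tilde X_0^{\otimes2,[p]}\|_{\mathcal S}^2,
\]
which is finite by Proposition \ref{prop:Explicit moments}(a) and is $\mathrm O(N^{-1/2})$ by the assumption on $\vartheta_N$. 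Here $a_K>0$ is the $K$-th eigenvalue of $\mathscr{C}_{\boldsymbol\varepsilon}$, positive by injectivity. Combining (i)--(iii) and applying the continuous linear map $\diag^\ast$ yields the claim. The main obstacle is step (i): verifying that the product processes inherit $L^2$-$m$-approximability with summable coefficients. This is exactly where the requirement $\nu>4$ enters.
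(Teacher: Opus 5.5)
Your proof is correct and follows the paper's argument: you use the same three-term decomposition from \eqref{eq:YW of diag alpha_p}, a perturbation bound for $\hat{\mathscr{C}}^{-1}_{\mathrm{d}}$ (the paper uses a Neumann series), and the same bound $\|\mathscr{R}_{\mathrm{d}}\| = \mathrm{O}(\vartheta_{\!N}a_K^{-1})$; the only difference is that you derive the $\mathrm{O}_{\Prob}(N^{-1/2})$ rates for $\hat{\mathscr{C}}_{\mathrm{d}}$ and $\hat{\mathscr{D}}_{\mathrm{d}}$ entrywise from scalar $m$-approximability of products, whereas the paper takes them from Lemma \ref{Lemma: Cons lagged cross-cov non-centered}. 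One slip, which is harmless: in $\hat{\mathscr{D}}$ the data are multiplied by $\mathscr{C}^\dagger_{\!\boldsymbol{\varepsilon}}$, not $\mathscr{C}^\ddagger_{\!\boldsymbol{\varepsilon}}$, so the deterministic factor on $X_{k+1,\ell\ell}$ is $(a_\ell+\vartheta_{\!N})^{-1}\le a_K^{-1}$ rather than $a_\ell/(a_\ell+\vartheta_{\!N})\le 1$, and this is still bounded because $K$ is fixed, so the rate is unaffected.
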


The following example shows that the covariance operator of an CCC-op-ARCH process can be injective in a finite-dimensional setting.

\begin{example}\label{ex:injectivity C fin., diag}
Let $p=1,$ and assume \eqref{eq:representation alpha_p, diag, fin}--\eqref{eq:finite-dim. data} hold. Suppose further that $X_0$ has the structure of Example \ref{ex:2}, let $\boldsymbol{\varepsilon} = (\varepsilon_k)$ be an i.i.d.~process of standard Brownian motions on $[0,1]$, and assume $a_{1\ell\ell}<\pi^2/12$ for all $\ell\in\mathbb{N}$. By the Karhunen--Loève expansion \citep[Theorem~7.3.5]{HsingEubank2015}, the scores $\langle \varepsilon_k, e_\ell \rangle$ are centered Gaussian variables, independent across $\ell$, with variances $\lambda_\ell = (\ell-1/2)^{-2}\pi^{-2}$, the eigenvalues of $\mathscr{C}_{\boldsymbol{\varepsilon}}$. The structure of $X_0$ gives $X_{0,\ell\ell} = \langle X_0, e_\ell \rangle^2 = Z_{0,\ell}\langle \varepsilon_0, e_\ell \rangle^2$ for each $\ell\in\mathbb{N}$, where
$$
    Z_{0,\ell} = d_\ell + a_{1\ell\ell}Z_{-1,\ell}\langle \varepsilon_{-1}, e_\ell \rangle^2, \quad \ell\in\mathbb{N},
$$
with $d_\ell$ being the coefficients in the diagonalization of $\Delta$. Since $Z_{0,\ell}$ is independent of $\varepsilon_0$ for each $\ell$, it follows that
\begin{align*}
    \mathscr{C}_{\mathrm{d}} 
    &= \Exp\diag^\ast(\tilde{X}_0^{\otimes 2})\big[\diag^\ast(\tilde{X}_0^{\otimes 2})\big]^\top \\
    &= \Big(\Exp\!\big(X^2_{0,ij}\big) - \Exp\!\big(X_{0,ii}\big)\!\Exp\!\big(X_{0,jj}\big)\Big)_{i,j=1}^K \\
    &= \diag\!\bigg(a_1^2\Big[3\Exp(Z_{0,1}^2) - \big(\Exp(Z_{0,1})\big)^2\Big], \dots, a_K^2\Big[3\Exp(Z_{0,K}^2) - \big(\Exp(Z_{0,K})\big)^2\Big]\bigg).
\end{align*}
Strict and weak stationarity of the volatility processes $(Z_{k,\ell})_k$ are ensured since $a_{1\ell\ell}\in(0,\pi^2/12)$ and $\lambda_\ell\in(0,4/\pi^2]$ imply $\lambda_\ell a_{1\ell\ell}\in(0,1)$ for all $\ell$, and because \citep[cf.][Theorem 2.5 and Remark 2.6]{FrancqZakoian2019}
$$
    \Exp(Z_{0,\ell}) = \frac{d_\ell}{1-\lambda_\ell a_{1\ell\ell}}, \quad \ell\in\mathbb{N}.
$$
Moreover, $\Exp(Z_{0,\ell}^2)$ exists for all $\ell$, as $3\lambda_\ell^2 a_{1\ell\ell}^2\in(0,1)$, and thus $\lambda_\ell a_{1\ell\ell} \in(0,1)$, with
$$
    \Exp(Z_{0,\ell}^2) = \frac{d_\ell^2(1+\lambda_\ell a_{1\ell\ell})}{(1-\lambda_\ell a_{1\ell\ell})(1-3\lambda_\ell^2 a_{1\ell\ell}^2)}, \quad \ell\in\mathbb{N}.
$$
Therefore, as
$$
    3\Exp(Z_{0,\ell}^2) - \big(\Exp(Z_{0,\ell})\big)^2 \;=\; \frac{2d_\ell^2}{(1-\lambda_\ell a_{1\ell\ell})^2(1-3\lambda_\ell^2 a_{1\ell\ell}^2)} \;>\; 0, \quad \ell\in\mathbb{N},
$$
it follows that $\mathscr{C}_{\mathrm{d}}$ is a diagonal matrix with strictly positive diagonal entries, and therefore is invertible.
\end{example}

\subsection{Infinite-dimensional setting}\label{Sec: Estimation inf-dim}

To extend the results of the previous section to the infinite-dimensional setting, the operation ``$\diag$'' must be generalized. In order to do so, we consider for H-S operators $A \in \mathcal{S}$ their orthogonal series expansion in the basis $(e_i\otimes e_j).$ Further we let $\ell^2 = \ell^2(\mathbb{N})$ denote the space of square-summable real valued sequences $(a_i)^\infty_{i=1} \subset \mathbb{R}.$ The bounded operator $\diag : \ell^2 \to \mathcal{S}$ and its adjoint $\diag^\ast : \mathcal{S} \to \ell^2$ are defined by
\begin{alignat*}{2}
    \diag\big((a_i)^\infty_{i=1}\big) \coloneqq \sum_{i=1}^\infty a_i (e_i \otimes e_i)\,,& \qquad 
    &&\diag^\ast\!\bigg(\sum_{i=1}^\infty \sum_{j=1}^\infty a_{ij}(e_i \otimes e_j)\bigg) \coloneqq (a_{ii})^\infty_{i=1}\,.
\end{alignat*}

Here, we assume that $\boldsymbol{\alpha}$ has the infinite-dimensional form \eqref{eq:diagARCH-2}. The YW-type equation \eqref{eq:YW of diag alpha_p} also holds here, with $\mathscr{D}_{\mathrm{d}} \coloneqq \diag^\ast\mathscr{D}\diag : (\ell^2)^p \to \ell^2$, $\mathscr{C}_{\mathrm{d}} \coloneqq \diag^\ast\mathscr{C}\diag : (\ell^2)^p \to (\ell^2)^p$, and $\mathscr{R}_{\mathrm{d}} \coloneqq \diag^\ast\mathscr{R}\diag : (\ell^2)^p \to \ell^2$. The operator $\boldsymbol{\alpha}_{\mathrm{d}} : (\ell^2)^p \to \ell^2$ is
\begin{align}\label{eq:Def bold alpha_p diag}
    \boldsymbol{\alpha}_{\mathrm{d}} \coloneqq \big(\diag\!\big((a_{1jj})^\infty_{j=1}\big)\; \cdots\; \diag\!\big((a_{pjj})^\infty_{j=1}\big)\big),
\end{align}
where each $\diag((a_{ijj})^\infty_{j=1})$ is the infinite-dimensional diagonal matrix of its coefficients. Then, for any $x = (x_1, \dots, x_p)^\top \in (\ell^2)^p$, with $x_i = (x_{ij})^\infty_{j=1}$,
$$
    \boldsymbol{\alpha}_{\mathrm{d}}(x) = \bigg(\sum^p_{i=1}a_{ijj}x_{ij}\bigg)^\infty_{j=1}\,,
$$
which lies in $\ell^2$ whenever $\sup_{i,j}|a_{ijj}| < \infty$.

Unlike in the finite-dimensional case, we cannot directly estimate all coefficients of $\boldsymbol{\alpha}_{\mathrm{d}}$, i.e.
$$
    \tilde{\boldsymbol{\alpha}}_{\mathrm{d}} \coloneqq \diag^\ast(\boldsymbol{\alpha}_{\mathrm{d}}).
$$
We therefore adopt Tikhonov regularization and project onto a finite-dimensional subspace of dimension $K \in \mathbb{N}$, with $K = K_{\!N} \to \infty$ as $N \to \infty$. To estimate $\tilde{\boldsymbol{\alpha}}_{\mathrm{d}}$, we propose
\begin{align}\label{eq:estimator for all coefficients in diag, infinite} 
    \hat{\boldsymbol{\alpha}}_{\mathrm{d}} \coloneqq \diag^\ast\!\bigg(\hat{\mathscr{D}}_{\mathrm{d}}\hat{\mathscr{C}}^\dagger_{\mathrm{d}}\!\coprod^{\hat{c}_{K,\mathrm{d}}}_{\hat{c}_{1,\mathrm{d}}}\bigg)\,,
\end{align}
where $\hat{\mathscr{D}}_{\mathrm{d}} \coloneqq \diag^\ast\hat{\mathscr{D}}\diag$ with $\hat{\mathscr{D}}$ in \eqref{eq:est lag-0-cov-op},
$\hat{\mathscr{C}}_{\mathrm{d}} \coloneqq \diag^\ast\hat{\mathscr{C}}\diag$ with $\hat{\mathscr{C}}$ in \eqref{eq:est lag-1-cross-cov-op}, 
$$\hat{\mathscr{C}}^\dagger_{\mathrm{d}} \coloneqq (\hat{\mathscr{C}}_{\mathrm{d}} + \vartheta_{\!N}\mathbb{I})^{-1}  \mbox{ with } \vartheta_{\!N} \to 0,$$ and where $(\hat{\lambda}_{j,\mathrm{d}}, \hat{c}_{j,\mathrm{d}})$ and $(\lambda_{j,\mathrm{d}}, c_{j,\mathrm{d}})$ are eigenpairs of $\hat{\mathscr{C}}_{\mathrm{d}}$ and $\mathscr{C}_{\mathrm{d}}$, respectively. To establish the consistency of this estimator, we impose:

\begin{assumption}\label{as: Dim eigenspaces op-ARCH, diag, inf} 
There exists $\xi \in \mathbb{N}$ such that:
\begin{itemize}
    \item[\textnormal{(a)}] The dimensions of all eigenspaces of $\mathscr{C}_{\mathrm{d}}$ are bounded above by $\xi;$
    \item[\textnormal{(b)}] For all large $N$, with probability one, the eigenvalues of $\hat{\mathscr{C}}_{\mathrm{d}}$ satisfy $\hat{\lambda}_{j,\mathrm{d}} \neq \hat{\lambda}_{j+1,\mathrm{d}}$ for $j=1,\dots, K+\xi.$ 
\end{itemize}
\end{assumption}

\noindent We define $(\Lambda_{\ell, \mathrm{d}})_{\ell \in \mathbb{N}}$, the reciprocal eigengaps of $\mathscr{C}_{\mathrm{d}}$ as
\begin{align}\label{eq:Def Lambda_K}
    \Lambda_{\ell,\mathrm{d}} \coloneqq \frac{1}{\lambda_{\ell,\mathrm{d}} - \lambda_{\ell_b,\mathrm{d}}}, \quad \ell \in \mathbb{N},
\end{align}
where
\begin{align}\label{eq:Def ell b}
    \ell_b \coloneqq \inf \big\{ j > \ell : \lambda_{j,\mathrm{d}} < \lambda_{\ell,\mathrm{d}} \big\}.
\end{align}
Thus $\Lambda_{\ell, \mathrm{d}}$ is the reciprocal eigengap between the eigenspace of $c_{\ell,\mathrm{d}}$ and the next distinct one. By Assumption \ref{as: Dim eigenspaces op-ARCH, diag, inf}~(a), $\ell_b - \ell \leq \xi$. By part (b), the empirical analogues are
\begin{align}\label{eq:Def hat Lambda_K}
    \hat{\Lambda}_{\ell,\mathrm{d}} \coloneqq \frac{1}{\hat{\lambda}_{\ell,\mathrm{d}} - \hat{\lambda}_{\ell+1,\mathrm{d}}}\,, \quad 1 \leq \ell \leq K + \xi.
\end{align}
Lemma \ref{Lemma: Cons lagged cross-cov non-centered} and the definitions of $\hat{\mathscr{C}}_{\mathrm{d}},\mathscr{C}_{\mathrm{d}}$ yield $\hat{\Lambda}_{\ell,\mathrm{d}} = \mathrm{O}_{\Prob}(\Lambda_{\ell,\mathrm{d}})$ for $1\leq \ell \leq K+\xi$ \citep[cf.][Lemma A.3]{KuehnertRiceAue2026}. 

To establish consistency in the H-S norm, we impose a regularity condition that governs the approximation of $\boldsymbol{\alpha}$ by its finite-dimensional projections. This condition is analogous to the Sobolev-type smoothness assumptions introduced in \cite{hall:meister:2007} for deconvolution problems.

\begin{assumption}\label{eq: Sob cond op-ARCH, infinite diag} 
For some $\gamma>0$,
\begin{align}
    \sum^p_{i=1}\sum^\infty_{\ell=1} \,a^2_{i\ell\ell}(1 + \ell^{2\gamma})\, < \,\infty.    
\end{align}
\end{assumption}

We note that this assumption is stricter than $\boldsymbol{\alpha}$ is H-S, which is equivalent to  square-summability of the coefficients. We may now state our main consistency result.
 
\begin{theorem}\label{theo:Cons est alphap infinite diag} 
Suppose $(X_k)$ is a CCC-op-ARCH$(p)$ process satisfying the conditions of Proposition \ref{prop:Explicit moments} for $\nu=4$. Further, let  Assumptions \ref{as:Assumptions of (eps_k) and assoc. covop}, \ref{as: bold alpha_p H-S}, \ref{as: Dim eigenspaces op-ARCH, diag, inf}, and \ref{eq: Sob cond op-ARCH, infinite diag} hold. Let $K=K_{\!N}\to\infty$, $\vartheta_{\!N}\to 0$ with $K^{\gamma+1/2}a^{-2}_K\Lambda^2_{K,\mathrm{d}} = \mathrm{O}(N^{1/2})$ and  $\vartheta_{\!N} = \mathrm{O}(\min(a_K,\lambda_{K, \mathrm{d}})K^{-\gamma})$, with $\gamma$ defined in Assumption \ref{eq: Sob cond op-ARCH, infinite diag}. Then
$$              
    \|\hat{\boldsymbol{\alpha}}_{\mathrm{d}} - \tilde{\boldsymbol{\alpha}}_{\mathrm{d}}\|_{\mathcal{S}} = \mathrm{O}_{\Prob}(K^{-\gamma}).
$$
\end{theorem}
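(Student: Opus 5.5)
Starting from the diagonalized Yule--Walker identity \eqref{eq:YW of diag alpha_p}, $\mathscr{D}_{\mathrm{d}} = \boldsymbol{\alpha}_{\mathrm{d}}\mathscr{C}_{\mathrm{d}} + \mathscr{R}_{\mathrm{d}}$, I will write $\Pi_K \coloneqq \coprod^{c_{K,\mathrm{d}}}_{c_{1,\mathrm{d}}}$, $\hat\Pi_K \coloneqq \coprod^{\hat c_{K,\mathrm{d}}}_{\hat c_{1,\mathrm{d}}}$ and $\mathscr{C}^\dagger_{\mathrm{d}} \coloneqq (\mathscr{C}_{\mathrm{d}}+\vartheta_N\mathbb{I})^{-1}$. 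Since $\diag^\ast$ is a contraction from H-S operators $(\ell^2)^p\to\ell^2$ (it extracts a subset of the matrix coefficients), it suffices to bound $\|\hat{\mathscr{D}}_{\mathrm{d}}\hat{\mathscr{C}}^\dagger_{\mathrm{d}}\hat\Pi_K - \boldsymbol{\alpha}_{\mathrm{d}}\|_{\mathcal{S}}$, or rather its diagonal part. I decompose this as
\begin{align*}
\hat{\mathscr{D}}_{\mathrm{d}}\hat{\mathscr{C}}^\dagger_{\mathrm{d}}\hat\Pi_K - \boldsymbol{\alpha}_{\mathrm{d}}
&= (\hat{\mathscr{D}}_{\mathrm{d}}-\mathscr{D}_{\mathrm{d}})\hat{\mathscr{C}}^\dagger_{\mathrm{d}}\hat\Pi_K
+ \mathscr{D}_{\mathrm{d}}\big(\hat{\mathscr{C}}^\dagger_{\mathrm{d}}\hat\Pi_K - \mathscr{C}^\dagger_{\mathrm{d}}\Pi_K\big)\\
&\quad+ \mathscr{R}_{\mathrm{d}}\mathscr{C}^\dagger_{\mathrm{d}}\Pi_K
+ \boldsymbol{\alpha}_{\mathrm{d}}\big(\mathscr{C}_{\mathrm{d}}\mathscr{C}^\dagger_{\mathrm{d}}\Pi_K - \Pi_K\big)
+ \boldsymbol{\alpha}_{\mathrm{d}}(\Pi_K-\mathbb{I}),
\end{align*}
and bound each term by $\mathrm{O}_{\Prob}(K^{-\gamma})$.

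\textbf{Sampling terms.} By Proposition~\ref{prop:Explicit moments} with $\nu=4$, the sequences $X^{\otimes2}_k$ and $X^{\otimes2,[p]}_k$ are $L^4$-$m$-approximable with geometric rates. Lemma~\ref{Lemma: Cons lagged cross-cov non-centered} (with the effect of the estimated means $\hat m_p,\hat m_1'$ handled in the standard way) should then give $\|\hat{\mathscr{C}}_{\mathrm{d}}-\mathscr{C}_{\mathrm{d}}\|_{\mathcal{S}} = \mathrm{O}_{\Prob}(N^{-1/2})$. The estimator $\hat{\mathscr{D}}$ carries the factor $\mathscr{C}^\dagger_{\boldsymbol{\varepsilon}}\coprod^{e_K}_{e_1}$, whose norm is at most $a_K^{-1}$, so $\|\hat{\mathscr{D}}_{\mathrm{d}}-\mathscr{D}_{\mathrm{d}}\|_{\mathcal{S}} = \mathrm{O}_{\Prob}(a_K^{-1}N^{-1/2})$. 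On the range of $\hat\Pi_K$ we have $\|\hat{\mathscr{C}}^\dagger_{\mathrm{d}}\hat\Pi_K\|_{\mathcal{L}}\le \hat\lambda_{K,\mathrm{d}}^{-1}$, and $\hat\lambda_{K,\mathrm{d}}\ge\lambda_{K,\mathrm{d}}/2$ with high probability, so the first term is $\mathrm{O}_{\Prob}(a_K^{-1}\lambda_{K,\mathrm{d}}^{-1}N^{-1/2})$.

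For the second term I use the usual perturbation bounds for projections, $\|\hat\Pi_K-\Pi_K\|_{\mathcal{S}}\lesssim \sum_{\ell\le K}\Lambda_{\ell,\mathrm{d}}\|\hat{\mathscr{C}}_{\mathrm{d}}-\mathscr{C}_{\mathrm{d}}\|$, as in \citet[Lemma A.3]{KuehnertRiceAue2026}, where Assumption~\ref{as: Dim eigenspaces op-ARCH, diag, inf} guarantees that eigenspaces have bounded dimension. For the inverses I use the resolvent identity $\hat{\mathscr{C}}^\dagger_{\mathrm{d}}-\mathscr{C}^\dagger_{\mathrm{d}} = \hat{\mathscr{C}}^\dagger_{\mathrm{d}}(\mathscr{C}_{\mathrm{d}}-\hat{\mathscr{C}}_{\mathrm{d}})\mathscr{C}^\dagger_{\mathrm{d}}$. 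I will also use $\|\mathscr{D}_{\mathrm{d}}\| \lesssim \|\boldsymbol{\alpha}_{\mathrm{d}}\|\,\|\mathscr{C}_{\mathrm{d}}\| + \|\mathscr{R}_{\mathrm{d}}\|$. Together these give a bound of order $K^{1/2}a_K^{-1}\Lambda_{K,\mathrm{d}}^2N^{-1/2}$ (the $K^{1/2}$ converts operator-norm bounds into H-S bounds on a $K$-dimensional range, and $\lambda_{K,\mathrm{d}}^{-1}\le \Lambda_{K,\mathrm{d}}$ up to constants). The rate assumption $K^{\gamma+1/2}a_K^{-2}\Lambda_{K,\mathrm{d}}^2=\mathrm{O}(N^{1/2})$ is designed exactly to make these terms $\mathrm{O}_{\Prob}(K^{-\gamma})$.

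\textbf{Deterministic terms.} The bias term $\boldsymbol{\alpha}_{\mathrm{d}}(\Pi_K-\mathbb{I})$ is controlled by Assumption~\ref{eq: Sob cond op-ARCH, infinite diag}. Since only the diagonal $\tilde{\boldsymbol{\alpha}}_{\mathrm{d}}$ is targeted, its contribution is at most $(\sum_{i}\sum_{\ell>K}a_{i\ell\ell}^2)^{1/2}\le K^{-\gamma}(\sum a_{i\ell\ell}^2\ell^{2\gamma})^{1/2}$. This assumes that $\Pi_K$ is sufficiently aligned with the coordinate basis; in the diagonal case $\mathscr{C}_{\mathrm{d}}$ is block-diagonal across coordinates. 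The regularization term satisfies $\|\mathscr{C}_{\mathrm{d}}\mathscr{C}^\dagger_{\mathrm{d}}\Pi_K-\Pi_K\| = \|\vartheta_N\mathscr{C}^\dagger_{\mathrm{d}}\Pi_K\|\le\vartheta_N/\lambda_{K,\mathrm{d}} = \mathrm{O}(K^{-\gamma})$ by the choice $\vartheta_N=\mathrm{O}(\lambda_{K,\mathrm{d}}K^{-\gamma})$.

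\textbf{Main obstacle: the remainder $\mathscr{R}_{\mathrm{d}}$.} By \eqref{eq:Def remainders R_TNp}, it involves $\mathscr{C}^\ddagger_{\boldsymbol{\varepsilon}}\coprod^{e_K}_{e_1}-\mathbb{I}$. On $\mathrm{span}(e_1,\dots,e_K)$ this equals $-\vartheta_N(\mathscr{C}_{\boldsymbol{\varepsilon}}+\vartheta_N\mathbb{I})^{-1}$, with norm at most $\vartheta_N/a_K$, and on the complement it equals $-\mathbb{I}$. Since the range of $\boldsymbol{\alpha}$ is diagonal, the complement contributes through the tail coefficients $a_{i\ell\ell}$, $\ell>K$, which is $\mathrm{O}(K^{-\gamma})$ again by the Sobolev condition. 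The span part is $\mathrm{O}(\vartheta_N/a_K)=\mathrm{O}(K^{-\gamma})$ by the choice of $\vartheta_N$. The delicate point is that $\mathscr{R}_{\mathrm{d}}$ is multiplied by $\mathscr{C}^\dagger_{\mathrm{d}}\Pi_K$, of norm up to $\lambda_{K,\mathrm{d}}^{-1}$. Multiplying naively would lose a factor $\lambda_{K,\mathrm{d}}^{-1}$. To avoid this I plan to write $\mathscr{R}_{\mathrm{d}} = \boldsymbol{\alpha}^{\mathrm{err}}_{\mathrm{d}}\mathscr{C}_{\mathrm{d}}$, where $\boldsymbol{\alpha}^{\mathrm{err}}$ is $\boldsymbol{\alpha}$ post-composed with the above defect. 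This factorization follows from the structure of \eqref{eq:Def remainders R_TNp}, since the defect acts on the right and so commutes with the diagonal extraction. The identity $\mathscr{C}_{\mathrm{d}}\mathscr{C}^\dagger_{\mathrm{d}}$ then has norm at most $1$, and the term is bounded by $\|\boldsymbol{\alpha}^{\mathrm{err}}_{\mathrm{d}}\|_{\mathcal{S}}=\mathrm{O}(K^{-\gamma})$. Collecting all five terms yields the claimed rate $\mathrm{O}_{\Prob}(K^{-\gamma})$.
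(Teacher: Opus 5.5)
There is a genuine gap, exactly at the alignment you flag: the bias term $\diag^\ast\bigl(\boldsymbol{\alpha}_{\mathrm{d}}(\Pi_K-\mathbb{I})\bigr)$, with $\Pi_K=\coprod^{c_{K,\mathrm{d}}}_{c_{1,\mathrm{d}}}$. Otherwise your decomposition is the paper's \eqref{initial inequality operators of linear processes} up to regrouping, and the sampling, regularization and remainder terms are fine, modulo a bookkeeping point below.

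Assumption~\ref{eq: Sob cond op-ARCH, infinite diag} controls $a_{i\ell\ell}$ in the coordinate order of $(e_\ell)$, whereas $\Pi_K$ is a spectral projection of $\mathscr{C}_{\mathrm{d}}$. A Sobolev tail bound for this term needs $\operatorname{ran}\Pi_K$ to contain the unit directions of all coordinates $\ell\le K/p$ (all lags), so that only coefficients with $\ell>K/p$ survive. No hypothesis of the theorem provides this, and your justification fails twice.

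First, the entries $\mathrm{Cov}(\langle X_{1-i},e_j\rangle^2,\langle X_{1-i'},e_k\rangle^2)$, $j\neq k$, of $\mathscr{C}_{\mathrm{d}}$ vanish only if the squared scores $\langle\varepsilon_0,e_j\rangle^2$ are independent across $j$. The CCC structure makes the $j$th coordinate of $X_k$ a function of the $j$th scores alone, but these scores are merely uncorrelated: independent for the Gaussian innovations of the examples, not in general.

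Second, even if $\mathscr{C}_{\mathrm{d}}$ is block-diagonal across coordinates, nothing ties its eigenvalue order to $\ell$. With non-monotone $(d_\ell)$, say, a low-index coordinate can have a tiny $\lambda_{\ell,\mathrm{d}}$ and stay outside $\operatorname{ran}\Pi_K$ for a long range of $K$. It then contributes $|a_{i\ell\ell}|$, which is not $\mathrm{O}(K^{-\gamma})$. For $p\ge 2$ the cut can also split the $p\times p$ lag blocks. Without alignment you only get $\|\boldsymbol{\alpha}_{\mathrm{d}}(\mathbb{I}-\Pi_K)\|_{\mathcal{S}}\to0$ (when $\mathscr{C}_{\mathrm{d}}$ is injective), with no rate.

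The paper's proof does not close this either. In \eqref{eq:norm of fin alpha p d on diff pseudo id to id} the H-S norm of $\boldsymbol{\alpha}_{\mathrm{d},K}(\mathscr{C}^\ddagger_{\mathrm{d}}\Pi_K-\mathbb{I})$ is summed over $m\le K$ only. This silently drops $\sum_{m>K}\|\boldsymbol{\alpha}_{\mathrm{d},K}(c_{m,\mathrm{d}})\|^2=\|\boldsymbol{\alpha}_{\mathrm{d},K}(\mathbb{I}-\Pi_K)\|_{\mathcal{S}}^2$. So an extra hypothesis is needed. One option is $p=1$ with $\mathscr{C}_{\mathrm{d}}$ diagonal in the coordinate basis and $\lambda_{\ell,\mathrm{d}}$ strictly decreasing in $\ell$, as in Example~\ref{ex:Example infinite-dim rates}. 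Another is to assume directly that $\|\boldsymbol{\alpha}_{\mathrm{d}}(\mathbb{I}-\Pi_K)\|_{\mathcal{S}}=\mathrm{O}(K^{-\gamma})$.

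On the remainder, by contrast, you improve on the paper. Your factorization $\mathscr{R}_{\mathrm{d}}=\boldsymbol{\alpha}^{\mathrm{err}}_{\mathrm{d}}\mathscr{C}_{\mathrm{d}}$ is valid, because $\boldsymbol{\alpha}=\diag\,\boldsymbol{\alpha}_{\mathrm{d}}\diag^\ast$ and $\mathscr{C}^\ddagger_{\boldsymbol{\varepsilon}}\coprod^{e_K}_{e_1}-\mathbb{I}$ is diagonal in $(e_\ell)$. It is what actually justifies the $\mathrm{O}(\vartheta_{\!N}a_K^{-1})$ the paper records. The paper's product $\|\mathscr{R}^K_{\mathrm{d}}\|_{\mathcal{S}}\|\mathscr{C}^\dagger_{\mathrm{d}}\Pi_K\|_{\mathcal{L}}$ carries an extra factor $(\lambda_{K,\mathrm{d}}+\vartheta_{\!N})^{-1}$ that is then dropped.

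The bookkeeping point: summing individual eigenprojection perturbations gives $\sum_{\ell\le K}\Lambda_{\ell,\mathrm{d}}$, not the $K^{1/2}\Lambda_{K,\mathrm{d}}$ you then use. Instead, perturb the projection onto the first $K_b-1$ eigenvectors as a block; this depends only on the gap at the cut, which is the purpose of the split in \eqref{eq:decomposition est Tikh projected}.
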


\begin{example}\label{ex:Example infinite-dim rates}
Let $p=1,$ and $X_0$ be as in Example \ref{ex:2}, where $(\varepsilon_k)$ is a sequence of i.i.d.~standard Brownian motions on $[0,1]$, with   $a_{1\ell\ell} \leq \pi^2/12$ for all $\ell\in\mathbb{N}.$ Further suppose that the square-summable sequences $(a_{1\ell\ell}), (d_\ell)\subset(0,\infty)$ of coefficients of $\alpha_1$ and $\Delta$ are strictly decreasing.

First, we consider the case when $a_{1\ell\ell} \asymp \ell^{-a}$ and $d_\ell \asymp \ell^{-d}$ as $\ell \to \infty$ for some $a,d>1/2$. Assumptions \ref{as:Assumptions of (eps_k) and assoc. covop}--\ref{as: bold alpha_p H-S} then hold. Since the entries of $\mathscr{C}_{\mathrm{d}}$ (cf. Example \ref{ex:injectivity C fin., diag}) are distinct (as $(\lambda_\ell)$, $(a_{1\ell\ell})$, $(d_\ell)$ are decreasing), Assumption \ref{as: Dim eigenspaces op-ARCH, diag, inf}~(a) holds, and Assumption \ref{eq: Sob cond op-ARCH, infinite diag} is satisfied for any $\gamma \in (0, a-1/2)$. Let $\gamma = a-1/2-c$ for small $c\in (0,a-1/2).$ The eigenvalues of $\mathscr{C}_{\boldsymbol{\varepsilon}}$ satisfy $a_K \asymp K^{-2}$, and those of $\mathscr{C}_{\mathrm{d}}$ fulfill
$$
    \lambda_{K,\mathrm{d}} \;=\; a_K^2\Big[3\Exp(Z^2_{0,K}) - \big(\Exp(Z_{0,K})\big)^2\Big] \;\asymp\; K^{-4}d_K^2 \;\asymp\; K^{-(2d+4)}.
$$
Therefore, the reciprocal eigengaps satisfy $\Lambda_{K,\mathrm{d}} \asymp K^{2d+5}$. With $K=K_{\!N}\asymp N^{1/2(a-c+4d+14)}$, it follows
$$
    K^{\gamma+1/2}a_K^{-2}\Lambda^2_{K,\mathrm{d}} \;\asymp\; K^{a-c+4d+14} \;=\; \mathrm{O}(N^{1/2}).
$$

Choosing $\vartheta_{\!N} \to 0$ with $\vartheta_{\!N} = \mathrm{O}(\min(a_K,\lambda_{K, \mathrm{d}})K^{-\gamma})$, Theorem \ref{theo:Cons est alphap infinite diag} gives
$$
    \|\hat{\boldsymbol{\alpha}}_{\mathrm{d}} - \tilde{\boldsymbol{\alpha}}_{\mathrm{d}}\|_{\mathcal{S}} = \mathrm{O}_{\Prob}\big(N^{-\frac{2a-1-2c}{4a-4c+16d+56}}\big).
$$

Faster rates occur for larger $a$, slower decay of $(\lambda_\ell)$, and smaller $d$. For instance, with $d=1$, one achieves a rate near $N^{-1/4}$ if $a=18$. 

Now, suppose $a_{1\ell\ell} \asymp q^\ell$ for $q\in(0,1).$ In this case Assumption \ref{eq: Sob cond op-ARCH, infinite diag} holds for all $\gamma>0$, and hence with an appropriate choice of $K_N,$ we obtain the near parametric rate $N^{-1/2}\colon$
$$
    \|\hat{\boldsymbol{\alpha}}_{\mathrm{d}} - \tilde{\boldsymbol{\alpha}}_{\mathrm{d}}\|_{\mathcal{S}} = \mathrm{O}_{\Prob}\big(N^{-1/2+\epsilon}\big), \mbox{ for any }\epsilon>0.
$$
\end{example}
 
\begin{remark} {\rm \noindent A weak convergence result for $\hat{\boldsymbol{\alpha}}_{\mathrm{d}}$ to a non-trivial limit is not available for the full operators in the fAR model underlying our ARCH framework \citep[][Theorem 3.2]{Mas2007}. Under technical conditions, Theorem 3.1 of the same work gives asymptotic normality for prediction errors at fixed points. In functional linear regression, which is in the context of our parameter estimation closely related,  \cite{KuttaDierickxDette2022} obtain a pivotal test statistic for the slope operator under smoothness assumptions. While similar ideas might extend to our setting, we focus on weak consistency.}
\end{remark}

\subsection{Estimation of the Intercept term}\label{Sec:estimation Delta}

From $\mathscr{C}_{\!\boldsymbol{X}} = \mu_{\boldsymbol{\Sigma}}\mathscr{C}_{\boldsymbol{\varepsilon}}$ and Eq.~\eqref{eq:definition op-ARCH}, it follows 
\begin{align}\label{eq:Identity for Delta in ARCH}
    \Delta \mathscr{C}_{\boldsymbol{\varepsilon}} = \mathscr{C}_{\!\boldsymbol{X}} - \big(\boldsymbol{\alpha}(m_p)\big)\mathscr{C}_{\boldsymbol{\varepsilon}},
\end{align}
where $m_p \coloneqq \Exp(X^{\otimes2,[p]}_0)$. Accordingly, we estimate $\Delta$ by
\begin{align}\label{eq:Esimator Delta ARCH}
    \hat{\Delta} \coloneqq \Big[\hat{\mathscr{C}}_{\!\boldsymbol{X}} - \big(\boldsymbol{\hat{\alpha}}(\hat{m}_p)\big)\mathscr{C}_{\boldsymbol{\varepsilon}}\Big]\mathscr{C}^\dagger_{\!\boldsymbol{\varepsilon}}\!\coprod^{e_{K}}_{e_1},
\end{align}
with $\hat{\mathscr{C}}_{\!\boldsymbol{X}} = \hat{m}_1$ and $\hat{m}_p$ defined in \eqref{eq:Def moment estimates}.  

We next state a consistency result for $\Delta.$ Since $\mathscr{C}_{\boldsymbol{\varepsilon}}$ and $\Sigma_k$ commute for each $k,$ it holds 
\begin{align}\label{delta-d-eq}
    \Delta = \sum_{i=1}^\infty d_i (e_i \otimes e_i),    
\end{align}
for some non-negative, square-summable sequence $(d_i)$. Consistency of the estimation errors for $\Delta$ is also derived based on a Sobolev condition.

\begin{proposition}\label{prop:asymptotic result Delta ARCH, inf} 
Let the conditions of Theorem \ref{theo:Cons est alphap infinite diag} hold. Further, for some $\delta>0,$ assume that the coefficients in \eqref{delta-d-eq} satisfy 
\begin{align}\label{eq:Sobolev cond Delta} 
    \sum^\infty_{i=1}\,d^2_i(1 + i^{2\delta}) < \infty.    
\end{align}
Then, it holds that
$$
    \|\hat{\Delta} - \Delta\|_\mathcal{S} = \mathrm{O}_{\Prob} \big(a^{-1}_KN^{-1/2} \big) + \mathrm{O}_{\Prob}(K^{-\delta}) + \mathrm{O}_{\Prob}\big(\|\boldsymbol{\hat{\alpha}} - \boldsymbol{\alpha}\|_\mathcal{S}\big)\,.
$$ 
\end{proposition}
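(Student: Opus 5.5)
We decompose $\hat\Delta-\Delta$ using the identity \eqref{eq:Identity for Delta in ARCH}. Write $P_K\coloneqq\coprod^{e_K}_{e_1}$. Since $\Delta$, $\mathscr{C}_{\boldsymbol{\varepsilon}}$ and $\mathscr{C}^\dagger_{\!\boldsymbol{\varepsilon}}$ are all diagonal in $(e_j)$, we obtain
\begin{align*}
    \hat\Delta-\Delta
    &= \big[\hat{\mathscr{C}}_{\!\boldsymbol{X}}-\mathscr{C}_{\!\boldsymbol{X}}\big]\mathscr{C}^\dagger_{\!\boldsymbol{\varepsilon}}P_K
    -\Big[\big(\hat{\boldsymbol{\alpha}}(\hat m_p)-\boldsymbol{\alpha}(m_p)\big)\mathscr{C}_{\boldsymbol{\varepsilon}}\Big]\mathscr{C}^\dagger_{\!\boldsymbol{\varepsilon}}P_K
    +\big(\Delta\mathscr{C}^\ddagger_{\!\boldsymbol{\varepsilon}}P_K-\Delta\big).
\end{align*}
Each of the three terms is bounded separately in $\|\cdot\|_{\mathcal{S}}$, using $\|AB\|_{\mathcal{S}}\le\|A\|_{\mathcal{S}}\|B\|_{\mathcal{L}}$.

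\emph{First term.} On the range of $P_K$, the operator $\mathscr{C}^\dagger_{\!\boldsymbol{\varepsilon}}$ has eigenvalues $(a_j+\vartheta_N)^{-1}\le a_K^{-1}$, so $\|\mathscr{C}^\dagger_{\!\boldsymbol{\varepsilon}}P_K\|_{\mathcal{L}}\le a_K^{-1}$. Proposition \ref{prop:Explicit moments} with $\nu=4$ shows that $X_k^{\otimes2}$ is $L^2$-$m$-approximable in $\mathcal{S}$. The standard mean estimation result for such processes (the paper's lemma on consistency of (cross-)covariance estimates, Lemma \ref{Lemma: Cons lagged cross-cov non-centered}) then gives $\|\hat m_p-m_p\|_{\mathcal{S}^p}=\mathrm O_{\Prob}(N^{-1/2})$, and in particular $\|\hat{\mathscr{C}}_{\!\boldsymbol{X}}-\mathscr{C}_{\!\boldsymbol{X}}\|_{\mathcal{S}}=\mathrm O_{\Prob}(N^{-1/2})$. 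Hence the first term is $\mathrm O_{\Prob}(a_K^{-1}N^{-1/2})$.

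\emph{Second term.} Here $\|\mathscr{C}_{\boldsymbol{\varepsilon}}\mathscr{C}^\dagger_{\!\boldsymbol{\varepsilon}}P_K\|_{\mathcal{L}}\le 1$, so no $a_K^{-1}$ factor enters. Split
\[
    \hat{\boldsymbol{\alpha}}(\hat m_p)-\boldsymbol{\alpha}(m_p)=(\hat{\boldsymbol{\alpha}}-\boldsymbol{\alpha})(\hat m_p)+\boldsymbol{\alpha}(\hat m_p-m_p).
\]
The first piece is at most $\|\hat{\boldsymbol{\alpha}}-\boldsymbol{\alpha}\|_{\mathcal{S}}\,\|\hat m_p\|_{\mathcal{S}^p}$. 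Since $\|\hat m_p\|$ converges in probability to $\|m_p\|<\infty$, this is $\mathrm O_{\Prob}(\|\hat{\boldsymbol{\alpha}}-\boldsymbol{\alpha}\|_{\mathcal{S}})$. The second piece is $\mathrm O_{\Prob}(N^{-1/2})$ because $\boldsymbol{\alpha}$ is bounded, and $N^{-1/2}$ is absorbed into $a_K^{-1}N^{-1/2}$ since $a_K\le a_1$. One caveat: the plug-in $\hat{\boldsymbol{\alpha}}$ must be interpreted as the operator built from $\hat{\boldsymbol{\alpha}}_{\mathrm d}$ via \eqref{eq:diagARCH-2}. Its H-S distance to $\boldsymbol{\alpha}$ is controlled by $\|\hat{\boldsymbol{\alpha}}_{\mathrm d}-\tilde{\boldsymbol{\alpha}}_{\mathrm d}\|$, which Theorem \ref{theo:Cons est alphap infinite diag} bounds, but the statement simply keeps this term explicit.

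\emph{Third term (bias), expected to be the main obstacle.} In the basis $(e_j)$,
\[
    \Delta\mathscr{C}^\ddagger_{\!\boldsymbol{\varepsilon}}P_K-\Delta=\sum_{j\le K}d_j\Big(\tfrac{a_j}{a_j+\vartheta_N}-1\Big)e_j\otimes e_j-\sum_{j>K}d_j\,e_j\otimes e_j .
\]
For the tail part, \eqref{eq:Sobolev cond Delta} gives $\sum_{j>K}d_j^2\le K^{-2\delta}\sum_j d_j^2 j^{2\delta}$, so it contributes $\mathrm O(K^{-\delta})$. For the regularization part, each coefficient is bounded by $d_j\vartheta_N/a_j\le d_j\vartheta_N/a_K$. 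The condition $\vartheta_N=\mathrm O(a_KK^{-\gamma})$ from Theorem \ref{theo:Cons est alphap infinite diag} then makes this part $\mathrm O(K^{-\gamma})\|\Delta\|_{\mathcal{S}}$.

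The delicate point is this regularization part: it is naturally of order $K^{-\gamma}$, not $K^{-\delta}$. I would absorb it into the $\mathrm O_{\Prob}(\|\hat{\boldsymbol{\alpha}}-\boldsymbol{\alpha}\|_{\mathcal{S}})$ term, which under the theorem's conditions is of the same order $K^{-\gamma}$. Failing that, I would refine the bound to $\vartheta_N\big(\sum_{j\le K}d_j^2a_j^{-2}\big)^{1/2}$ and use the $\vartheta_N$ condition more carefully. Combining the three bounds yields the claim.
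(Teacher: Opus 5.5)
Your decomposition, and your treatment of the estimation terms and of the tail $\sum_{j>K}d_j^2$, coincide with the paper's proof. The paper bounds $\|\hat{\Delta}-\Delta_K\mathscr{C}^\ddagger_{\!\boldsymbol{\varepsilon}}\coprod^{e_K}_{e_1}\|_{\mathcal S}$, $\|\Delta_K(\mathscr{C}^\ddagger_{\!\boldsymbol{\varepsilon}}\coprod^{e_K}_{e_1}-\mathbb{I})\|_{\mathcal S}$ and $\|\Delta-\Delta_K\|_{\mathcal S}$ in exactly this way; it takes the mean rate from Lemma~\ref{lemma:Estimation 2nd moments}, which is immaterial.

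The gap is in the regularization part. You correctly bound it by $\vartheta_N a_K^{-1}\|\Delta_K\|_{\mathcal S}$. Under the conditions of Theorem~\ref{theo:Cons est alphap infinite diag} this gives only $\mathrm{O}(K^{-\gamma})$. You cannot absorb this deterministic term into $\mathrm{O}_{\Prob}(\|\hat{\boldsymbol{\alpha}}-\boldsymbol{\alpha}\|_{\mathcal S})$. That notation is relative to the realized error, so absorption would need a \emph{lower} bound $\|\hat{\boldsymbol{\alpha}}-\boldsymbol{\alpha}\|_{\mathcal S}\gtrsim K^{-\gamma}$ with probability tending to one. Theorem~\ref{theo:Cons est alphap infinite diag} supplies only the upper bound $\mathrm{O}_{\Prob}(K^{-\gamma})$; for instance, the truncation part $\sum_i\sum_{\ell>K}a_{i\ell\ell}^2$ may decay much faster than $K^{-2\gamma}$, or vanish. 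Your fallback $\vartheta_N(\sum_{j\le K}d_j^2a_j^{-2})^{1/2}$ does not close the step either. The hypotheses control $\vartheta_N$ only through $\gamma$, $a_K$ and $\lambda_{K,\mathrm d}$, and you give no argument relating these to $\delta$.

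The paper finishes this step by using $\vartheta_N=\mathrm{O}(a_KK^{-\delta})$, which yields $\vartheta_N a_K^{-1}\|\Delta_K\|_{\mathcal S}\le\mathrm{O}(K^{-\delta})\|\Delta\|_{\mathcal S}$ immediately. This rate is not among the hypotheses as stated: Theorem~\ref{theo:Cons est alphap infinite diag} only imposes $\vartheta_N=\mathrm{O}(\min(a_K,\lambda_{K,\mathrm d})K^{-\gamma})$. So you were right that an extra ingredient is needed. The correct repair is either to assume $\vartheta_N=\mathrm{O}(a_KK^{-\delta})$ explicitly, as the paper effectively does, or to carry a separate $\mathrm{O}(\vartheta_N a_K^{-1})$ term in the conclusion. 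It should not be hidden inside the $\hat{\boldsymbol{\alpha}}$ term.
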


\section{Simulation Study}\label{Section Simulation}

In this section, we present the results of simulation experiments that aim to illustrate the CCC-op-ARCH$(p)$ process, and evaluate the estimation procedures detailed in Section \ref{Sec:estimation}. In each of the examples below, we view $(X_k)=\{X_i(t), \; k \in \mathbb{Z}, \; t\in [0,1]\}$ as real-valued stochastic processes taking values in the Hilbert space $\mathcal{H} =L^2[0,1]$. All analysis was done on a personal laptop in the {\tt R} programming language; \cite{r}. Code that may be used to reproduce the the numerical work below is available at \url{github.com/jrvanderdoes/fungarch/}. 

\subsection{Implementation Details}
 
The proposed estimators require the user to specify $K$, the dimension reduction parameter, as well as the Tikhonov parameter $\vartheta_N$. Throughout, we chose $K$ according to a modified total-variation-explained (TVE) criterion. Namely, with $(e_j)$ again denoting the eigenfunctions of $\mathscr{C}_\varepsilon$, and  $X_{i,k} = \sum_{j=1}^k \langle X_i, e_j \rangle e_j$, we then choose 
$$
    K = \min\left\{ k \; : \;   \frac{\sum_{i=1}^N  \| X_i - X_{i,k} \|^2}{\sum_{i=1}^N \|X_i\|^2 } \le 1-\mbox{TVE} \right\}.
$$
We set TVE to $0.9$ below, unless otherwise specified. In the subsequent application to high-frequency asset price data, a 90\% TVE typically resulted in a $K$ in the range of 10--15. 

In order to choose $\vartheta_N$, we employ 1-step ahead cross-validation. The data of length $N$ are split into training and testing sets of size $N_{train}$ and $N_{test}$. Below we use a respective $80\%$/$20\%$ split. Since the subsequent data analysis aims to forecast pointwise conditional quantiles, we chose $\vartheta_N$ in order to minimize an integrated check-loss function measuring how well $\hat{\Sigma}_i$ may be used to predict the quantiles of $X_i$. In particular, let the $\alpha$ level check-loss function $\rho_\alpha: \mathbb{R} \to [0,\infty)$ be denoted as
$$
    \rho_\alpha(u)=u \times\big(\alpha-\mathds{1}_{\{u<0\}}\big).
$$
Note that if a CCC-op-ARCH model has Gaussian innovations, then the pointwise conditional $\alpha$ quantile of $X_i(t)$ is 
$$
    \sqrt{{\Sigma}^{1/2}_{j}\mathscr{C}_{\boldsymbol{\varepsilon}}{\Sigma}^{1/2}_{j}(t,t)} \times \Phi^{-1}\left(\frac{\alpha}{C_\varepsilon^{1/2}(t,t)} \right).
$$
For a given fitted CCC-op-ARCH model producing forecasts of the conditional covariance operator $\hat{\Sigma}_j$, viewed as a function of the parameter $\vartheta_N$ and computed with an expanding window,  we then chose  $\vartheta_N$ to minimize 
$$
    \mbox{CV}_{Err}(\vartheta_N) = \frac{1}{N_{test}} \sum_{j \in \mbox{ test set }} \int_0^1 \rho_\alpha\left(\sqrt{   \hat{\Sigma}^{1/2}_{j}\mathscr{C}_{\boldsymbol{\varepsilon}}\hat{\Sigma}^{1/2}_{j}(t,t)} \times \Phi^{-1}\left(\frac{\alpha}{C_\varepsilon^{1/2}(t,t)} \right) - X_i(t) \right)\mathrm{d}t.
$$
Optimizing for the Tikhonov parameter $\vartheta_N$ is somewhat computationally intensive for large $p$, $K$, and $N$. An alternative approach that is computationally efficient, although does not necessarily lead to a consistent estimator, is to use Moore--Penrose pseudoinverses \citep{moore1920, penrose1955}. In this case we modify \eqref{eq:estimator for all coefficients in diag, infinite} by replacing  $\mathscr{C}^\dagger_{\!\boldsymbol{\varepsilon}}\!\coprod^{e_{K}}_{e_1}$ and $\hat{\mathscr{C}}^\dagger_{\mathrm{d}}$ in the definition of $\hat{\boldsymbol{\alpha}}_{\mathrm{d}}$ with 
\begin{align}
    \mathscr{C}^\ddagger_{\!\boldsymbol{\varepsilon}}= \sum_{i=1}^{K}\,a^{-1}_i(e_i \otimes  e_i)\,, \quad \mbox{ and } \quad \mathscr{C}^\ddagger_{\mathrm{d}}   = \sum_{i=1}^K\,\hat\xi^{-1}_i(\hat\varphi_i \otimes \hat\varphi_i),
\end{align}
where $(\hat\xi_i,\hat\varphi_i)$ are eigenvalues and eigenfunctions of $\hat{\mathscr{C}}$. We also compared to this estimator in our simulation experiments. 

\subsection{Data Generation}
 
\begin{figure}
    \centering
    \subfloat[Example I]{
        \includegraphics[width=0.4\textwidth,trim={3cm 2cm 4cm 3.5cm},clip]{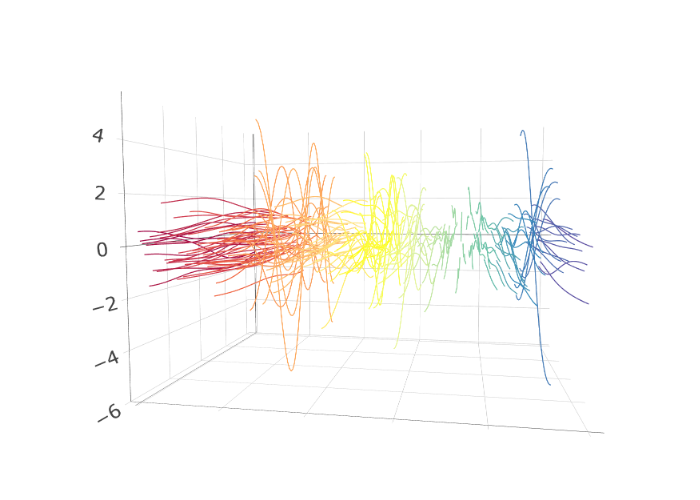}
    } \hfill
    \subfloat[Example II]{
        \includegraphics[width=0.4\textwidth,trim={3cm 2cm 4cm 3cm},clip]{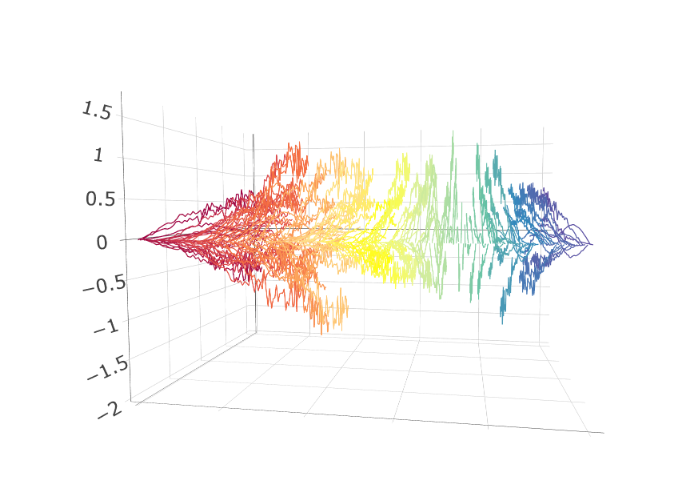}
    }
    \caption{\textbf{CCC-op-ARCH Examples}. (a) Uses $C_\epsilon$ based on Ornstein–Uhlenbeck errors. (b) Uses $C_\epsilon$ based on Brownian motion errors.}
    \label{fig:examples}
\end{figure}

We simulated CCC-op-ARCH$(p)$ data $(X_k)=\{X_k(t), \; k \in \mathbb{Z}, \;t\in [0,1]\}$ as stochastic processes taking values in the space $\mathcal{H} =L^2[0,1]$. We took the error covariance operator $\mathscr{C}_{\boldsymbol{\varepsilon}}$ to be a kernel-integral operator with kernel $C_\varepsilon$ corresponding to either an Ornstein-Uhlenbeck (OU) process
\begin{align}\label{ou-err}
    C_{\varepsilon}(t,s) = e^{-|t-s|/2},
\end{align}
or a standard Brownian motion (BM)
\begin{align}\label{bm-err}
    C_{\varepsilon}(t,s) = \min(t,s)\,.
\end{align}
After generating errors with the specified covariance structure, each functional data object was simulated so that
\begin{align*}
    X_i=\Sigma_i^{1/2}(\varepsilon_i)\,,  \quad   \Sigma_i=\Delta+\sum_{j=1}^p\alpha_j(X_{i-j}\otimes X_{i-j}),
\end{align*}
for $i=-b, \dots, N$, with $b=100$ denoting a burn-in period that is discarded. The op-ARCH operators were constructed as
\begin{align} \label{eq:alpha_sim}
    \alpha_j = \sum_{k=1}^d a_{k,j} (e_k\otimes e_k) \otimes ( e_k\otimes e_k),
\end{align}
with ${\bf a}_j=(a_{1,j}, \dots, a_{d,j})$ denoting a vector of scale parameters. We further set $\Delta=\mathscr{C}_\varepsilon\,$. In the simulations below, each functional data object is simulated on a grid of $r=50$ equally spaced points on the unit interval $[0,1]$. We verified in unreported simulations that increasing the value of $r$ had a negligible impact on the reported results, although taking a small value of $r$  $(r<10)$ did negatively impact the results on model estimation error.  

Spaghetti-rainbow plots illustrating the sample paths of CCC-op-ARCH$(1)$ processes of length $N=200$ are shown in Figure \ref{fig:examples}. With $\lambda_i$, $i\in\{1,2,..\}$ denoting the ordered eigenvalues of $\mathscr{C}_\varepsilon$, Figure \ref{fig:examples}(a) uses ${\bf a}_1=(0, 1.6/\lambda_2, 1.6/\lambda_3, 1,6/\lambda_4)$ and OU errors, and \ref{fig:examples}(b) uses ${\bf a}_1=(0, 1.1/\lambda_2, 1.1/\lambda_3,1.1/\lambda_4)$ with BM errors. These sample paths share some similarity with the real data studied in the following Section \ref{Section Applications}, including periods of volatility/heteroscedasticity.  

\subsection{Consistency Results}

\begin{figure}
    \centering
    \includegraphics[width=0.45\textwidth,clip]{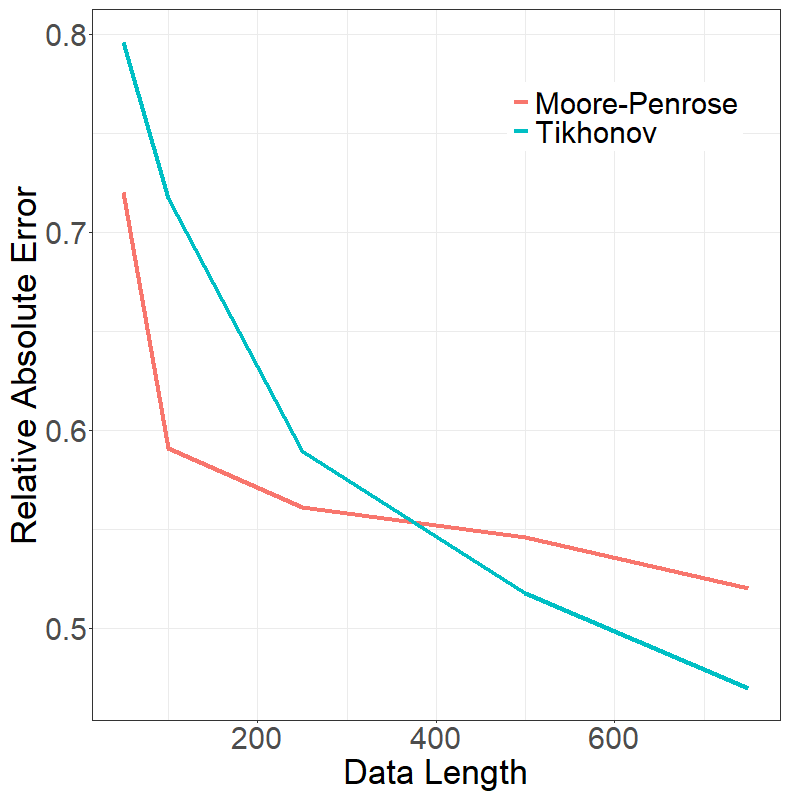} \hfill
    \includegraphics[width=0.45\textwidth,clip]{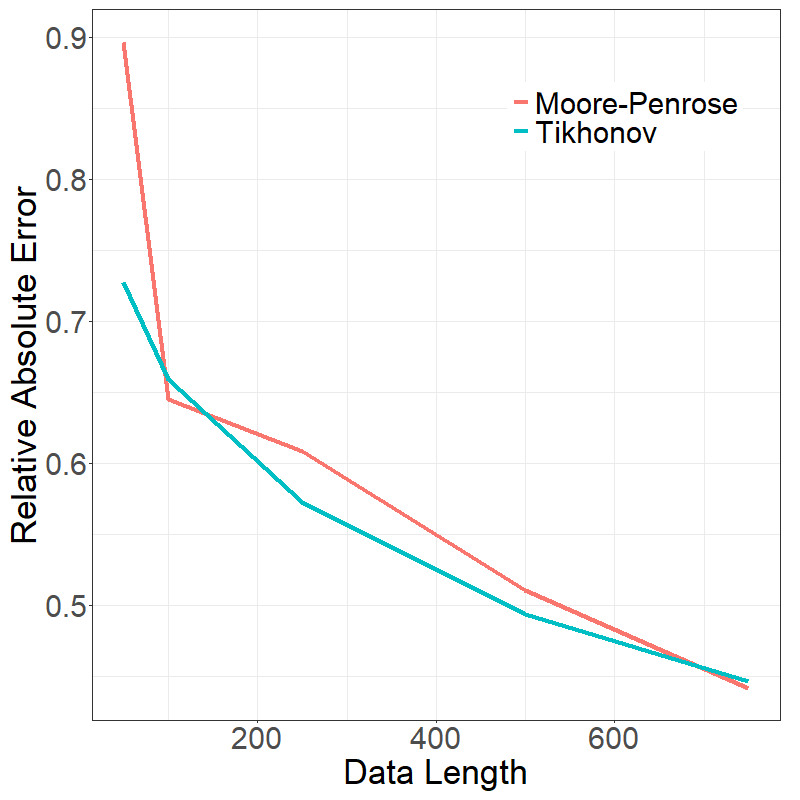}
    \caption{ Relative absolute error $e_{N,\alpha}$  for Tikhonov versus Moore Penrose-based estimators for simulated CCC-op-ARCH$(1)$ data. The left-hand panel considers a low-dimensional setting, and the right-hand panel shows a high-dimensional setting.  }  
    \label{fig:MPT}
\end{figure}

We first present results on the consistency properties of the estimators proposed in Section \ref{Sec:estimation}. For each setting, we generated samples with sample sizes $N$ independently $500$ times, with $N\in \{50, 100, 250, 500, 750\}$, using OU errors \eqref{ou-err}. We considered the $\alpha_j$ as in \eqref{eq:alpha_sim}, with either ${\bf a}_j=(0.7, 0.7)$, which we term ``low-dimensional'', and ${\bf a}_j =(1,1/2^2,...,1/20^2)$, which we term ``high-dimensional''.  

\begin{figure}
    \centering
    \subfloat[Relative estimation error $e_{N,\alpha}$ as a function of $N$ with simulated CCC-op-ARCH$(1)$ data.]{
        \includegraphics[width=0.4\textwidth,clip]{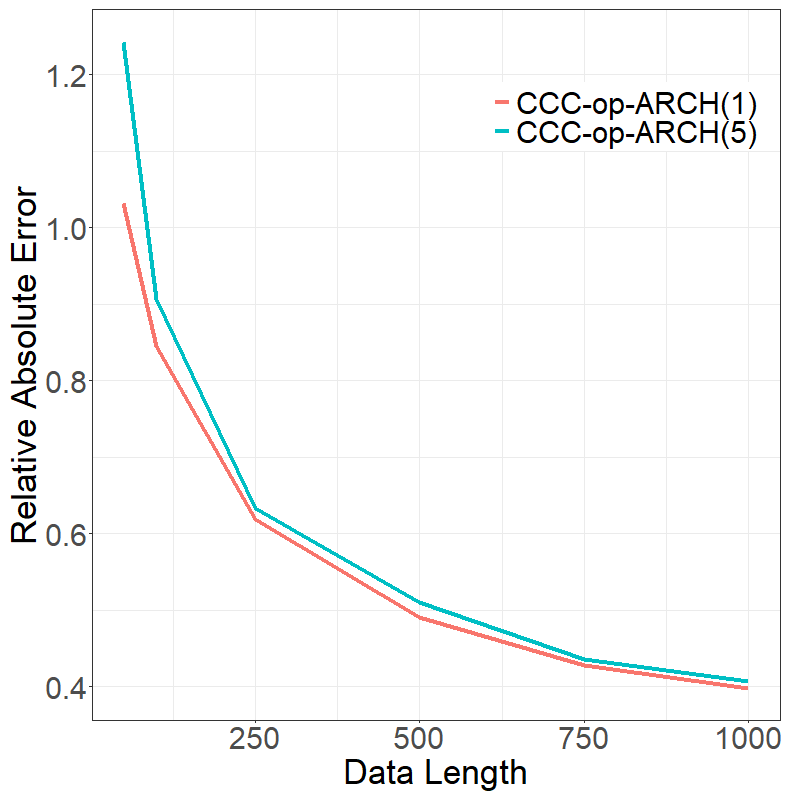}
    } \hfill
    \subfloat[Relative estimation error $e_{N,\alpha}$ as a function of $N$ with simulated CCC-op-ARCH$(5)$ data.]{
        \includegraphics[width=0.4\textwidth,trim={0cm 0cm 0cm 0cm},clip]{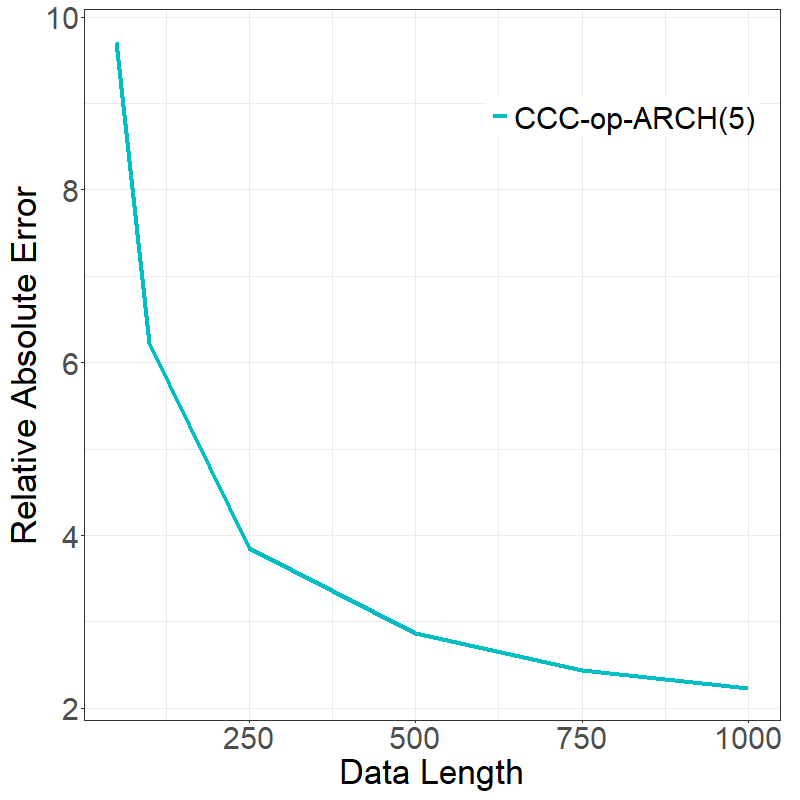}
    } \\
    \subfloat[Relative estimation error $e_{N,\Delta}$ as a function of $N$ with simulated CCC-op-ARCH$(1)$ data.]{
        \includegraphics[width=0.4\textwidth,clip]{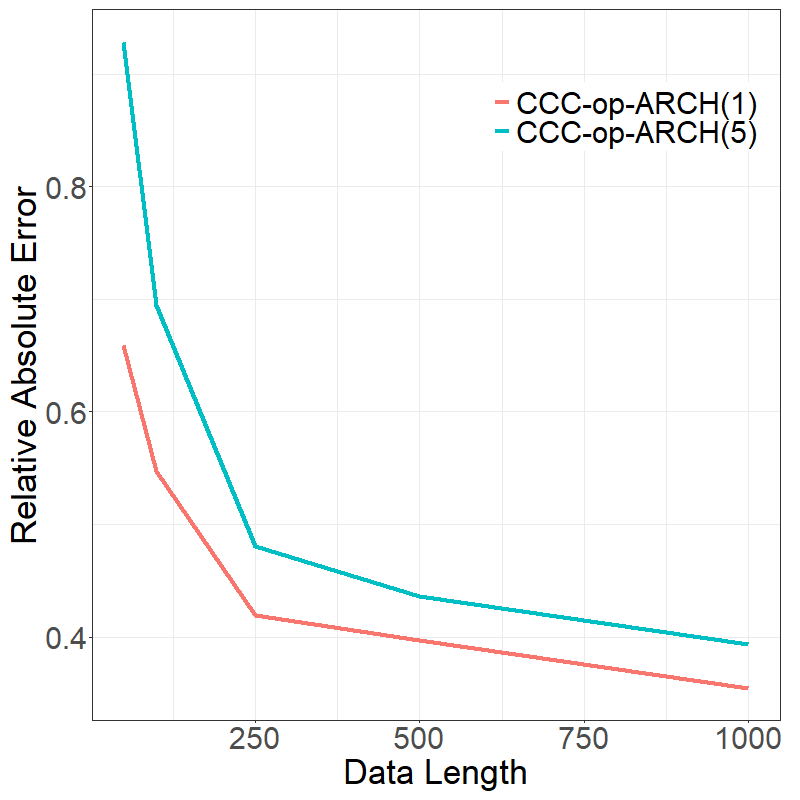}
    }    
    \caption{\textbf{Estimation Consistency}. Relative absolute error for estimation of $\Delta$ and the $\alpha_j$'s of CCC-op-ARCH$(1)$ and CCC-op-ARCH$(5)$ models in the low-dimensional setting.}
    \label{fig:consistency}
\end{figure}
Figure \ref{fig:MPT} and \ref{fig:consistency} show plots of the relative absolute error between $\Delta$ and $\hat{\Delta}$
\begin{align}
    e_{N,\Delta} = \frac{1}{500}\sum_{s=1}^{500} \frac{ \|\Delta-\hat{\Delta}_{s,N}\|}{\|\Delta \|}\,.
\end{align}
as well as for $\alpha$ and $\hat{\alpha}$,
\begin{align} 
    e_{N,\alpha} = \frac{1}{500p}\sum_{s=1}^{500} \sum_{i=1}^p \frac{\|\alpha_i-\hat{\alpha}_{i,s,N}\|}{ \|\alpha_i \|}\,,
\end{align}
for increasing values of $N$.  

Figure \ref{fig:MPT} illustrates the difference in estimation error from using either the Tikhonov or Moore--Penrose inverses in defining the estimator. The plots appear to confirm the asymptotic consistency of the proposed estimators in these settings. We observed that in the low-dimensional setting, the Moore--Penrose estimator tended to perform somewhat worse than the Tikhonov based estimator, whereas in the high-dimensional case the performance between the two methods was more comparable.  We note that the cross-validation criteria for determining $\vartheta_{\!N}$ does not intend to optimize the normed estimation error for the $\alpha_j$, but rather attempts to improve quantile forecasting.

Figure \ref{fig:consistency} shows plots of $e_{N,\Delta}$ and $e_{N,\alpha}$ in the low-dimensional setting for CCC-op-ARCH$(1)$ and CCC-op-ARCH$(5)$ data.  We observed decreasing estimation error as a function of $N$ in each setting, including when fitting a CCC-op-ARCH$(5)$ models to data generated according to a CCC-op-ARCH$(1)$ model (see Figure \ref{fig:consistency}(a)). 

\section{Application to Intra-Day Return Data}\label{Section Applications}

ARCH models are most commonly applied to model financial return data. We considered intra-day price data of the Exchange-Traded-fund SPY that tracks the S\&P500 index. The returns were created by converting the stock prices into overnight cumulative intraday returns.

\begin{definition}
    Let $P_j(t)$, $j=0,\dots,N$, be the price of a financial asset at time $t$ on day $j$. The overnight cumulative intraday returns (OCIDRs) are defined as
    \begin{equation*}
        R_j(t) = 100\left(\log P_j(t) - \log P_{j-1}(1) \right), \quad j=1,\dots, N, \quad t\in [0,1]\,.
    \end{equation*}
\end{definition}

The specific data we considered were obtained over two 3 year periods, 2018--2020 and 2022--2024,  at a resolution of one observation every $10$ minutes ($r=39$). The full sample had a size of $N=735$ (2018--2020) and $N=717$ (2022--2024). These data and the resulting OCIDR curves in 2018--2020 are illustrated in Figure \ref{fig:sp500}. 

We took as the goal of this analysis to compare the proposed model along with several other methods to forecast conditional quantiles (Value-at-Risk) of the curves $R_j$, and to evaluate the goodness-of-fit of the CCC-op-ARCH model to the data. After fitting a CCC-op-ARCH$(p)$ model, the lower $\alpha$ quantile curve forecast of $R_j(t)$ is   
\begin{align*} 
    \hat{V}_{j,\alpha}(t)= \sqrt{   \hat{\Sigma}^{1/2}_{j}\mathscr{C}_{\boldsymbol{\varepsilon}}\hat{\Sigma}^{1/2}_{j}(t,t)} \times \Phi^{-1}\left(\frac{\alpha}{C_\varepsilon^{1/2}(t,t)} \right), \quad t\in[0,1],
\end{align*}
where $\Phi^{-1}$ is the quantile function of the standard normal distribution. In addition to comparing to the pointwise ``historical'' quantile computed from all the previous observations, we also fit a pw-fARCH$(1)$ model as in \eqref{eq:pw-fARCH def}, using the estimation method of \cite{Ceroveckietal2019},  and forecast the $\alpha$ quantile curve as $\hat{V}_{j,\alpha}(t)=\hat{\sigma}_i(t)\Phi^{-1}(\alpha).$ 

Plots of the curves $\hat{V}_{j,0.05}(\cdot)$ for several models along with the observed OCIDR curves for a particularly volatile period in the S\&P 500 index including the COVID-19 Lockdown period in March, 2020, are shown in the top panel of Figure \ref{fig:app_envelope}. We observed that each model exhibited a certain degree of heteroscedasticity, although this was the most pronounced for the CCC-op-ARCH$(5)$ model. The bottom panels of Figure \ref{fig:app_envelope} show forecasted 95\% pointwise confidence sets for the OCIDR curves based on the CCC-op-ARCH$(5)$ model.

\begin{figure}
    \centering
    \subfloat[Data And Forecast]{
        \includegraphics[width=\textwidth,trim={0cm, 0cm, 0cm 0cm},clip]{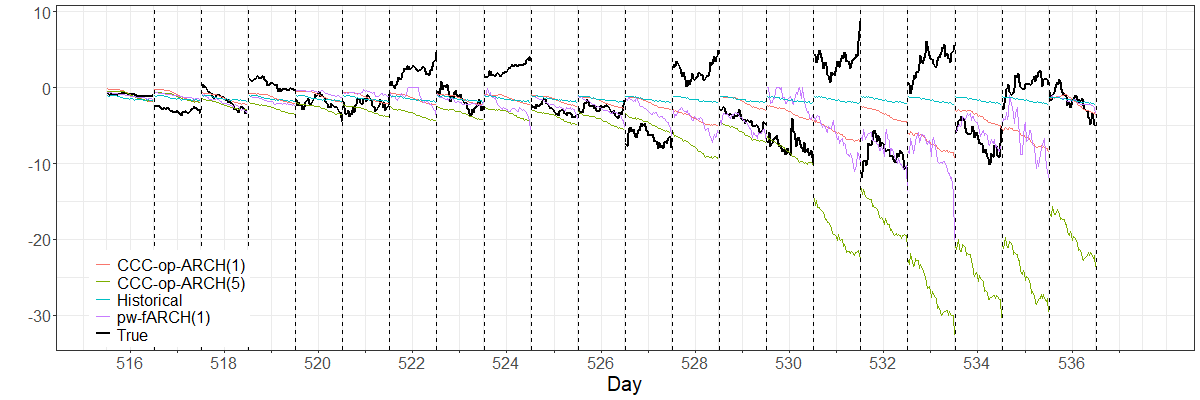}
    } \\
    \subfloat[OCIDR curves and Forecasts]{
        \includegraphics[width=0.45\textwidth,trim={6cm 4.5cm 6.5cm 5.5cm},clip]{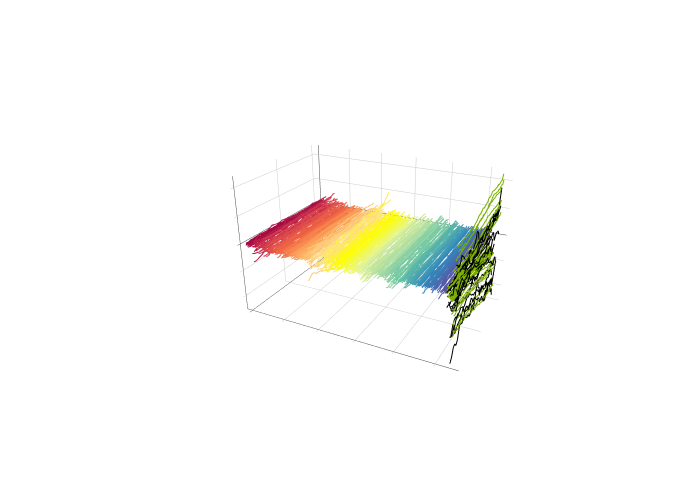}
    } \hfill
    \subfloat[Zoomed Forecast]{
        \includegraphics[width=0.45\textwidth,trim={6cm 4.5cm 6.5cm 5.5cm},clip]{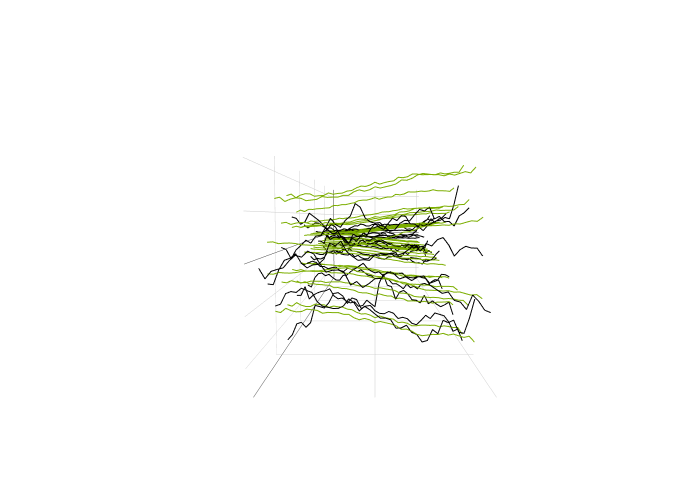}
    }
    \caption{Plots of the curves $\hat{V}_{j,0.05}(\cdot)$ for the CCC-op-ARCH$(p)$ models, $p \in \{1,5\}$, as well as the historical and pw-fARCH$(1)$ model along side the observed OCIDR curves for a particularly volatile period in the S\&P 500 index including the COVID-19 Lockdown period in March, 2020.}
    \label{fig:app_envelope}
\end{figure}

In order to assess the accuracy of these quantile curve forecasts, we split each data set into a training set based on the first two years ($N_{train}=482$, 2018--2020 and $N_{train}=465$, 2022--2024) and then forecast a third year of data ($N_{test}=253$, 2018--2020 and $N_{test}=252$, 2022--2024), which we called the test set. Each quantile function was forecasted one step ahead using an expanding window, and the average (integrated) violation rate of each model was computed as
\begin{align}\label{vr-def}
    VR_\alpha = \frac{1}{N_{test}} \sum_{j \in \text{ test set }} \int_0^1 \mathds{1}_{\{R_j(t) < \hat{V}_{j,\alpha}(t)\}}\,\mathrm{d}t. 
\end{align}
Table \ref{tab:applications} provides the observed average violation rates for each model for the nominal levels $\alpha=0.01$ and $\alpha=0.05$. We noticed that the CCC-op-ARCH$(1)$, CCC-op-ARCH$(5)$, and pw-fARCH$(1)$ models exhibited reasonably accurate coverage probabilities for the $\alpha=0.05$ level, withstanding the highly volatile S\&P500 (2018--2020) sample, especially relative to the historical quantile forecast. The CCC-op-ARCH$(5)$ and pw-fARCH$(1)$ model performed well at the $\alpha=0.01$ level, with CCC-op-ARCH$(5)$ performing the best overall.  
\begin{table}
    \centering
    \begin{tabular}{l cccc}
        \toprule
            & CCC-op-ARCH$(1)$ & CCC-op-ARCH$(5)$ & pw-fARCH$(1)$ & Historical \\\midrule
          \midrule
        \multicolumn{4}{@{}l}{\textsc{Nominal $ \alpha =0.05$}}\\
            S\&P500 (2018--2020) & 0.120 & 0.026 & 0.072 & 0.103 \\
            S\&P500 (2022--2024) & 0.049 & 0.031 & 0.033 & 0.017 \\
          \midrule
        \multicolumn{4}{@{}l}{\textsc{Nominal $ \alpha =0.01$}}\\
            S\&P500 (2018--2020) & 0.088 & 0.012 & 0.039 & 0.038\\
            S\&P500 (2022--2024) & 0.027 & 0.010 & 0.007 & 0.003 \\
          \bottomrule                          
    \end{tabular}
    \caption{One-step ahead average violation rates $VR$ computed as in equation \eqref{vr-def} for OCIDRs based on fitting two years (expanding) of data, and forecasting one-day ahead for a third year.}
    \label{tab:applications}
\end{table}

\begin{figure}
    \centering
    \subfloat[2022-2024]{
        \includegraphics[width=0.45\textwidth, trim={0cm 0cm 0cm 0}, clip]{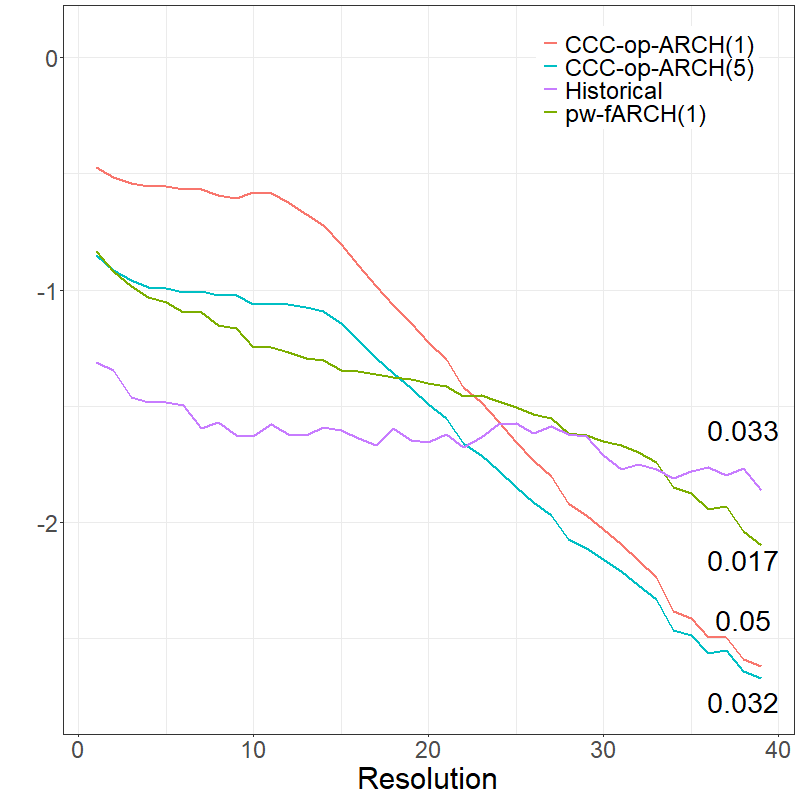}
    } \hfill
    \subfloat[2018-2020]{
        \includegraphics[width=0.45\textwidth,trim={0cm 0cm 0cm 0}, clip]{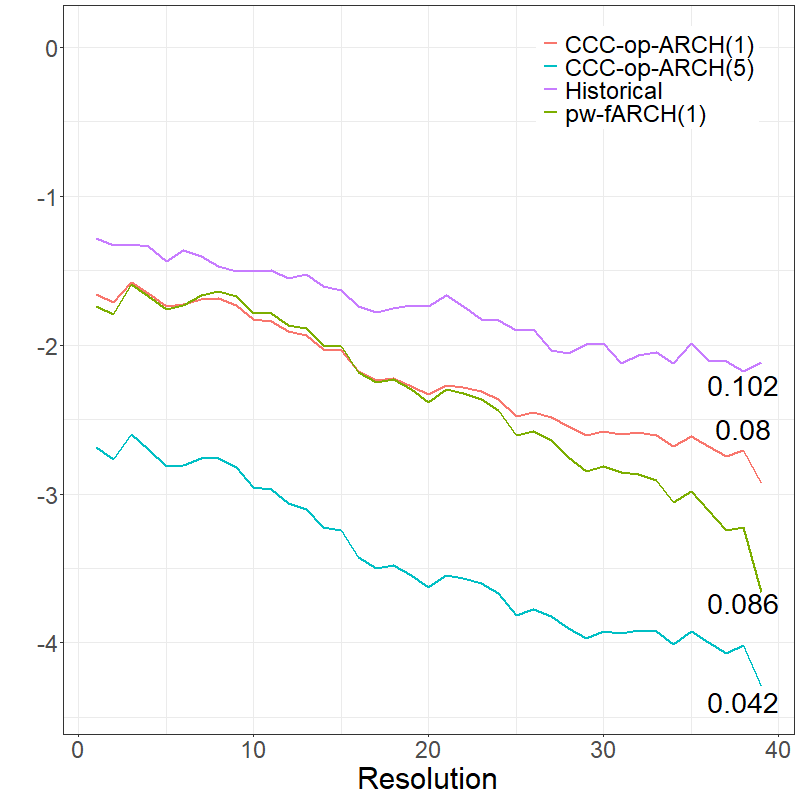}
    }
    \caption{Comparisons of the average conditional quantile forecast curves $\bar{V}_{0.05}$ computed as in \eqref{barvdef} for each model. The left-hand panel is estimated from the 2022--2024 sample and the right-hand is estimated from the 2018--2020 sample. }
    \label{fig:tightness}
\end{figure}
To further measure the fidelity of these forecasts to the data and contrast the forecasts of the models, we also computed the average quantile curve
\begin{align}\label{barvdef} 
    \bar{V}_\alpha(t) = \frac{1}{N_{test}} \sum_{j \in \text{ test set}}\hat{V}_{j,\alpha}(t)
\end{align}
for each model. These curves are shown with $\alpha=0.05$ in Figure \ref{fig:tightness} for each of the S\&P 500 samples. We observed that for 2022--2024 sample, the CCC-op-ARCH models tended to estimate a larger contrast between the variance of the curves at the beginning and end of the day, especially when compared to the pw-fARCH$(1)$ model. In the 2018--2020 sample, the average curves $\bar{V}_\alpha(t)$  were somewhat flatter for each model, although in this case only the CCC-op-ARCH$(5)$ produced forecasts with approximately nominal coverage. 

\begin{figure}
    \centering
    \subfloat[CCC-op-ARCH$(1)$ Residuals]{
        \includegraphics[width=0.48\textwidth,trim={4cm, 3cm, 2.5cm 6cm},clip]{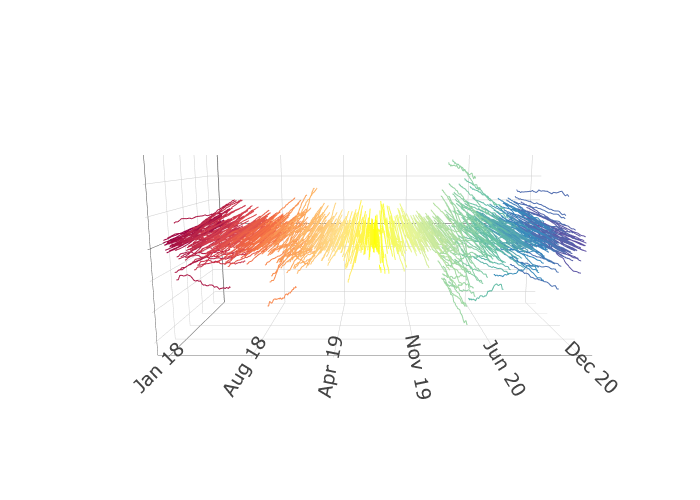}
    } \hfill
    \subfloat[CCC-op-ARCH$(5)$ Residuals]{
        \includegraphics[width=0.48\textwidth,trim={4cm, 3cm, 2.5cm 6cm},clip]{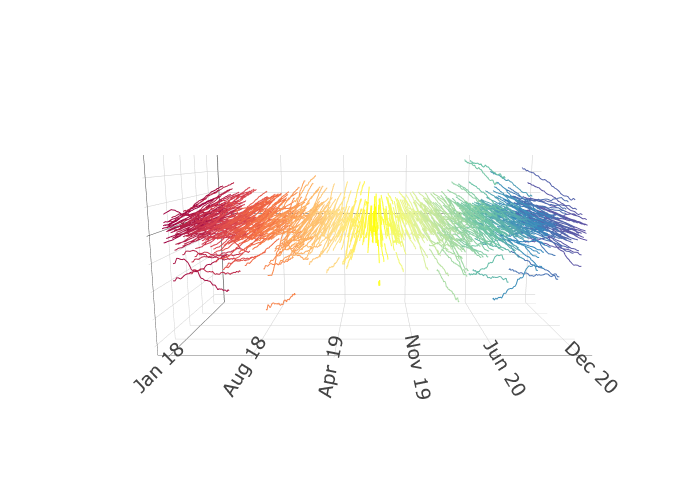}%trim={6cm 4.5cm 6.5cm 5.5cm},
    }
    \caption{Plots of residual curves computed from the CCC-op-ARCH$(p)$ models as defined in \eqref{resid-def}. }
    \label{fig:residuals}
\end{figure}

In order to assess the goodness-of-fit of the estimated CCC-op-ARCH models, we computed model residuals by applying a Moore--Penrose style pseudoinverse of $\hat{\Sigma}_i$ to $X_i$. Specifically, letting 
$$
    \hat{\Sigma}_i^\dagger = \sum_{j=1}^K\,\langle \hat{\Sigma}_i( e_j), e_j \rangle^{-1}(e_j \otimes  e_j),
$$
we defined residual curves 
\begin{align}\label{resid-def}
    \hat{\varepsilon}_i(t) = \hat{\Sigma}_i^\dagger(X_i)(t), \quad  t\in [0,1], \quad i \in \{1+p,...,N\}.
\end{align}
Plots of these residuals computed from CCC-op-ARCH$(p)$ models with $p\in\{1,5\}$ are shown in Figure \ref{fig:residuals}. Visually CCC-op-ARCH$(1)$ residuals appeared to retain some volatility. To evaluate for remaining conditional heteroscedasticity in the residuals, we investigated for serial correlation in the sequence of squared residual curves  $Y_i(\cdot)=\hat{\varepsilon}_i^2(\cdot)$. In particular, we computed the \emph{Spherical AutoCorrelation Function} (SACF), 
$$
    \tilde{\rho}_h=\frac{1}{N-p} \sum_{i=1+p}^{N-h}\left\langle \frac{ Y_i-\mu }{\|Y_i -\mu\|},  \frac{ Y_{i+h}-\mu }{\|Y_{i+h} -\mu\|}\right\rangle, 
$$
as introduced in \cite{yeh:rice:2023}, which is a robust estimator of autocorrelation in sequences of curves. We additionally applied the white noise test of \cite{Kokoszka:Rice:Shang:2017} to the squared residual curves \citep[see][for a review of these methods]{Kim:Kokoszka:Rice:2023}.

Plots of $\tilde{\rho}_h$ as a function of $h$ are shown in Figure \ref{fig:white_noise} for the squared residual curves derived from the S\&P 500 data (2018--2020). While the original squared OCIDR curves $R_i^2$ as well as the squared residuals from the CCC-op-ARCH$(1)$ model exhibit strong serial correlation, the squared residuals from the CCC-op-ARCH$(5)$ model appear  reasonably uncorrelated.  Table \ref{tab:whiteness} shows the $p$-values of white noise tests applied to the squared residual series, which also support the conclusion that the CCC-op-ARCH$(1)$ model does not entirely explain the observed conditional heteroscedasticity in the data, while the CCC-op-ARCH$(5)$ model appears to fit the data  well. The results were similar for the other sample considered. 

\begin{table}
    \centering
    \begin{tabular}{c| cccc}
            \toprule
        \textbf{Test} &   \makecell{\textbf{Original}\\\textbf{Data$^2$}} & \makecell{\textbf{CCC-op-ARCH$(1)$}\\\textbf{Residuals$^2$}} & \makecell{\textbf{CCC-op-ARCH$(5)$}\\\textbf{Residuals$^2$}} \\ 
            \midrule
        \\
        Maximal Lag $=$ 3 & <0.001 & <0.001 & 0.227\\
        Maximal Lag $=$ 10 & <0.001 & <0.001 & 0.591\\
          \bottomrule         
    \end{tabular}
    \caption{$p$-values of the white noise of \cite{Kokoszka:Rice:Shang:2017} applied to the squared residuals of models fit to the S\&P 500 data 2018--2020  data. The results were similar  2022--2024 sample.}
    \label{tab:whiteness}
\end{table}

\begin{figure}
    \centering
    \subfloat[SACF of original Data]{
        \includegraphics[width=0.4\textwidth,clip]{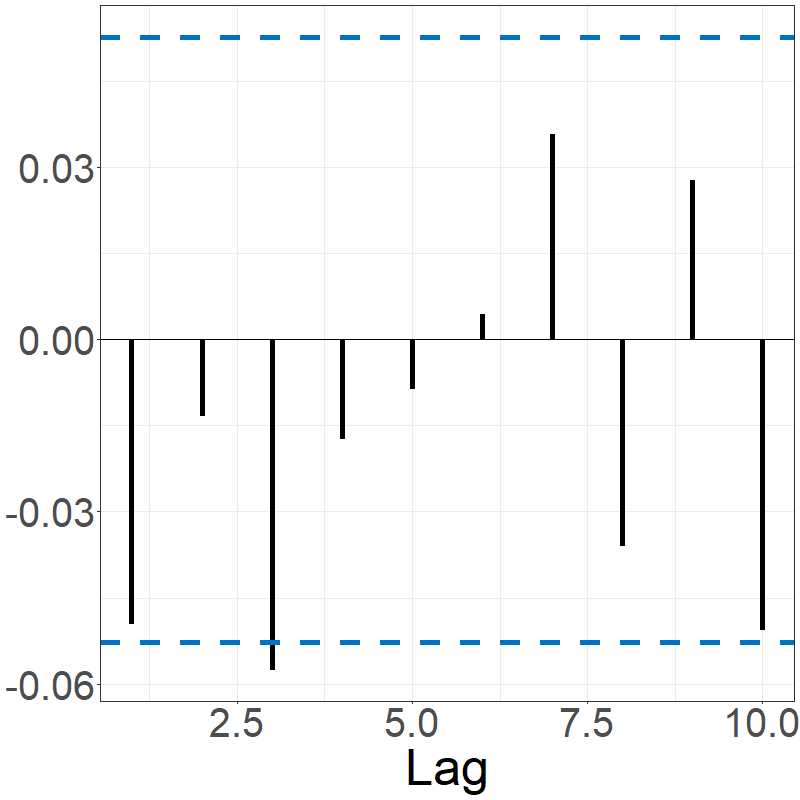}
    } \hfill
    \subfloat[SACF of original squared Data]{
        \includegraphics[width=0.4\textwidth,clip]{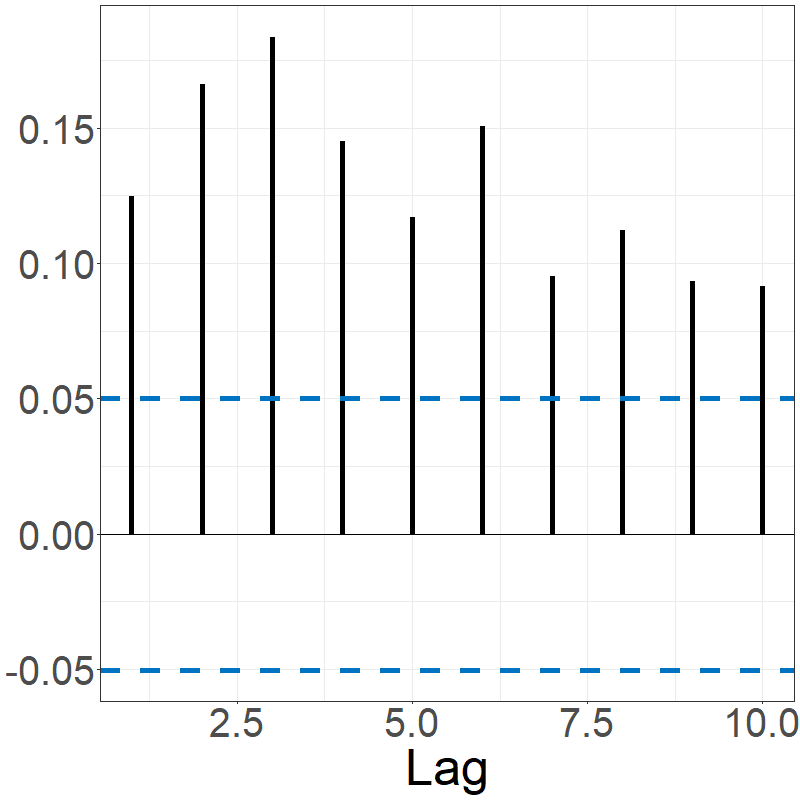}
    } \\
    \subfloat[CCC-op-ARCH$(1)$ model squared residuals]{
        \includegraphics[width=0.4\textwidth,trim={0cm 0cm 0cm 0cm},clip]{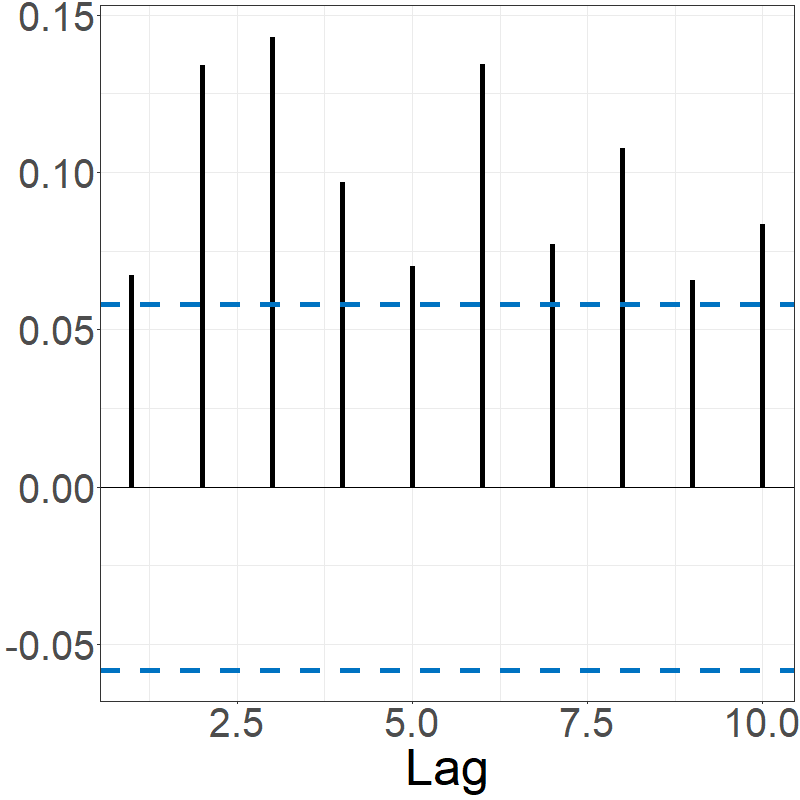}
    } \hfill
    \subfloat[CCC-op-ARCH$(5)$ model squared residuals]{
        \includegraphics[width=0.4\textwidth,clip]{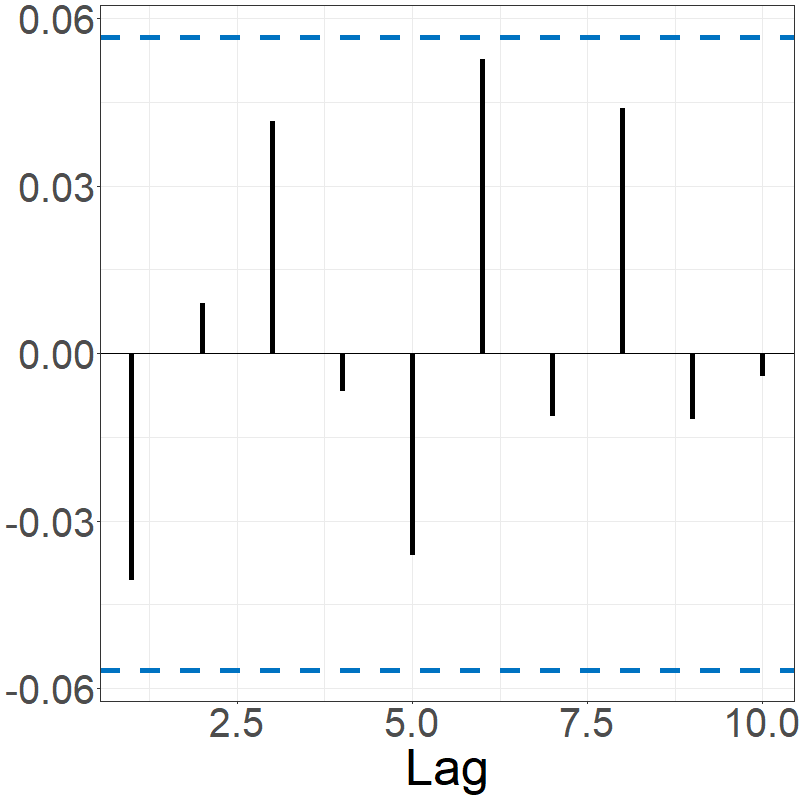}
    }    
    \caption{\textbf{SACF Plots}. Estimation of SACF for the original and CCC-op-ARCH models of the S\&P 500 data, 2018--2020.}
    \label{fig:white_noise}
\end{figure}

\section{Discussion}\label{Section Discussion}
This article introduces an ARCH model for processes taking values in general separable Hilbert spaces which we call operator-level ARCH (op-ARCH) models. A key advantage of this over previous functional conditional heteroscedasticity models is that it models the complete conditional covariance function, rather than only the pointwise variance. We establish sufficient conditions for strict stationarity. Weak stationarity and the existence of finite moments and weak dependence are also discussed. Consistent operator estimates are derived both in the finite- and infinite-dimensional setting via a Yule-Walker (YW) approach. An identifiability issue complicates the direct application of YW-type equations, even under Tikhonov regularization for ill-posedness. To address this, we propose a CCC-operator-level ARCH model, which permits consistent estimation via modified YW-type equations. In finite dimensions, parametric rates are achieved, while in infinite dimensions, explicit rates depending on eigenvalue decay and operator approximation are established. An example illustrating near-parametric rates in the infinite-dimensional case is given. After detailing several aspects of implementing the proposed methods, we present  results of Monte-Carlo simulation experiments, which suggest that the proposed estimators indeed appear to be consistent. In an application to cumulative intra-day return curves, the CCC-op-ARCH$(5)$ model appeared to perform well in explaining/modeling the observed heteroscedasticity in the curves, and provides alternative forecasts of the daily volatility of the curves when compared to existing pointwise models.  

The model may be extended to arbitrary separable Banach spaces, drawing on the estimation frameworks of \cite{RuizAlvarez2019} and \cite{DetteKokotAue2020}. Further, it would be valuable to generalize our estimation procedure to more general H-S operators. Finally, extending the theory to operator-valued GARCH processes appears promising, and work in this direction is ongoing.

Our attention was focused with regards to estimation in the ``diagonal situation'' \eqref{eq:diagARCH}. Although our preliminary investigations suggest that, as in the multivariate setting, the full ``VEC'' model in \eqref{eq:expansion of bold a_p for vech} is challenging to work with, other potential models are possible. These might include analogs of the BEKK, CCC, and DCC multivariate GARCH models, see  \citet[Ch.~10]{FrancqZakoian2019}. We leave this as a broad avenue for future research.   

\paragraph{Acknowledgements} Parts of the article were written while Sebastian K\"uhnert was employed at University of California, Davis. 

\paragraph{Funding} Alexander Aue was partially supported by NSF DMS 2515821. Sebastian K\"uhnert was partially supported by TRR 391 \textit{Spatio-temporal Statistics for the Transition of Energy and Transport} (Project number 520388526) funded by the Deutsche Forschungsgemeinschaft (DFG, German Research Foundation). Gregory Rice was partially supported by the Discovery Grant RGPIN 50503-11525 3100 105 from the Natural Science and Engineering Research Council of Canada. 

\bibliography{opARCHV2}

\begin{thebibliography}{}

\bibitem[\protect\citeauthoryear{Andersen, Davis, Kreiss, and Mikosch}{Andersen et~al.}{2009}]{AndersenEtAl2009}
Andersen, T., R.~Davis, J.-P. Kreiss, and T.~Mikosch (2009).
\newblock {\em Handbook of Financial Time Series.}
\newblock Berlin: Springer.

\bibitem[\protect\citeauthoryear{Andersen, Su, Todorov, and Zhang}{Andersen et~al.}{2024}]{Andersen02042024}
Andersen, T.~G., T.~Su, V.~Todorov, and Z.~Zhang (2024).
\newblock Intraday periodic volatility curves.
\newblock {\em Journal of the American Statistical Association\/}~{\em 119\/}(546), 1181--1191.

\bibitem[\protect\citeauthoryear{Asai, McAleer, and Yu}{Asai et~al.}{2006}]{mvgarch2}
Asai, M., M.~McAleer, and J.~Yu (2006).
\newblock Multivariate stochastic volatility: A review.
\newblock {\em Econometric Reviews\/}~{\em 25\/}(2-3), 145--175.

\bibitem[\protect\citeauthoryear{Aue, Horv\'ath, and Pellatt}{Aue et~al.}{2017}]{Aueetal2017}
Aue, A., L.~Horv\'ath, and D.~Pellatt (2017).
\newblock Functional generalized autoregressive conditional heteroskedasticity.
\newblock {\em Journal of Time Series Analysis\/}~{\em 38}, 3--21.

\bibitem[\protect\citeauthoryear{Bauwens, Laurent, and Rombouts}{Bauwens et~al.}{2006}]{mvgarch1}
Bauwens, L., S.~Laurent, and J.~V.~K. Rombouts (2006).
\newblock Multivariate {GARCH} models: A survey.
\newblock {\em Journal of Applied Econometrics\/}~{\em 21\/}(1), 79--109.

\bibitem[\protect\citeauthoryear{Billingsley}{Billingsley}{1995}]{Billingsley}
Billingsley, P. (1995).
\newblock {\em Probability and Measure\/} (3 ed.).
\newblock New York: Wiley.

\bibitem[\protect\citeauthoryear{Bollerslev}{Bollerslev}{1986}]{Bollerslev1986}
Bollerslev, T. (1986).
\newblock Generalized autoregressive conditional heteroskedasticity.
\newblock {\em Journal of Econometrics\/}~{\em 31}, 307--327.

\bibitem[\protect\citeauthoryear{Bosq}{Bosq}{2000}]{Bosq2000}
Bosq, D. (2000).
\newblock {\em Linear Processes in Function Spaces}.
\newblock Lecture Notes in Statistics. New York: Springer.

\bibitem[\protect\citeauthoryear{Bosq and Blanke}{Bosq and Blanke}{2007}]{BosqBlanke2007}
Bosq, D. and D.~Blanke (2007).
\newblock {\em Inference and Prediction in Large Dimensions}.
\newblock Wiley Series in Probability and Statistics. Chichester, UK: John Wiley \& Sons.

\bibitem[\protect\citeauthoryear{Cerovecki, Francq, H\"ormann, and Zako\"{i}an}{Cerovecki et~al.}{2019}]{Ceroveckietal2019}
Cerovecki, C., C.~Francq, S.~H\"ormann, and J.-M. Zako\"{i}an (2019).
\newblock Functional {GARCH} models: {T}he quasi-likelihood approach and its applications.
\newblock {\em Journal of Econometrics\/}~{\em 209\/}(2), 353--375.

\bibitem[\protect\citeauthoryear{Dette, Kokot, and Aue}{Dette et~al.}{2020}]{DetteKokotAue2020}
Dette, H., K.~Kokot, and A.~Aue (2020).
\newblock Functional data analysis in the banach space of continuous functions.
\newblock {\em Annals of Statistics\/}~{\em 48\/}(2), 1168--1192.

\bibitem[\protect\citeauthoryear{Engle}{Engle}{1982}]{Engle1982}
Engle, R. (1982).
\newblock Autoregressive conditional heteroskedasticity with estimates of the variance of {U}.{K}. inflation.
\newblock {\em Econometrica\/}~{\em 50\/}(4), 987--1008.

\bibitem[\protect\citeauthoryear{Francq and Zako\"{i}an}{Francq and Zako\"{i}an}{2019}]{FrancqZakoian2019}
Francq, C. and J.-M. Zako\"{i}an (2019).
\newblock {\em {GARCH} Models: {S}tructure, Statistical Inference and Financial Applications.\/} (2 ed.).
\newblock Chichester: John Wiley \& Sons Ltd.

\bibitem[\protect\citeauthoryear{Gouri\'eroux}{Gouri\'eroux}{1997}]{Gourieroux1997}
Gouri\'eroux, C. (1997).
\newblock {\em {ARCH} Models and Financial Applications.}
\newblock New York: Springer.

\bibitem[\protect\citeauthoryear{Hall and Meister}{Hall and Meister}{2007}]{hall:meister:2007}
Hall, P. and A.~Meister (2007).
\newblock A ridge-parameter approach to deconvolution.
\newblock {\em Annals of Statistics\/}~{\em 35\/}(4), 1535--1558.

\bibitem[\protect\citeauthoryear{H\"ormann, Horv\'ath, and Reeder}{H\"ormann et~al.}{2013}]{Hoermannetal2013}
H\"ormann, S., L.~Horv\'ath, and R.~Reeder (2013).
\newblock A functional version of the {ARCH} model.
\newblock {\em Econometric Theory\/}~{\em 29\/}(2), 267--288.

\bibitem[\protect\citeauthoryear{H\"ormann and Kokoszka}{H\"ormann and Kokoszka}{2010}]{HoermannKokoszka2010}
H\"ormann, S. and P.~Kokoszka (2010).
\newblock Weakly dependent functional data.
\newblock {\em Annals of Statistics\/}~{\em 38}, 1845--1884.

\bibitem[\protect\citeauthoryear{Horv\'ath and Kokoszka}{Horv\'ath and Kokoszka}{2012}]{HorvathKokoszka2012}
Horv\'ath, L. and P.~Kokoszka (2012).
\newblock {\em Inference for Functional Data with Applications}.
\newblock New York: Springer.

\bibitem[\protect\citeauthoryear{Hsing and Eubank}{Hsing and Eubank}{2015}]{HsingEubank2015}
Hsing, T. and R.~Eubank (2015).
\newblock {\em Theoretical Foundations of Functional Data Analysis, with an Introduction to Linear Operators}.
\newblock West Sussex: Wiley.

\bibitem[\protect\citeauthoryear{Kearney, Shang, and Zhao}{Kearney et~al.}{2023}]{KearneyShangZhao2023}
Kearney, F., H.~L. Shang, and Y.~Zhao (2023).
\newblock Intraday fx volatility-curve forecasting with functional {GARCH} approaches.
\newblock \url{https://arxiv.org/abs2311.18477}.

\bibitem[\protect\citeauthoryear{Kim, Kokoszka, and Rice}{Kim et~al.}{2023}]{Kim:Kokoszka:Rice:2023}
Kim, M., P.~Kokoszka, and G.~Rice (2023).
\newblock White noise testing for functional time series.
\newblock {\em Statistics {S}urveys\/}~{\em 17}, 119--168.

\bibitem[\protect\citeauthoryear{Kingman}{Kingman}{1973}]{Kingman1973}
Kingman, J. F.~C. (1973).
\newblock Subadditive ergodic theory.
\newblock {\em The Annals of Probability\/}~{\em 1\/}(6), 883--899.

\bibitem[\protect\citeauthoryear{Kokoszka, Rice, and Shang}{Kokoszka et~al.}{2017}]{Kokoszka:Rice:Shang:2017}
Kokoszka, P., G.~Rice, and H.~L. Shang (2017).
\newblock Inference for the autocovariance of a functional time series under conditional heteroscedasticity.
\newblock {\em Journal of {M}ultivariate {A}nalysis\/}~{\em 162}, 32--50.

\bibitem[\protect\citeauthoryear{K\"uhnert}{K\"uhnert}{2019}]{Kuehnert2019}
K\"uhnert, S. (2019).
\newblock {\em \"Uber funktionale {ARCH}- und {GARCH}-Zeitreihen}.
\newblock Ph.\ D. thesis, University of Rostock.

\bibitem[\protect\citeauthoryear{K\"uhnert}{K\"uhnert}{2020}]{Kuehnert2020}
K\"uhnert, S. (2020).
\newblock Functional {ARCH} and {GARCH} models: A {Y}ule-{W}alker approach.
\newblock {\em Electronic Journal of Statistics\/}~{\em 14\/}(2), 4321--4360.

\bibitem[\protect\citeauthoryear{K\"uhnert}{K\"uhnert}{2024}]{Kuehnert2024}
K\"uhnert, S. (2024).
\newblock Estimating lagged (cross-)covariance operators of \textnormal{$L^p$}-$m$-approximable processes in {C}artesian product {H}ilbert spaces.
\newblock {\em Journal of Time Series Analysis\/}~{\em 46\/}(3), 582--595.

\bibitem[\protect\citeauthoryear{K\"uhnert, Rice, and Aue}{K\"uhnert et~al.}{2026}]{KuehnertRiceAue2026}
K\"uhnert, S., G.~Rice, and A.~Aue (2026).
\newblock Estimating invertible processes in {H}ilbert spaces, with applications to functional {ARMA} processes.
\newblock {\em Bernoulli\/}~{\em 32\/}(2), 1523--1546.

\bibitem[\protect\citeauthoryear{Kutta, Dierickx, and Dette}{Kutta et~al.}{2022}]{KuttaDierickxDette2022}
Kutta, T., G.~Dierickx, and H.~Dette (2022).
\newblock {Statistical inference for the slope parameter in functional linear regression}.
\newblock {\em Electronic Journal of Statistics\/}~{\em 16\/}(2), 5980--6042.

\bibitem[\protect\citeauthoryear{Laksaci, Alshahrani, Almanjahie, and Kaid}{Laksaci et~al.}{2025}]{Laksacietal2025}
Laksaci, A., F.~Alshahrani, I.~M. Almanjahie, and Z.~Kaid (2025).
\newblock Nonparametric multifunctional {GARCH} time series data analysis: Application to dynamic forecasting in financial data.
\newblock {\em AIMS Mathematics\/}~{\em 10\/}(11), 26459--26483.

\bibitem[\protect\citeauthoryear{Li, Sun, and Liu}{Li et~al.}{2025}]{LiSunLiu2025}
Li, Z., H.~Sun, and J.~Liu (2025).
\newblock A functional {GARCH} model with multiple constant parameters. \textit{Computational Economics}.

\bibitem[\protect\citeauthoryear{Liggett}{Liggett}{1985}]{Liggett1985}
Liggett, T. (1985).
\newblock An improved subadditive ergodic theorem.
\newblock {\em Annals of Probability\/}~{\em 13\/}(4), 1279--1285.

\bibitem[\protect\citeauthoryear{Mas}{Mas}{2007}]{Mas2007}
Mas, A. (2007).
\newblock Weak convergence in the functional autoregressive model.
\newblock {\em Journal of Multivariate Analysis\/}~{\em 98\/}(6), 1231--1261.

\bibitem[\protect\citeauthoryear{Moore}{Moore}{1920}]{moore1920}
Moore, E.~H. (1920).
\newblock On the reciprocal of the general algebraic matrix.
\newblock {\em Bulletin of the American Mathematical Society\/}~{\em 26}, 294--295.

\bibitem[\protect\citeauthoryear{Penrose}{Penrose}{1955}]{penrose1955}
Penrose, R. (1955).
\newblock A generalized inverse for matrices.
\newblock {\em Mathematical Proceedings of the Cambridge Philosophical Society\/}~{\em 51\/}(3), 406–413.

\bibitem[\protect\citeauthoryear{{R Core Team}}{{R Core Team}}{2020}]{r}
{R Core Team} (2020).
\newblock {\em R: A Language and Environment for Statistical Computing}.
\newblock Vienna, Austria: R Foundation for Statistical Computing.

\bibitem[\protect\citeauthoryear{Reimherr}{Reimherr}{2015}]{Reimherr2015}
Reimherr, M. (2015).
\newblock Functional regression with repeated eigenvalues.
\newblock {\em Statistics \& Probability Letters\/}~{\em 107}, 62--70.

\bibitem[\protect\citeauthoryear{Rice, Wirjanto, and Zhao}{Rice et~al.}{2023}]{RiceEtAL2023}
Rice, G., T.~Wirjanto, and Y.~Zhao (2023).
\newblock Exploring volatility of crude oil intraday return curves: {A} functional {GARCH}-{X} model.
\newblock {\em Journal of Commodity Markets\/}~{\em 32}, 100361.

\bibitem[\protect\citeauthoryear{Ruiz-Medina and \'Alvarez-Li\'ebana}{Ruiz-Medina and \'Alvarez-Li\'ebana}{2019}]{RuizAlvarez2019}
Ruiz-Medina, M. and J.~\'Alvarez-Li\'ebana (2019).
\newblock Strongly consistent autoregressive predictors in abstract {B}anach spaces.
\newblock {\em Journal of Multivariate Analysis\/}~{\em 170}, 186--201.

\bibitem[\protect\citeauthoryear{Sun and Yu}{Sun and Yu}{2020}]{SunYu2020}
Sun, H. and B.~Yu (2020).
\newblock Volatility asymmetry in functional threshold {GARCH} model.
\newblock {\em Journal of Time Series Analysis\/}~{\em 41\/}(1), 95--109.

\bibitem[\protect\citeauthoryear{Weidmann}{Weidmann}{1980}]{Weidmann1980}
Weidmann, J. (1980).
\newblock {\em Linear Operators in Hilbert spaces}.
\newblock Graduate Texts in Mathematics, 68. New York-Berlin: Springer.

\bibitem[\protect\citeauthoryear{Wu and Shao}{Wu and Shao}{2004}]{WuShao2004}
Wu, W. and X.~Shao (2004).
\newblock Limit theorems for iterated random functions.
\newblock {\em Journal of Applied Probability\/}~{\em 41\/}(2), 425--436.

\bibitem[\protect\citeauthoryear{Yeh, Rice, and Dubin}{Yeh et~al.}{2023}]{yeh:rice:2023}
Yeh, C.-K., G.~Rice, and J.~A. Dubin (2023).
\newblock {Functional spherical autocorrelation: A robust estimate of the autocorrelation of a functional time series}.
\newblock {\em Electronic Journal of Statistics\/}~{\em 17\/}(1), 650 -- 687.

\end{thebibliography}

\appendix

\section{Preliminaries}\label{Appendix: Section preliminaries}

This section presents a few  fundamental features required for our proofs. A compact operator $A:\mathcal{H}\to\mathcal{H}_\star,$ where $\mathcal{H}$ and $\mathcal{H}_\star$ are separable Hilbert spaces,  belongs to the \emph{Schatten class of order} $1\leq p < \infty,$ denoted by $\mathcal{S}_{p,\mathcal{H},\mathcal{H}_\star}$ if its singular values $s_1(A) \geq s_2(A) \geq \dots \geq 0$ are $p$-summable. The class $\mathcal{S}_{p,\mathcal{H},\mathcal{H}_\star}$ is equipped with the norm
$$
    \|A\|_p \coloneqq \bigg(\sum_{i=1}^\infty s_i^p(A)\bigg)^{\!1/p}\!,\quad A \in \mathcal{S}_{p,\mathcal{H},\mathcal{H}_\star}.
$$
The spaces of nuclear and H-S operators are $\mathcal{N}_{\mathcal{H},\mathcal{H}_\star} = \mathcal{S}_{1,\mathcal{H},\mathcal{H}_\star}$ and $\mathcal{S}_{\mathcal{H},\mathcal{H}_\star}=\mathcal{S}_{2,\mathcal{H},\mathcal{H}_\star}$, respectively, and the space of bounded linear operators $\mathcal{L}_{\mathcal{H},\mathcal{H}_\star}$ is denoted by $\mathcal{S}_{\infty,\mathcal{H},\mathcal{H}_\star}$, with norm $\|\cdot\|_\infty \coloneqq \|\cdot\|_{\mathcal{L}}.$ It is well known that $\mathcal{S}_{q,\mathcal{H},\mathcal{H}_\star} \subsetneq \mathcal{S}_{p,\mathcal{H},\mathcal{H}_\star}$ and $\|\cdot\|_p \leq \|\cdot\|_q$ for all $1\leq p < q \leq \infty.$ These operators fulfill the following H\"older-type inequality \citep[Section~7]{Weidmann1980}. 

\begin{lemma}\label{Lemma: operator-valued hoelder inquality} 
Let $\mathcal{H}_1, \mathcal{H}_2, \mathcal{H}_3$ be Hilbert spaces, and let $p,q,r \in [1,\infty]$ with $\tfrac{1}{p} + \tfrac{1}{q} = \tfrac{1}{r}$, and $\tfrac{1}{\infty}\coloneqq 0.$ Then, for any $A\in\mathcal{S}_{p,\mathcal{H}_2,\mathcal{H}_3}$ and $B\in\mathcal{S}_{q,\mathcal{H}_1,\mathcal{H}_2},$ it holds that $AB\in\mathcal{S}_{r,\mathcal{H}_1,\mathcal{H}_3}$, with
\[
    \|AB\|_r \leq \|A\|_p \, \|B\|_q.
\]
\end{lemma}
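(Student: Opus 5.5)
This is the operator Hölder inequality for Schatten classes, and I would prove it from singular values. First I dispose of the degenerate cases. If $p=\infty$ then $q=r$, and the claim follows from $s_j(AB)\le \|A\|_{\mathcal{L}}\, s_j(B)$. That inequality holds because $s_j(T)=\min\{\|T-F\|_{\mathcal{L}}:\operatorname{rank}F<j\}$ (the Allahverdiev/min–max characterisation), and $\|A(B-F)\|_{\mathcal{L}}\le\|A\|_{\mathcal{L}}\|B-F\|_{\mathcal{L}}$ with $\operatorname{rank}(AF)<j$. The case $q=\infty$ is symmetric: take adjoints, or approximate $A$ by $F$ and use $(A-F)B$. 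The case $r=\infty$ forces $p=q=\infty$ and is just submultiplicativity of the operator norm.

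For finite exponents I rely on the key singular-value inequality
$$s_{i+j-1}(AB)\le s_i(A)\,s_j(B).$$
It follows from the same approximation-number characterisation. Choose $F_A$ of rank $<i$ with $\|A-F_A\|_{\mathcal{L}}=s_i(A)$ and $F_B$ of rank $<j$ with $\|B-F_B\|_{\mathcal{L}}=s_j(B)$; both exist via truncated Schmidt expansions. Then
$$AB-(A-F_A)(B-F_B)=F_AB+AF_B-F_AF_B$$
has rank $<i+j-1$, which gives the bound.

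I would not pass through this to sums directly, since that loses constants. Instead I would use Horn's multiplicative majorisation
$$\prod_{k=1}^n s_k(AB)\le\prod_{k=1}^n s_k(A)s_k(B)\quad\text{for all } n.$$
To prove it, reduce to finite rank: replace $A$ and $B$ by their compressions onto the spans of the first $n$ singular vectors, which are relevant for the top-$n$ singular values, or argue through the $n$-th exterior power, where $\|\wedge^nT\|_{\mathcal{L}}=\prod_{k\le n}s_k(T)$ and $\wedge^n(AB)=\wedge^nA\,\wedge^nB$. Weyl's lemma then converts log-majorisation into
$$\sum_k f(s_k(AB))\le\sum_k f(s_k(A)s_k(B))$$
for any $f$ with $t\mapsto f(e^t)$ convex increasing. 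I apply this with $f(t)=t^r$.

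The scalar Hölder inequality with exponents $p/r$ and $q/r$ now finishes the argument:
$$\sum_k s_k(A)^r s_k(B)^r\le\Big(\sum_k s_k(A)^p\Big)^{r/p}\Big(\sum_k s_k(B)^q\Big)^{r/q}.$$
Taking $r$-th roots gives $\|AB\|_r\le\|A\|_p\|B\|_q$. In particular the right side is finite, so $AB\in\mathcal{S}_r$; compactness of $AB$ is clear as a product of a compact and a bounded operator.

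The main obstacle is Horn's inequality in infinite dimensions across three different spaces. I would handle it by the exterior-power argument, noting that $\wedge^n$ of a bounded map between Hilbert spaces is bounded with the stated norm identity obtained from the polar decomposition. Alternatively, since the paper already cites \citet[Section~7]{Weidmann1980}, one may simply invoke that reference for this step and present only the reduction above.
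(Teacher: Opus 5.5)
Your proposal is correct, and it does more than the paper does: the paper gives no proof of this lemma and simply quotes it from Weidmann (1980, Section~7). You give the standard self-contained argument instead:
\begin{itemize}
\item approximation numbers for the endpoint cases;
\item Horn's multiplicative majorisation $\prod_{k\le n}s_k(AB)\le\prod_{k\le n}s_k(A)s_k(B)$, obtained from $\|\wedge^n T\|_{\mathcal{L}}=\prod_{k\le n}s_k(T)$ and $\wedge^n(AB)=\wedge^nA\,\wedge^nB$;
\item Weyl's majorant theorem with $f(t)=t^r$;
\item scalar H\"older with exponents $p/r$ and $q/r$.
\end{itemize}
The citation is the economical choice for a textbook fact. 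Your route buys two things. First, it shows where the sharp constant $1$ comes from. As you note, the additive bound $s_{i+j-1}(AB)\le s_i(A)s_j(B)$ alone only gives $\|AB\|_r\le 2^{1/r}\|A\|_p\|B\|_q$, so that step never enters your final chain and can be dropped. Second, it treats the endpoints $p=\infty$ or $q=\infty$ explicitly under the paper's convention $\mathcal{S}_\infty=\mathcal{L}$. That is the form $\|AB\|_{\mathcal{S}}\le\|A\|_{\mathcal{S}}\|B\|_{\mathcal{L}}$ the later proofs invoke.

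One sentence should be removed or repaired. Your first suggested reduction for Horn's inequality replaces $A$ and $B$ by compressions onto the spans of their own leading singular vectors. This fails, because the leading singular values of $AB$ need not survive. Take $n=1$ in $\mathbb{R}^2$ with $B=2\,e_1\otimes e_1+e_2\otimes e_2$ and $A=e_2\otimes e_2$. The rank-one truncations are $2\,e_1\otimes e_1$ and $A$ itself, and their product is $0$. Yet $AB=e_2\otimes e_2$ has $s_1(AB)=1$, so an inequality for the compressions says nothing about $AB$.

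A correct finite-dimensional reduction compresses $AB$ itself. Let $P_n,Q_n$ be the projections onto the first $n$ right and left singular vectors of $AB$. Let $R$ be the projection onto $\operatorname{ran}(BP_n)$, which is $n$-dimensional whenever $s_n(AB)>0$, the only nontrivial case. Then $Q_nABP_n=(Q_nAR)(RBP_n)$. Multiplicativity of the determinant, together with $s_k(Q_nAR)\le s_k(A)$ and $s_k(RBP_n)\le s_k(B)$, gives Horn's inequality. Your exterior-power alternative is fine as stated, so the proof is complete once the compression sentence is discarded.
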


The next auxiliary result is a direct consequence of the properties of the given operators and norms, and we therefore omit the proof.

\begin{lemma}\label{lem:operator-valued matrix inequalities} 
Let $m, n \in \mathbb{N},$ suppose $B_j$ are bounded linear operators between Banach spaces, and let 
$C_{ij}$ be H-S operators for $1 \le i \le m$, $1 \le j \le n$. Then, the following holds:
\begin{itemize}
    \item[\textnormal{(a)}] The vector $(B_1\,\cdots\,B_n)$ is a bounded linear operator between Banach spaces, with  
    \begin{align}\label{inequality for the matrix norms}
      \|(B_1\,\cdots\,B_n)\|_{\mathcal{L}} \leq \sum_{i=1}^n \|B_i\|_{\mathcal{L}}\,.
    \end{align}  
    
    \item[\textnormal{(b)}] The matrix $\boldsymbol{C} \coloneqq (C_{ij})^{m,n}_{i,j=1}$ is H-S, with 
    \begin{align}
      \|\boldsymbol{C}\|^2_{\mathcal{S}} = \sum_{i=1}^m \sum_{j=1}^n \|C_{ij}\|^2_{\mathcal{S}}\,.
    \end{align} 
\end{itemize}
\end{lemma}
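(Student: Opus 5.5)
The statement to prove is Lemma~\ref{lem:operator-valued matrix inequalities}. I would verify both parts directly from the definitions of the operator norm on a Cartesian product space and of the Hilbert--Schmidt norm. The product space carries the norm from the notation section, $\|x\|^2=\sum_i\|x_i\|^2$.

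For part (a), the row operator $(B_1\,\cdots\,B_n)$ acts on $x=(x_1,\dots,x_n)^\top$ by $x\mapsto\sum_{i=1}^n B_i x_i$. Linearity is immediate. For boundedness I use the triangle inequality:
\[
\Big\|\sum_{i=1}^n B_i x_i\Big\|\le\sum_{i=1}^n\|B_i\|_{\mathcal L}\|x_i\|.
\]
Since $\|x_i\|\le\|x\|$ for every $i$ under the product norm, this gives $\|\sum_i B_ix_i\|\le(\sum_i\|B_i\|_{\mathcal L})\|x\|$. Taking the supremum over $\|x\|\le1$ yields \eqref{inequality for the matrix norms}. If the product norm were instead any $\ell^q$-type norm, the same argument works, because each coordinate is still dominated by the full norm. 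There is no real obstacle in this part.

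For part (b), assume the $C_{ij}$ map $\mathcal H_j\to\mathcal H_{\star,i}$. Fix orthonormal bases $(f^{(j)}_k)_k$ of $\mathcal H_j$. Then the vectors $F_{jk}$, which place $f^{(j)}_k$ in slot $j$ and zeros elsewhere, form an orthonormal basis of the product space $\prod_j\mathcal H_j$. This holds because the product inner product is the sum of the coordinate inner products. The operator $\boldsymbol C$ sends $F_{jk}$ to the vector $(C_{1j}f^{(j)}_k,\dots,C_{mj}f^{(j)}_k)^\top$, whose squared norm is $\sum_i\|C_{ij}f^{(j)}_k\|^2$. Computing the H-S norm in this basis, I get
\[
\|\boldsymbol C\|_{\mathcal S}^2=\sum_{j,k}\|\boldsymbol C F_{jk}\|^2=\sum_{i,j}\sum_k\|C_{ij}f^{(j)}_k\|^2=\sum_{i,j}\|C_{ij}\|_{\mathcal S}^2.
\]
Interchanging the sums is legitimate because all terms are nonnegative. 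The right-hand side is finite, so $\boldsymbol C$ is H-S with exactly the stated identity.

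The only point that needs care is in (b): checking that the $F_{jk}$ really form an orthonormal basis of the product space, and recalling that the H-S norm does not depend on the chosen basis. Both facts are standard, so I expect the proof to be short.
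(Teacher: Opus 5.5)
Your proposal is correct. The paper omits this proof as a direct consequence of the definitions of the operator and Hilbert--Schmidt norms on product spaces, and your triangle-inequality bound for (a) together with the product orthonormal basis computation for (b) is exactly that routine verification.
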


\section{Notes}\label{Appendix: Section Notes}

In the following, we illustrate the validity of the Markovian forms in the op-ARCH and the CCC-op-ARCH model.

\begin{example}\label{Ex:Markovianforms} Let $p = 2.$
\begin{itemize}
    \item[\textnormal{(a)}] From the definition of $\Sigma_{0}$ in the op-ARCH equation \eqref{eq:definition op-ARCH} and the elements introduced in Section \ref{Sec:General model}, it follows that
    \begin{gather*}
        \boldsymbol{\Delta} + \boldsymbol{\Psi}\big(X^{\otimes2,[2]}_{\!-1}\big) \;=\; 
        \begin{bmatrix}
            \Delta\\
            0
        \end{bmatrix} 
        + 
        \begin{bmatrix}
    		\alpha_1\!   & \!\alpha_2\\
    		\mathbb{I}\! & \!0\\
    	\end{bmatrix}
        \!\!
        \begin{bmatrix}
            X^{\otimes 2}_{-1}\\
            X^{\otimes 2}_{-2}
        \end{bmatrix}
        \;=\;
        \begin{bmatrix}
            \Delta + \sum^3_{i=1}\alpha_i(X^{\otimes 2}_{-i})\\
            X^{\otimes 2}_{-1}
        \end{bmatrix}
        \;=\;
        \begin{bmatrix}
            \Sigma_{0} \\
            X^{\otimes 2}_{-1}
        \end{bmatrix}. 
    \end{gather*}
    Consequently, it holds indeed  
    \begin{gather*}
        \Upsilon_{\!0}\big(\boldsymbol{\Delta} + \boldsymbol{\Psi}\big(X^{\otimes2,[2]}_{\!-1}\big)\big) 
        \;=\; 
        \begin{bmatrix}
            \Sigma^{1/2}_{0}\varepsilon^{\otimes 2}_{0}\Sigma^{1/2}_{0}\\
            X^{\otimes 2}_{-1}
        \end{bmatrix}
        \;=\; 
        \begin{bmatrix}
            X^{\otimes 2}_{0}\\
            X^{\otimes 2}_{-1}
        \end{bmatrix} \;=\; X^{\otimes2,[2]}_0,
    \end{gather*}
    which is the Markovian form \eqref{quasi-state-space form} of the general op-ARCH$(2)$ model.
    
        \item[\textnormal{(b)}] Let $(X_k)$ be the CCC-op-ARCH$(2)$ process with operators
        \[
            \Delta = \sum^\infty_{j=1} d_j (e_j \otimes e_j), \quad \alpha_i = \sum^\infty_{j=1} a_{ijj}\,(e_j \otimes e_j)\otimes_{\mathcal{S}}(e_j \otimes e_j), \quad i=1,2,
        \]
        where $(d_j)$, $(a_{1jj})_j$, and $(a_{2jj})_j$ are square-summable sequences. By \eqref{eq:definition op-ARCH}, and using
    \[
        \varepsilon_0 = \sum^\infty_{\ell=1}\,\langle \varepsilon_0, e_\ell\rangle e_\ell, \quad X^{\otimes 2}_{n,\mathrm{d}} = \sum^\infty_{\ell=1}\,\langle X_n, e_\ell\rangle^2 (e_\ell\otimes e_\ell),
    \]
    we obtain for any $m$,
    \begin{align*}
    \langle X_0, e_m\rangle^2
    &=
    \langle \varepsilon_0, e_m\rangle^2\,
    \big\langle \Sigma_0^{1/2}(e_m), e_m \big\rangle^2 \\
    &=
    \langle \varepsilon_0, e_m\rangle^2
    \Big[ d_m + a_{1mm}\langle X_{-1}, e_m \rangle^2 
          + a_{2mm}\langle X_{-2}, e_m \rangle^2 \Big].
    \end{align*}
    Consequently, by the definition of $\psi_0$ in \eqref{sk8} and $\alpha_i^0 = \psi_0(\alpha_i(\cdot))$, $i=1,2$, we obtain
    \[
    X^{\otimes 2,[2]}_{0,\mathrm{d}}
    \;=\;
    \begin{bmatrix}
    X^{\otimes 2}_{0,\mathrm{d}}\\[0.3em]
    X^{\otimes 2}_{-1,\mathrm{d}}
    \end{bmatrix}
    \;=\;
    \begin{bmatrix}
    \psi_0(\Delta)\\[0.3em]
    0
    \end{bmatrix}
    +
    \begin{bmatrix}
    \alpha_1^0 & \alpha_2^0\\
    \mathbb{I} & 0
    \end{bmatrix}
    \begin{bmatrix}
    X^{\otimes 2}_{-1,\mathrm{d}}\\[0.3em]
    X^{\otimes 2}_{-2,\mathrm{d}}
    \end{bmatrix}
    \;=\;
    \boldsymbol{\Delta}_0
    +
    \boldsymbol{\Psi}_{\!0}\big(X^{\otimes 2,[2]}_{-1,\mathrm{d}}\big),
    \]
    which confirms the Markovian representation of the CCC-op-ARCH$(2)$ model for $p=2$.
\end{itemize}
\end{example}

The estimates for $\mathscr{C}$ and $\mathscr{D}$ are based on a sample $X_1, \dots ,X_N$ of $\boldsymbol{X}=(X_k)\subset\mathcal{H},$ with sample size $N,$ defined by
\begin{gather}
    \hat{\mathscr{C}}\coloneqq \frac{1}{N}\sum_{k=p}^{N}\,\big(X^{\otimes2, [p]}_{\!k}\! - \hat{m}_p\big)\otimes_\mathcal{S}\big(X^{\otimes2, [p]}_{\!k} - \hat{m}_p\big), 
    \label{eq:est lag-0-cov-op}\\      
    \hat{\mathscr{D}}\coloneqq \frac{1}{N}\sum_{k=p}^{N-1}\,\big(X^{\otimes2, [p]}_{\!k} - \hat{m}_p\big)\otimes_\mathcal{S}\bigg[\big(X^{\otimes2}_{\!k+1} - \hat{m}'_1\big)\mathscr{C}^\dagger_{\!\boldsymbol{\varepsilon}}\!\coprod^{e_K}_{e_1}\,\bigg],\label{eq:est lag-1-cross-cov-op}
\end{gather}
respectively, with first moment estimates 
\begin{gather}\label{eq:Def moment estimates}
  \hat{m}_p\coloneqq\frac{1}{N}\sum_{\ell=p}^{N}X^{\otimes2,[p]}_{\ell}, \qquad \hat{m}'_1\coloneqq\frac{1}{N}\sum_{\ell=p}^{N-1}X^{\otimes2}_{\ell+1}.   
\end{gather}

\section{Proofs}\label{Section proofs}

Proving our results requires the consistent estimation of means and (lagged) (cross-)covariance operators. To this end, we formulate the following auxiliary results.

\begin{lemma}\label{lemma:Estimation 2nd moments} Under the conditions of Proposition  \ref{prop:Explicit moments} for $\nu=2$, for $\hat{m}_p$ in \eqref{eq:Def moment estimates} holds
\begin{align*}
    \Exp\!\|\hat{m}_p  - m_p\|^2_{\mathcal{S}} = \mathrm O(N^{-1}).
\end{align*}
\end{lemma}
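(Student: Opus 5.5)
The plan is to treat $\hat m_p$ as a sample mean of the stationary, $L^2$-$m$-approximable $\mathcal{S}^p$-valued sequence $Y_\ell \coloneqq X^{\otimes2,[p]}_\ell$, and to bound its second moment by a summable-autocovariance argument. First, under the conditions of Proposition~\ref{prop:Explicit moments} with $\nu=2$, part (a) gives $\Exp\|X_0^{\otimes2}\|^2_{\mathcal{S}}<\infty$. Part (b) gives that $(X^{\otimes2}_k)$ is $L^2$-$m$-approximable with $\xi_{X^{\otimes2},2}(m)\le c_\dagger\rho^m$. Since $Y_\ell$ stacks $p$ consecutive lags, $Y_\ell=g(\varepsilon_\ell,\varepsilon_{\ell-1},\dots)$ is again a Bernoulli shift. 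By Lemma~\ref{lem:operator-valued matrix inequalities}(b), $\|Y_0-Y_0^{(m)}\|^2_{\mathcal{S}^p}=\sum_{i=0}^{p-1}\|X^{\otimes2}_{-i}-X^{\otimes2,(m)}_{-i}\|^2_{\mathcal{S}}$. Lag $-i$ is coupled only after $m-i$ steps, so $\xi_{Y,2}(m)\le \sqrt p\, c_\dagger\rho^{m-p+1}$ for $m\ge p$. Hence $(Y_\ell)$ is $L^2$-$m$-approximable with geometric errors and $\Exp\|Y_0\|^2<\infty$, and $m_p=\Exp Y_0$ by stationarity.

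Second, write $\hat m_p-m_p=\frac1N\sum_{\ell=p}^N (Y_\ell-m_p) - \frac{p-1}{N}m_p$. The deterministic bias term is $\mathrm O(N^{-1})$ in norm, contributing $\mathrm O(N^{-2})$ to the squared error. For the main term, expand
\[
\Exp\Big\|\frac1N\sum_{\ell=p}^N (Y_\ell-m_p)\Big\|^2 \le \frac1N\sum_{|h|<N}\big|\Exp\langle Y_0-m_p,Y_h-m_p\rangle_{\mathcal{S}^p}\big|,
\]
so it suffices to show $\sum_{h\ge0}|\gamma(h)|<\infty$ with $\gamma(h)\coloneqq\Exp\langle \tilde Y_0,\tilde Y_h\rangle$ and $\tilde Y=Y-m_p$.

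Third, which I expect to be the only substantive step, I bound $\gamma(h)$ by the standard coupling trick of \cite{HoermannKokoszka2010}. Let $Y_h^{(h)}$ be built with the innovations $\varepsilon_j$, $j\le 0$, replaced by independent copies. Then $Y_h^{(h)}$ is independent of $Y_0$ and has the same law as $Y_h$, so $\Exp\langle\tilde Y_0,\tilde Y_h^{(h)}\rangle=\langle \Exp\tilde Y_0,\Exp\tilde Y^{(h)}_h\rangle=0$. Cauchy--Schwarz then gives $|\gamma(h)|\le (\Exp\|\tilde Y_0\|^2)^{1/2}\xi_{Y,2}(h)$, which is geometrically summable. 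Care is needed in the indexing of the coupled copy: it must replace exactly the innovations shared with $Y_0$ and agree in distribution with $Y^{(h)}_0$ shifted by $h$, which holds by stationarity of the Bernoulli shift. Combining the three steps yields $\Exp\|\hat m_p-m_p\|^2_{\mathcal{S}}=\mathrm O(N^{-1})$. Alternatively, once the geometric approximability of $(Y_\ell)$ in the Cartesian product space is established, one may simply invoke the mean-estimation bound of \cite{Kuehnert2024}.
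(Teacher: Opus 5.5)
Your proof is correct, and it is more self-contained than the paper's. The paper never works with the stacked process in this lemma. It writes $m_p=(m_1,\dots,m_1)^\top$ and uses stationarity to reduce $\Exp\|\hat m_p-m_p\|^2_{\mathcal S}$ to $p$ times the error of a single coordinate, which is a sample mean of $(X^{\otimes2}_k)$. It then cites \citet[Theorem 3.1]{Kuehnert2024} for the $\mathrm O(N^{-1})$ rate of that sample mean, using only the $L^2$-$m$-approximability of $(X^{\otimes2}_k)$ from Proposition~\ref{prop:Explicit moments}(b).

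You instead transfer the geometric approximability to $Y_\ell=X^{\otimes2,[p]}_\ell$ in $\mathcal S^p$ and reprove the mean bound directly. You do this via $\Exp\|\frac1N\sum\tilde Y_\ell\|^2\le\frac1N\sum_{|h|<N}|\gamma(h)|$ and the coupling estimate $|\gamma(h)|\le(\Exp\|\tilde Y_0\|^2)^{1/2}\xi_{Y,2}(h)$. Your route buys three things:
\begin{itemize}
\item it does not depend on the external theorem;
\item it treats explicitly the deterministic bias $-\frac{p-1}{N}m_p$ that comes from dividing $N-p+1$ summands by $N$, which the paper leaves implicit in its reduction;
\item it actually proves the transfer of approximability to $(X^{\otimes2,[p]}_k)$, a fact the paper later invokes without argument in Lemma~\ref{Lemma: Cons lagged cross-cov non-centered}.
\end{itemize}
The paper's route buys brevity: by arguing coordinatewise, it needs no statement about $\mathcal S^p$-valued processes at this point.
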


\begin{proof}[\textbf{Proof}] The claim follows from $m_p=(m_1, \dots, m_1)^\top\!,$ $\Exp\!\|\hat{m}_p  - m_p\|^2_{\mathcal{S}} = p\Exp\!\|\hat{m}_1  - m_1\|^2_{\mathcal{S}},$ and $m_1=\mathscr{C}_{\!\boldsymbol{X}},$ together with $L^2$-$m$-approximability of $\boldsymbol{X}^{\otimes 2}$ by Proposition \ref{prop:Explicit moments}, and \citet[Theorem 3.1]{Kuehnert2024}.
\end{proof}

\begin{lemma}\label{Lemma: Cons lagged cross-cov non-centered} Let the conditions of Proposition \ref{prop:Explicit moments} for $\nu=4$ hold. Then, for the estimators $\hat{\mathscr{C}}$ and $\hat{\mathscr{D}}$ in \eqref{eq:est lag-0-cov-op}--\eqref{eq:est lag-1-cross-cov-op} holds 
\begin{gather*}
    \|\hat{\mathscr{C}} - \mathscr{C}\|_{\mathcal{S}} = \mathrm O_{\Prob}(N^{-1/2}), \qquad \|\hat{\mathscr{D}} - \mathscr{D}\|_{\mathcal{S}} = \mathrm O_{\Prob}\big(a^{-2}_KN^{-1/2}\big).
\end{gather*}
\end{lemma}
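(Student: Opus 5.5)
The plan is to reduce both claims to the consistency result for lagged (cross-)covariance operators of $L^p$-$m$-approximable processes in Cartesian product Hilbert spaces, \citet[Theorem 3.1]{Kuehnert2024}, as in the proof of Lemma \ref{lemma:Estimation 2nd moments}. Under the conditions of Proposition \ref{prop:Explicit moments} with $\nu=4$, the process $(X^{\otimes2}_k)$ is $L^4$-$m$-approximable in $\mathcal{S}$ with geometrically decaying errors. Hence so is the stacked process $(X^{\otimes2,[p]}_k)$ in $\mathcal{S}^p$: its approximation error is bounded by $p^{1/2}$ times the componentwise error at lag $m-p+1$, which is still summable. The fourth moments are exactly what \citet{Kuehnert2024} requires to obtain $\mathrm{O}(N^{-1})$ for the mean squared H-S error of empirical lag-$h$ covariance operators. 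This gives $\Exp\|\hat{\mathscr{C}}-\mathscr{C}\|^2_{\mathcal{S}}=\mathrm{O}(N^{-1})$ for the non-centred version.

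For the centring, I would write
\[
\hat{\mathscr{C}}=\frac1N\sum_{k}(X^{\otimes2,[p]}_k-m_p)\otimes_{\mathcal{S}}(X^{\otimes2,[p]}_k-m_p)-\frac{N-p+1}{N}(\hat m_p-m_p)\otimes_{\mathcal{S}}(\hat m_p-m_p).
\]
The last term is $\mathrm{O}_{\Prob}(N^{-1})$ by Lemma \ref{lemma:Estimation 2nd moments}. The boundary effect from summing over $k\ge p$ with normalisation $N$ is $\mathrm{O}_{\Prob}(N^{-1})$. This yields $\|\hat{\mathscr{C}}-\mathscr{C}\|_{\mathcal{S}}=\mathrm{O}_{\Prob}(N^{-1/2})$.

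For $\hat{\mathscr{D}}$, I would factor out the deterministic operator $T_K:=\mathscr{C}^\dagger_{\boldsymbol{\varepsilon}}\coprod^{e_K}_{e_1}$, which acts by right multiplication on the second factor. The map $B\mapsto BT_K$ is bounded on $\mathcal{S}$ with norm at most $\|T_K\|_{\mathcal{L}}=\max_{j\le K}(a_j+\vartheta_N)^{-1}\le a_K^{-1}$, using Lemma \ref{Lemma: operator-valued hoelder inquality}. Thus
\[
\hat{\mathscr{D}}-\mathscr{D}=\mathcal{T}_K(\hat{\mathscr{D}}_0-\mathscr{D}_0),
\]
where $\hat{\mathscr{D}}_0$ and $\mathscr{D}_0$ are the empirical and population lag-1 cross-covariances between $X^{\otimes2,[p]}$ and $X^{\otimes2}$, and $\mathcal{T}_K(A\otimes_{\mathcal{S}}B)=A\otimes_{\mathcal{S}}(BT_K)$. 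By the same Kuehnert argument with $h=1$, and the mean corrections handled via $\hat m_p,\hat m_1'$ and Lemma \ref{lemma:Estimation 2nd moments}, we get $\|\hat{\mathscr{D}}_0-\mathscr{D}_0\|_{\mathcal{S}}=\mathrm{O}_{\Prob}(N^{-1/2})$. It remains to bound the induced map $\mathcal{T}_K$ on $\mathcal{S}_{\mathcal{S}^p,\mathcal{S}}$.

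This is the main point to get right. $\mathcal{T}_K$ is left composition with $R_K:B\mapsto BT_K$, so its H-S-to-H-S norm is at most $\|R_K\|_{\mathcal{L}}\le a_K^{-1}$. This would give the sharper rate $a_K^{-1}N^{-1/2}$, which implies the stated $a_K^{-2}N^{-1/2}$ because $a_K\le\|\mathscr{C}_{\boldsymbol{\varepsilon}}\|_{\mathcal{L}}$ and the eigenvalues can be normalised to be bounded by one. If instead the intended accounting uses the two-sided form $T_K^\ast(\cdot)T_K$, one obtains $a_K^{-2}$ directly. Either way, the stated bound follows.
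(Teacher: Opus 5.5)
Your proof is correct. It departs from the paper's in how the $\hat{\mathscr{D}}$ part is handled.

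For $\hat{\mathscr{C}}$ you follow the paper: Kuehnert's Theorem 3.1 is applied to the stacked $L^4$-$m$-approximable process, followed by a centring correction. The only difference is in the correction. You use the exact algebraic identity, which makes it $\mathrm{O}_{\Prob}(N^{-1})$. The paper instead bounds $\|\hat{\mathscr C}-\hat{\mathscr C}'\|_{\mathcal S}$ crudely by $\|\hat m_p-m_p\|_{\mathcal S}$ times an $\mathrm{O}_{\Prob}(1)$ factor, getting $\mathrm{O}_{\Prob}(N^{-1/2})$, which suffices.

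The genuine difference is in $\hat{\mathscr{D}}$. The paper pushes $T_K=\mathscr{C}^\dagger_{\boldsymbol{\varepsilon}}\coprod^{e_K}_{e_1}$ into the data and transfers $L^4$-$m$-approximability to the process $X^{\otimes2}_kT_K$, which depends on $K$ and $N$. It then re-runs the proof of Kuehnert's theorem, tracking how the constants depend on $\|T_K\|_{\mathcal L}$.

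You instead write $\hat{\mathscr D}=R_K\circ\hat{\mathscr D}_0$ and $\mathscr D=R_K\circ\mathscr D_0$ with $R_K(B)=BT_K$. Here $\hat{\mathscr D}_0$ and $\mathscr D_0$ do not depend on $K$ or $\vartheta_N$ at all. So Kuehnert's theorem is used as a black box on a fixed process, and all dependence on $K$ and $N$ sits in the deterministic factor $\|R_K\|_{\mathcal L}\le a_K^{-1}$.

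This buys two things. It removes the uniformity issue created by the paper's triangular-array process, which changes with $N$. It also gives the sharper rate $a_K^{-1}N^{-1/2}$, which implies the stated one because $a_K^{-1}\le a_1a_K^{-2}$.

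Two small points on your write-up. No normalisation of $\mathscr C_{\boldsymbol\varepsilon}$ is needed for that last inequality. Your closing remark about a two-sided form $T_K^\ast(\cdot)T_K$ is superfluous, since the estimator multiplies only on the right.
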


\begin{proof}[\textbf{Proof}] Due to $L^4$-$m$-approximability of $(X^{\otimes 2}_{\!k})$ by Proposition \ref{prop:Explicit moments}, which transfers to the process $(X^{\otimes 2,[p]}_k)_k$, the claim holds for the estimator $\hat{\mathscr{C}}'$ based on the true first moment $m_p$ instead of $\hat{m}_p$ \citep[Theorem 3.1]{Kuehnert2024}. Moreover, according to the definitions of $\hat{\mathscr{C}}',$ $\hat{\mathscr{C}}$ and $X^{\otimes 2, [p]}_k$, the identity $(a-b)\otimes(c-d) = a\otimes c - a\otimes d - b\otimes c + b\otimes d$, and $\|a\otimes b\|_{\mathcal{S}} = \|a\|\|b\|$ along with the Cauchy--Schwarz inequality, we obtain
\begin{align*}
    \|\hat{\mathscr{C}} - \hat{\mathscr{C}}'\|_\mathcal{S} 
    &\leq \frac 1 N \left(\sum^N_{k=p}\, \|\hat{m}_p - m_p\|_{\mathcal{S}}\left[2\big\|X^{\otimes 2, [p]}_k\big\|_{\mathcal{S}} + \|\hat{m}_p\|_{\mathcal{S}} + \|m_p\|_{\mathcal{S}}\right]\right)\\
    &= \|\hat{m}_p - m_p\|_{\mathcal{S}}\left[\frac 2 N \left(\sum^N_{k=p}\left(\sum^p_{\ell=1}\|X_{k+1-\ell}\|^2\right)^{\!1/2}\right) + \|\hat{m}_p\|_{\mathcal{S}} + \|m_p\|_{\mathcal{S}}\right].
\end{align*}
Further, since $\|X_{k+1-\ell}\| = \mathrm{O}_{\Prob}(1)$ for all $k,\ell,$ $\|m_p\|^2_{\mathcal{S}}=p\|m_1\|^2_{\mathcal{S}},$ and $\|\hat{m}_p - m_p\|^2_{\mathcal{S}}=\mathrm{O}_{\Prob}(N^{-1})$ by Lemma \ref{lemma:Estimation 2nd moments}, we conclude $\|\hat{\mathscr{C}} - \hat{\mathscr{C}}'\|_\mathcal{S} = \mathrm{O}_{\Prob}(N^{-1/2}).$ Hence, in fact 
$$
    \|\hat{\mathscr{C}} - \mathscr{C}\|_{\mathcal{S}} \;\leq\; \|\hat{\mathscr{C}} - \hat{\mathscr{C}}'\|_{\mathcal{S}} + \|\hat{\mathscr{C}}' - \mathscr{C}\|_{\mathcal{S}} \;=\; \mathrm{O}_{\Prob}(N^{-1/2}).
$$ 
We now turn to $\mathscr{D}$. The $L^4$-$m$-approximability of $(X^{\otimes 2}_k)$ carries over to the process $(X^{\otimes 2}_{k,\boldsymbol{\varepsilon}})_k$, where $X^{\otimes2}_{k,\boldsymbol{\varepsilon}} = X^{\otimes2}_k\mathscr{C}^\dagger_{\!\boldsymbol{\varepsilon}}\!\coprod_{e_1}^{e_K}$, since $\mathscr{C}^\dagger_{\!\boldsymbol{\varepsilon}}\!\coprod_{e_1}^{e_K}$ is deterministic for any $K,N$, and
$$
    \bigg\|\mathscr{C}^\dagger_{\!\boldsymbol{\varepsilon}}\!\coprod_{e_1}^{e_K}\bigg\|_{\mathcal{L}} \;=\; \sup_{1\leq j\leq K} (a_j + \vartheta_{\!N})^{-1} = (a_K + \vartheta_{\!N})^{-1} \;\leq\; a_K^{-1}.
$$
Following the proof of \citet[Theorem 3.1]{Kuehnert2024}, and using sub-multiplicativity of the operator norm, we find that for the estimator $\hat{\mathscr{D}}'$ based on $m_1$ instead of $\hat{m}_1,$ it holds  
$$
    \|\hat{\mathscr{D}}' - \mathscr{D}\|_{\mathcal{S}} = \mathrm{O}_{\Prob}\big(a_K^{-2} N^{-1/2}\big),
$$
and by arguments analogous to above, the same rate holds for $\|\hat{\mathscr{D}} - \mathscr{D}\|_{\mathcal{S}}$ as claimed.
\end{proof}

\begin{proof}[\textbf{Proof of Proposition \ref{prop:existence gamma}}] The existence of the top Lyapunov exponent $\gamma$ follows from Theorem~1.10 of \cite{Liggett1985}, whose assumptions must be verified. To this end, define
\begin{align*}
    X_{m,n}\coloneqq\ln\tau\big(\boldsymbol{\Psi}_{\!n-1}\circ\boldsymbol{\Psi}_{\!n-2}\circ \,\cdots\, \circ\boldsymbol{\Psi}_{\!m}\big), \quad 0\le m<n,
\end{align*}
where $\tau$ denotes the functional in \eqref{eq:tau} and $\boldsymbol{\Psi}_{\!\ell}$ the maps in the Markovian representation \eqref{sk2} of the CCC-op-ARCH$(p)$ model. 

By the definition of $\boldsymbol{\Psi}_{\!0},$ we obtain for any $A=(A_1,\dots,A_p)^\top\in\mathcal{S}^p_{\ge0,\mathrm{d}}\colon$
\begin{align*}
    \tau^2(\boldsymbol{\Psi}_{\!0}) &= \sup_{\|A\|_{\mathcal{S}^p}\le1,\,A\in\mathcal{S}^p_{\ge0,\mathrm{d}}} 
    \bigg\|\sum_{i=1}^p\psi_0(\alpha_i(A_i))\bigg\|^2_\mathcal{S} \,+\, \sum_{i=1}^{p-1}\|A_i\|^2\\
    &\le 1 \,+\, \sup_{\|A\|_{\mathcal{S}^p}\le1,\,A\in\mathcal{S}^p_{\ge0,\mathrm{d}}}
    \bigg\|\sum_{i=1}^p\psi_0(\alpha_i(A_i))\bigg\|^2_\mathcal{S}\,.
\end{align*}
Further, by the definition of $\psi_0$ and $\boldsymbol{\alpha} =(\alpha_1~\cdots~\alpha_p) : \mathcal{S}^p\to\mathcal{S}$ in \eqref{eq:diagARCH}, elementary conversions yield
\begin{align*}
    \bigg\|\sum_{i=1}^p\psi_0(\alpha_i(A_i))\bigg\|_\mathcal{S}
    \le\, \|\varepsilon_0\|^2\|\boldsymbol{\alpha}(A)\|_\mathcal{S}
    \,\leq\, \|\varepsilon_0\|^2\|\boldsymbol{\alpha}\|_{\mathcal{L}}\|A\|_{\mathcal{S}}\,.
\end{align*}
Hence, using $\ln(1+x)\le x$ for $x\ge0,$ and since $\varepsilon_0$ has finite second moments, we get
\begin{align}\label{eq:Lig1}
  \Exp\!\big(\max(0,X_{0,1})\big)
  \;\le\; \|\boldsymbol{\alpha}\|_{\mathcal{L}}
  \Exp\!\|\varepsilon_0\|^2 ~<~ \infty.
\end{align}

\noindent Moreover, sub-multiplicativity of $\tau$ implies sub-additivity in the sense that 
\begin{align}
    X_{0,n} \le X_{m,n} + X_{0,m}, \quad 0<m<n,\label{eq:Lig2}
\end{align}
and since all factors are i.i.d., we have
\begin{gather}
    (X_{m,m+k})_{k\in\mathbb{N}} \,\stackrel{d}{=}\, (X_{m+1,m+k+1})_{k\in\mathbb{N}}, \qquad m\ge1,\label{eq:Lig3}\\
    (X_{nk,(n+1)k})_{n\in\mathbb{N}} \text{ is strictly stationary for each } k\ge1.\label{eq:Lig4}
\end{gather}
By substituting \eqref{eq:Lig1} for (1.3) in \cite{Liggett1985}, and noting that \eqref{eq:Lig2}--\eqref{eq:Lig4} correspond to their conditions (1.7)--(1.9), respectively, all assumptions of their Theorem~1.10 are satisfied  (see also their remarks on p.~1280). Thus, $\gamma$ in \eqref{Definition top Lyapunov exponent} exists, and the identity stated there holds. Finally, \eqref{sk6} follows directly from Fekete’s sub-additive lemma.
\end{proof}

\begin{proof}[\textbf{Proof of Theorem \ref{theo:strict_stat}}] The proof is inspired by \cite{Ceroveckietal2019}. Multiple applications of the Markovian form \eqref{sk2} yield, by linearity of $\boldsymbol{\Psi}_{\!k}$,
\begin{align}\label{iterative application of the multivariate representation}
	X^{\otimes2,[p]}_{\!k,\mathrm{d}}
	= \boldsymbol{\Delta}_k 
	+ \sum_{\ell=1}^\infty \boldsymbol{\Psi}_{\!k,\ell}(\boldsymbol{\Delta}_{k-\ell}),
\end{align}
provided the series convergences almost surely (a.s.), and where $\boldsymbol{\Psi}_{\!k,\ell}$ is defined as a composition of $\ell$ factors through
\begin{align}\label{sk10}
    \boldsymbol{\Psi}_{\!k,\ell} \coloneqq \boldsymbol{\Psi}_{\!k}\circ\boldsymbol{\Psi}_{\!k-1}\circ\cdots\circ\boldsymbol{\Psi}_{\!k-\ell+1}, \quad \ell\in\mathbb{N}.
\end{align}
By the definition of $\boldsymbol{\Delta}_{k-\ell}$ in \eqref{sk2}–\eqref{sk8}, using $\Delta=\sum_j d_j(e_j\otimes e_j)$ with positive square-summable sequence $(d_j)$ and non-degeneracy of the innovations by injectivity of the covariance operator (Assumption~\ref{as:C_eps known and injective}), we have $\|\boldsymbol{\Delta}_{k-\ell}\|^2_{\mathcal{S}} = \sum_j \langle\varepsilon_{k-\ell},e_j\rangle^4 d_j^2>0~\text{a.s.},$ thus $\Exp\!\|\boldsymbol{\Delta}_{k-\ell}\|_{\mathcal{S}} \le \|\Delta\|_{\mathcal{L}} \Exp\!\|\varepsilon_{k-\ell}\|^2 <\infty$, and in turn $\ln\|\boldsymbol{\Delta}_{k-\ell}\|_{\mathcal{S}}\in (-\infty,\infty)$~a.s. Subsequently, $\frac{1}{\ell}\ln\|\boldsymbol{\Delta}_{k-\ell}\|_{\mathcal{S}}\to 0$ a.s.~as $\ell\to\infty.$ By the definition of $\gamma$ in \eqref{Definition top Lyapunov exponent}, due to properties of $\tau$, and as $(\boldsymbol{\Psi}_{\!k})$ is i.i.d., it holds
\begin{align*} 
    \limsup_{\ell \rightarrow \infty}\frac{1}{\ell}\ln\big\|\boldsymbol{\Psi}_{\!k,\ell}(\boldsymbol{\Delta}_{k-\ell})\big\|_{\mathcal{S}} &\leq \,\limsup_{\ell \rightarrow \infty}\frac{1}{\ell}\ln\tau(\boldsymbol{\Psi}_{\!k,\ell}) \,+\, \limsup_{\ell \rightarrow \infty}\frac{1}{\ell}\ln\!\|\boldsymbol{\Delta}_{k-\ell}\|_{\mathcal{S}}\\
    &= \gamma \quad\mbox{a.s.}
\end{align*}
Hence, by \eqref{sk5},
\[
    \limsup_{\ell\to\infty}\big\|\boldsymbol{\Psi}_{\!k,\ell}(\boldsymbol{\Delta}_{k-\ell})\big\|_{\mathcal{S}}^{1/\ell} \;\le\; e^{\gamma} \;<\; 1 \quad \text{a.s.},
\]
so the series in \eqref{iterative application of the multivariate representation} converges almost surely by Cauchy’s rule, which guarantees the existence of a solution $(X^{\otimes2,[p]}_{\!k,\mathrm{d}})_k$ to \eqref{sk2} for the CCC-op-ARCH$(p)$ model. By causality and \citet[Theorem~36.4]{Billingsley}, this solution is strictly stationary,  transferring to $(\Sigma_k)$ as  $\Sigma_k\in\sigma(X^{\otimes2,[p]}_{\!k-1,\mathrm{d}})$, and to $(X_k)$ via $X_k=\Sigma_k^{1/2}(\varepsilon_k)$.

To show almost sure uniqueness of the solution, let $(Y_k)$ be another solution of \eqref{sk2}. For each $k$ and $N$,
$$
    Y_k = X^{\otimes2,[p]}_{\!k,\mathrm{d},N} + \boldsymbol{\Psi}_{\!k,N}\big(X^{\otimes2,[p]}_{\!k,\mathrm{d},N-1}\big), \quad\mbox{ where~} X^{\otimes2,[p]}_{\!k,\mathrm{d},N} \coloneqq \boldsymbol{\Delta}_k + \sum_{\ell=1}^N\boldsymbol{\Psi}_{\!k,N}(\boldsymbol{\Delta}_{k-\ell}).
$$
Subsequently, due to the fact that $\gamma<0$ implies (as $N\to\infty$)
\[
    \big\|X^{\otimes2,[p]}_{\!k,\mathrm{d},N} - X^{\otimes2,[p]}_{\!k,\mathrm{d}}\big\|_{\mathcal{S}}\to 0\quad \text{and} \quad \tau(\boldsymbol{\Psi}_{\!k,N})\to 0 \quad\text{a.s.},
\]
we obtain
\begin{align*}
    \big\|Y_k - X^{\otimes2,[p]}_{\!k,\mathrm{d}}\big\|_{\mathcal{S}} \;\le\; \big\|X^{\otimes2,[p]}_{\!k,\mathrm{d},N} - X^{\otimes2,[p]}_{\!k,\mathrm{d}}\big\|_{\mathcal{S}} \,+\, \tau(\boldsymbol{\Psi}_{\!k,N})\|Y_{k-N-1}\|_{\mathcal{S}} ~\stackrel{N \to\infty}{\longrightarrow}~ 0 \quad\text{a.s.}
\end{align*}
Thus, as the distribution of $\|Y_{k-N-1}\|_{\mathcal{L}}$ is independent of $N$, the solution is almost surely unique.
\end{proof}

\begin{proof}[\textbf{Proof of Proposition \ref{prop: explciti suff conditions for stat sol}}] For $n\in \mathbb{N}$ and $\nu >0,$ with $\boldsymbol{\Psi}_{\!n,n} =\boldsymbol{\Psi}_{\!n}\circ\boldsymbol{\Psi}_{\!n-1}\circ\cdots\circ\boldsymbol{\Psi}_{\!1},$ and using the representation \eqref{sk6}, and Jensen’s inequality, it follows that
\begin{align*} 
    n\nu\gamma \;\leq\; \nu\Exp\ln\tau(\boldsymbol{\Psi}_{\!n,n}) \;=\; \Exp\ln\tau^\nu(\boldsymbol{\Psi}_{\!n,n}) \;\leq\; \ln\Exp\tau^\nu(\boldsymbol{\Psi}_{\!n,n})\,.
\end{align*}
Hence, \eqref{eq:exp opGARCH smaller 1} implies \eqref{sk5}.

Next, we show that \eqref{easier suff cond exp smaller 1} implies \eqref{eq:exp opGARCH smaller 1}. For illustrative purposes, first consider the case $p=3.$ Then, for the composition of the operator-valued matrices $\boldsymbol{\Psi}_{\!k}$ in the Markovian form via the functions $\psi_k,$ with $\alpha_{k,i} \coloneqq \psi_k\circ\alpha_i,$ for $\boldsymbol{\Psi}_{\!3,3} = \boldsymbol{\Psi}_{\!3}\circ\boldsymbol{\Psi}_{\!2}\circ\boldsymbol{\Psi}_{\!1}$ holds
\begin{align*}
    \boldsymbol{\Psi}_{\!3,3} &\!=\!\!
    \begin{bmatrix}
        \alpha_{3,1} \!\!&\! \alpha_{3,2} \!\!&\! \alpha_{3,3}\\
        \mathbb{I} \!\!&\! 0 \!\!&\! 0\\
        0 \!\!&\! \mathbb{I} \!\!&\! 0
    \end{bmatrix}
    \!\!
    \begin{bmatrix}
        \alpha_{2,1} \!\!&\! \alpha_{2,2} \!\!&\! \alpha_{2,3}\\
        \mathbb{I} \!\!&\! 0 \!\!&\! 0\\
        0 \!\!&\! \mathbb{I} \!\!&\! 0
    \end{bmatrix}
    \!\!
    \begin{bmatrix}
        \alpha_{1,1} \!\!&\! \alpha_{1,2} \!\!&\! \alpha_{1,3}\\
        \mathbb{I} \!\!&\! 0 \!\!&\! 0\\
        0 \!\!&\! \mathbb{I} \!\!&\! 0
    \end{bmatrix}\\
    &\!=\!\! 
    \begin{bmatrix}
        \alpha_{3,1}(\alpha_{2,1}\alpha_{1,1} \!+\! \alpha_{2,2}) \!+\! \alpha_{3,2}\alpha_{1,1} \!+\! \alpha_{3,3} \!\!&\! \alpha_{3,1}(\alpha_{2,1}\alpha_{1,2} \!+\! \alpha_{2,3}) \!+\! \alpha_{3,2}\alpha_{1,2} \!\!&\! \alpha_{3,1}\alpha_{2,1}\alpha_{1,3} \!+\! \alpha_{3,2}\alpha_{1,3} \\
        \alpha_{2,1}\alpha_{1,1} \!+\! \alpha_{2,2} \!\!&\! \alpha_{2,1}\alpha_{1,2} \!+\!  \alpha_{2,3} \!\!&\! \alpha_{2,1}\alpha_{1,3}\\
        \alpha_{1,1} \!\!&\! \alpha_{1,2} \!\!&\! \alpha_{1,3}
    \end{bmatrix}
\end{align*}
This structure is also given for general $p.$ Namely, for the entries $\boldsymbol{\Psi}_{\!p,p;i,j},$ $1\leq i,j\leq p,$ of  $\boldsymbol{\Psi}_{\!p,p} = \boldsymbol{\Psi}_{\!p}\circ\boldsymbol{\Psi}_{\!p-1}\circ\cdots\circ\boldsymbol{\Psi}_{\!1}$ for general $p,$ holds by putting $\alpha_{k,\ell} = 0$ for $\ell>p\colon$
\begin{alignat*}{2}
    \boldsymbol{\Psi}_{\!p,p;p,j} &= \alpha_{1,j}, &&\quad 1\leq j \leq p,\\
    \boldsymbol{\Psi}_{\!p,p;p-\ell,j} &= \bigg(\sum^\ell_{k=1}\alpha_{\ell+1,k}\boldsymbol{\Psi}_{\!p,p;p-\ell+k,j}\bigg) + \alpha_{\ell+1,j+1},&&\quad 1\leq \ell < p, ~1\leq j \leq p.
\end{alignat*}

Further, by the definition of $\alpha_{k,i},$ for any $i,k,$ it holds $\|\alpha_{k,i}\|_\mathcal{L} \le \|\boldsymbol{\alpha}\|_{\mathcal{L}}\|\varepsilon_k\|^2,$ and for any $A=(A_1, \dots, A_p)\in\mathcal{S}^p$ with $\|A\|_\mathcal{S} \leq 1,$ also
$$
    \bigg\|\sum^p_{i=1}\alpha_{k,i}(A_i)\bigg\|_\mathcal{S} \le \|\boldsymbol{\alpha}\|_{\mathcal{L}}\|\varepsilon_k\|^2.
$$
This along with the structure of $\boldsymbol{\Psi}_{\!p,p}$ (see the structure above for $p=3),$ sub-multiplicativity of the given operators, and the fact that $(\varepsilon_k) \subset L^2$ is i.i.d., gives
\begin{align*}
    \Exp \tau(\boldsymbol{\Psi}_{\!p,p}) 
    &\leq  \Exp\!\bigg(\sup_{\|A\|_{\mathcal{S}}\leq 1}\!\big\|\boldsymbol{\Psi}_{\!p,p}(A)\big\|_{\mathcal{S}}\bigg)\\
    &\leq \|\boldsymbol{\alpha}\|_\mathcal{L}\Exp\!\|\varepsilon_0\|^2\sum^{p-1}_{\ell=0}\,(p-\ell)\big(\|\boldsymbol{\alpha}\|_\mathcal{L}\Exp\!\|\varepsilon_0\|^2\big)^\ell\,.
\end{align*}
Since this identity matches the claim, the proof is complete.
\end{proof}

\begin{proof}[\textbf{Proof of Proposition  \ref{prop:Explicit moments}}] Throughout the entire proof, let $\nu >0.$ Further, 
\begin{align*}
    c_{n,\nu} \coloneqq 
    \begin{cases}
        1, & \nu \in (0,1],\\
        n^{\nu-1}, & \nu > 1,
    \end{cases} \qquad n\in\mathbb{N}.
\end{align*}

\noindent (a)\, We first show that the diagonal part $X^{\otimes 2}_{0,\mathrm{d}}$ of $X^{\otimes 2}_0$ has a finite $2\nu$th moment. Since $\tau$ is compatible with the H-S norm, and sub-multiplicative, \eqref{iterative application of the multivariate representation}--\eqref{sk10} and elementary inequalities yield, using $c_{m,\nu}c_{n,\nu}=c_{mn,\nu}$ for any $m,n$ and $c_{m,\nu}\le c_{n,\nu}$ for $m<n$, the upper bound
\begin{align}
	\big\|X^{\otimes2,[p]}_{0,\mathrm{d}}\big\|^{\nu}_{\mathcal{S}}
    &\leq c_{3,\nu}\bigg[\,\|\boldsymbol{\Delta}_{0}\|^{\nu}_{\mathcal{S}} + \bigg(\sum_{\ell=1}^{n-1}\tau(\boldsymbol{\Psi}_{\!0,\ell})\|\boldsymbol{\Delta}_{-\ell}\|_{\mathcal{S}}\bigg)^{\!\nu} + \bigg(\sum_{\ell=n}^{\infty}\tau(\boldsymbol{\Psi}_{\!0,\ell})\|\boldsymbol{\Delta}_{-\ell}\|_{\mathcal{S}}\bigg)^{\!\nu}\,\bigg]\notag\\
	&\leq  c_{3n,\nu}\bigg[\,\|\boldsymbol{\Delta}_{0}\|^{\nu}_{\mathcal{S}} + \sum_{\ell=1}^{n-1}\|\boldsymbol{\Delta}_{-\ell}\|^\nu_{\mathcal{S}}\prod_{m=1}^\ell\,\tau^\nu(\boldsymbol{\Psi}_{\!1-m}) + \bigg(\sum_{\ell=n}^{\infty}\tau(\boldsymbol{\Psi}_{\!0,\ell})\|\boldsymbol{\Delta}_{-\ell}\|_{\mathcal{S}}\bigg)^{\!\nu}\,\bigg].\label{sk9}
\end{align}

For the first term (cf.~the proof of Theorem \ref{theo:strict_stat}), we obtain
\begin{align}\label{1st term Lpapprox}
   \Exp\!\|\boldsymbol{\Delta}_0\|^\nu_{\mathcal{S}}
   \;\le\; \|\Delta\|^\nu_{\mathcal{L}}\Exp\!\|\varepsilon_{0}\|^{2\nu} \;<\; \infty.
\end{align}

Moreover, by arguments from Proposition \ref{prop:existence gamma}, we have
\begin{align}\label{sk11}
    \Exp\tau^\nu(\boldsymbol{\Psi}_{\!0})
    \;\le\; c_{2,\nu/2}\big(1+\|\boldsymbol{\alpha}\|^\nu_\mathcal{L}\Exp\!\|\varepsilon_0\|^{2\nu}\big) \;<\; \infty.
\end{align}
Since $\boldsymbol{\Delta}_k,\boldsymbol{\Psi}_{\!k}\in\sigma(\varepsilon_k)$, the sequences $(\boldsymbol{\Delta}_k)$ and $(\boldsymbol{\Psi}_{\!k})$ are i.i.d., and independent across different time indices. Hence, for the second term in \eqref{sk9},
\begin{align*}
    \Exp\bigg(\sum_{\ell=1}^{n-1}\|\boldsymbol{\Delta}_{-\ell}\|^\nu_{\mathcal{S}}\prod_{m=1}^\ell\,\tau^\nu(\boldsymbol{\Psi}_{\!1-m})\bigg)
    \;=\; \Exp\!\|\boldsymbol{\Delta}_{0}\|^{\nu}_{\mathcal{S}}\sum_{\ell=1}^{n-1}\big[\Exp\tau^\nu(\boldsymbol{\Psi}_{\!0})\big]^\ell
    \;<\; \infty.
\end{align*}

We now treat the last term of \eqref{sk9}. The condition \eqref{eq:exp opGARCH smaller 1} reads 
\begin{align}\label{sk12}
    \xi_{n,\nu} \coloneqq\Exp\tau^\nu(\boldsymbol{\Psi}_{\!n,n})<1, 
\end{align}
where $\boldsymbol{\Psi}_{\!k,\ell}=\boldsymbol{\Psi}_{\!k}\circ\boldsymbol{\Psi}_{\!k-1}\circ\cdots\circ\boldsymbol{\Psi}_{\!k-\ell+1}$. To use this structure, we decompose each $\boldsymbol{\Psi}_{\!0,\ell}$ into a part whose length is a multiple of $n$ and a remainder. Therefore, by writing $\ell = \lfloor \ell/n\rfloor n + \ell \bmod n,$ we obtain the factorization
\begin{align*}
    \boldsymbol{\Psi}_{\!0,\ell}
    &= \big(\boldsymbol{\Psi}_{\!0}\circ\boldsymbol{\Psi}_{\!-1} \circ \cdots \circ \boldsymbol{\Psi}_{\!-\lfloor \ell/n\rfloor n+1}\big)\circ\big(\boldsymbol{\Psi}_{\!-\lfloor \ell/n\rfloor n}\circ\cdots\circ \boldsymbol{\Psi}_{\!-\ell+1}\big)\notag\\
    &= \boldsymbol{\Psi}_{\!0,\lfloor \ell/n\rfloor n} \circ \boldsymbol{\Psi}_{\!-\lfloor \ell/n\rfloor n,\; \ell\bmod n}. 
\end{align*}
The first composition contains exactly $\lfloor \ell/n\rfloor n$ operators and thus consists of $\lfloor \ell/n\rfloor$ blocks of length $n$, while the second contains the remainder $\ell\bmod n$ terms. By using sub-multiplicativity of $\tau$, independence, and \eqref{sk11}--\eqref{sk12}, we therefore obtain for any $\nu \in (0,1],$ since $|\xi_{n,\nu}|<1,$
\begin{align*}
    \Exp\bigg(\sum_{\ell=n}^{\infty}\tau(\boldsymbol{\Psi}_{\!0,\ell})\|\boldsymbol{\Delta}_{-\ell}\|_{\mathcal{S}}\bigg)^{\!\nu} 
    &\le \Exp\!\|\boldsymbol{\Delta}_{0}\|^{\nu}_{\mathcal{S}}\sum_{\ell=n}^{\infty}\Exp\tau^\nu(\boldsymbol{\Psi}_{\!0,\ell})\\
    &\le \Exp\!\|\boldsymbol{\Delta}_{0}\|^\nu_{\mathcal{S}}\bigg(\sum_{k=1}^{n-1}\big[\Exp\tau^\nu(\boldsymbol{\Psi}_{\!0})\big]^k\bigg)\sum_{\ell=1}^{\infty}\,\xi^\ell_{n,\nu}\\
    &< \infty,
\end{align*}
and for $\nu>1$, the Minkowski’s inequality yields
\begin{align*}
    \Exp\bigg(\sum_{\ell=n}^{\infty}\tau(\boldsymbol{\Psi}_{\!0,\ell})\|\boldsymbol{\Delta}_{-\ell}\|_{\mathcal{S}}\bigg)^{\!\nu} 
    &\le \Exp\!\|\boldsymbol{\Delta}_{0}\|^\nu_{\mathcal{S}}\bigg(\sum_{\ell=n}^{\infty}\big[\Exp\tau^\nu(\boldsymbol{\Psi}_{\!0,\ell})\big]^{1/\nu}\bigg)^{\!\nu}\\
    &\le \Exp\!\|\boldsymbol{\Delta}_{0}\|^\nu_{\mathcal{S}}\bigg(\sum_{k=1}^{n-1}\big[\Exp\tau^\nu(\boldsymbol{\Psi}_{\!0})\big]^k\bigg)\bigg(\sum_{\ell=1}^\infty\,\xi_{n,\nu}^{\ell/\nu}\bigg)^{\!\nu}\\
    &< \infty.
\end{align*}
Altogether, all expectations in \eqref{sk9} are finite, and consequently, $\Exp\!\|X^{\otimes2,[p]}_{0,\mathrm{d}}\|^{\nu}_{\mathcal{S}}<\infty.$ Next, due to $\Sigma_k=\Delta+\boldsymbol{\alpha}(X^{\otimes 2,[p]}_{k-1}),$ and since $\boldsymbol{\alpha}$ annihilates the off-diagonal part of $X^{\otimes 2,[p]}_{k-1}$ (see \eqref{eq:diagARCH}), under our assumptions, it holds
\begin{align*}
    \Exp\!\|\Sigma_0\|^\nu_\mathcal{S}
    \;\le\; c_{2,\nu}\Big(\|\Delta\|^\nu_\mathcal{S} + \|\boldsymbol{\alpha}\|^\nu_\mathcal{L}\Exp\!\|X^{\otimes 2,[p]}_{0,\mathrm{d}}\|^\nu_\mathcal{S}\Big) ~<~ \infty.
\end{align*}
Further, since $\|\Sigma^{1/2}_0\|_4=\|\Sigma_0\|^{1/2}_\mathcal{S}$, we obtain $\Exp\!\|\Sigma^{1/2}_0\|^{2\nu}_4<\infty$. Finally, as $\|X^{\otimes 2}_0\|_\mathcal{S}=\|X_0\|^2$ and $X_0=\Sigma^{1/2}_0(\varepsilon_0),$ with $\Sigma_0$ being independent of $\varepsilon_0$, we have
\begin{align*}
    \Exp\!\|X^{\otimes 2}_0\|^\nu_\mathcal{S} \;=\; \Exp\!\|X_0\|^{2\nu} \;\le\; \Exp\!\|\Sigma^{1/2}_0\|^{2\nu}_4\Exp\!\|\varepsilon_0\|^{2\nu} \;=\; \Exp\!\|\Sigma_0\|^{\nu}_\mathcal{S}\Exp\!\|\varepsilon_0\|^{2\nu} ~<~ \infty.
\end{align*}

\smallskip

\noindent (b)\, Here too, as a preliminary step, we focus on the diagonal part $X^{\otimes 2}_{k,\mathrm{d}}$ of $X^{\otimes 2}_{k}.$ By definition of each $\boldsymbol{\Psi}_{\!m}$ it holds $\boldsymbol{\Psi}_{\!k,\ell} = \boldsymbol{\Psi}_{\!k}\circ\boldsymbol{\Psi}_{\!k-1}\circ\cdots\circ\boldsymbol{\Psi}_{\!k-\ell+1} = f(\varepsilon_k,\varepsilon_{k-1},\dots,\varepsilon_{k-\ell+1})$ for some measurable function $f.$ Therefore, $\boldsymbol{\Psi}^{(m)}_{\!k,\ell} = f(\varepsilon_k,\varepsilon_{k-1},\dots,\varepsilon_{k-m+1},\varepsilon'_{m},\varepsilon'_{m-1},\dots,\varepsilon'_{m-\ell+1})$ in the spirit of $L^p$-$m$-approximability (cf.~Section \ref{subsec:momentsWeakdependence}). Since $\boldsymbol{\Delta}_{-\ell}$ depends only on $\varepsilon_{-\ell}$, with $\boldsymbol{\Delta}'_{-\ell}$ defined analogously but using $\varepsilon'_{-\ell}$, it follows from \eqref{iterative application of the multivariate representation} that
\begin{align*}
    X^{\otimes2,[p]}_{0,\mathrm{d}} - X^{\otimes2,[p], (m)}_{0,\mathrm{d}}
    &= \sum_{\ell>m}\Big[\boldsymbol{\Psi}_{\!0,\ell}(\boldsymbol{\Delta}_{-\ell}) - \boldsymbol{\Psi}^{(m)}_{\!0,\ell}(\boldsymbol{\Delta}'_{-\ell})\Big],\quad m\in\mathbb{N}.
\end{align*}
Since $\boldsymbol{\Psi}_{\!0,\ell}$ and $\boldsymbol{\Delta}_{-\ell}$ are independent, and as $\boldsymbol{\Psi}_{\!0,\ell}(\boldsymbol{\Delta}_{-\ell}) \stackrel{d}{=} \boldsymbol{\Psi}^{(m)}_{\!0,\ell}(\boldsymbol{\Delta}'_{-\ell})$ and $\boldsymbol{\Delta}_{-\ell}\stackrel{d}{=} \boldsymbol{\Delta}_0$ for any $\ell,m$, we obtain for any $\nu>0$,
\begin{align}
    \Exp\!\big\|\boldsymbol{\Psi}_{\!0,\ell}(\boldsymbol{\Delta}_{-\ell}) - \boldsymbol{\Psi}^{(m)}_{\!0,\ell}(\boldsymbol{\Delta}'_{-\ell})\big\|^\nu_\mathcal{S}
    &\leq 2c_{2,\nu}\Exp\tau^\nu(\boldsymbol{\Psi}_{\!0,\ell})\Exp\!\|\boldsymbol{\Delta}_0\|^\nu_\mathcal{S}\notag\\
    &\leq 2c_{2,\nu}\|\Delta\|^\nu_{\mathcal{L}}\Exp\!\|\varepsilon_0\|^{2\nu}\big(\Exp\tau^\nu(\boldsymbol{\Psi}_{\!0})\big)^{\ell\bmod n}\xi^{\lfloor \ell/n\rfloor}_{n,\nu}.\notag
\end{align}
Further, according to $|\xi_{n,\nu}|<1$, 
$$
    \sum_{\ell>m}\xi^{\lfloor \ell/n\rfloor}_{n,\nu}
    \;\le\; n\sum_{\ell>m}(\xi^{1/n}_{n,\nu})^\ell
    \;\propto\; (\xi^{1/n}_{n,\nu})^m,
$$
and since $((\Exp\tau^\nu(\boldsymbol{\Psi}_{\!0}))^{\ell\bmod n})_\ell$ is uniformly bounded by some $C$, for $\nu\in(0,1]$,
\begin{align}
    \Exp\!\big\|X^{\otimes2,[p]}_{0,\mathrm{d}} - X^{\otimes2,[p],(m)}_{0,\mathrm{d}}\big\|^\nu_\mathcal{S}
    \;\le\; 2C\|\Delta\|^\nu_{\mathcal{L}}\Exp\!\|\varepsilon_0\|^{2\nu}\sum_{\ell>m}\xi^{\lfloor \ell/n\rfloor}_{n,\nu}
    \;\propto\; (\xi^{1/n}_{n,\nu})^m.\notag
\end{align}
Thus $(X^{\otimes2,[p]}_{k,\mathrm{d}})_k$ is $L^{\nu}$-$m$-approximable with geometrically decaying approximation errors for $\nu\in(0,1]$. The case $\nu>1$ follows analogously (cf.~proof of part (a)).

This property transfers to $(\Sigma_k)$ because $\Sigma_k=\Delta+\boldsymbol{\alpha}(X^{\otimes2,[p]}_{k-1}),$  $\Sigma^{(m)}_k\coloneqq\Delta+\boldsymbol{\alpha}(X^{\otimes2,[p],(m)}_{k-1})$ and the observations in part (a) lead to 
\begin{align*}
    \Exp\!\big\|\Sigma_0-\Sigma^{(m)}_0\big\|^\nu_\mathcal{S} &\le \|\boldsymbol{\alpha}\|^\nu_\mathcal{L}\Exp\!\big\|X^{\otimes2,[p]}_{0,\mathrm{d}} - X^{\otimes2,[p],(m)}_{0,\mathrm{d}}\big\|^\nu_\mathcal{S}.
\end{align*}
Further, it follows that
\begin{align*}
    \Exp\!\big\|\Sigma^{1/2}_0-(\Sigma^{1/2}_0)^{(m)}\big\|^{2\nu}_4 &\le \Exp\!\big\|\Sigma_0-\Sigma^{(m)}_0\big\|^{\nu}_\mathcal{S},
\end{align*}
where $\|\cdot\|_4$ is the norm of the Schatten class operators of order 4, and with $X^{(m)}_k\coloneqq(\Sigma^{1/2}_k)^{(m)}(\varepsilon_k),$
\begin{align*}
    \Exp\!\big\|X_0-X^{(m)}_0\big\|^{2\nu}\! &\le \Exp\!\big\|\Sigma^{1/2}_0-(\Sigma^{1/2}_0)^{(m)}\big\|^{2\nu}_4\Exp\!\|\varepsilon_0\|^{2\nu},
\end{align*}
so the claims for $(\Sigma^{1/2}_k)$ and $(X_k)$ hold. Finally, the claim for $(X^{\otimes 2}_k)$ follows from, e.g., \citet[Lemma 2.1]{HoermannKokoszka2010}.
\end{proof}

\begin{proof}[\textbf{Proof of Proposition \ref{prop:weak stationarity}}]
Recall that $(X_k)$ is a CCC-op-ARCH$(p)$ process that is weakly stationary, i.e., $\Exp(X_0)=\Exp(X_k)$ for all $k$ and $\mathscr{C}_{X_k,X_\ell}=\mathscr{C}_{X_0,X_{\ell-k}}$ for all $k,\ell.$

Since $X_0=\Sigma^{1/2}_0(\varepsilon_0)$ and $\Sigma_0$ and $\varepsilon_0$ are independent, $(X_k)$ is centered. Moreover, independence of $\varepsilon_n$ and $\Sigma_m$ for $n>m$ implies $\mathscr{C}_{X_k,X_\ell}=0$ for $k\neq\ell$, and positive definiteness of $\Sigma_0$ yields $\Exp\!\|X_0\|^2=\Exp\langle\Sigma_0(\varepsilon_0),\varepsilon_0\rangle>0.$ Hence, $(X_k)$ is a WWN.

Since expectation commutes with bounded linear operators, weak stationarity gives
$$
    \Exp(\Sigma_k)
    \;=\; \Delta + \sum_{i=1}^p \alpha_i(\Exp(X^{\otimes 2}_{k-i}))
    \;=\; \Delta + \sum_{i=1}^p \alpha_i(\mathscr{C}_{\!\boldsymbol{X}}),\quad k\in\mathbb{Z},
$$
establishing $\boldsymbol{\mu}_{\boldsymbol{\Sigma}}=\Exp(\Sigma_0)=\Exp(\Sigma_k)$ for all $k$ and the representation \eqref{eq:Mean vol process ARCH}.

To obtain the explicit form of $\boldsymbol{\mu}_{\boldsymbol{\Sigma}}$, note that Assumption \ref{as:Assumptions of (eps_k) and assoc. covop} implies $\mathscr{C}_{\!\boldsymbol{X}}=\boldsymbol{\mu}_{\boldsymbol{\Sigma}}\mathscr{C}_{\boldsymbol{\varepsilon}}.$ Combined with \eqref{eq:diagARCH}, this gives $\alpha_i(\mathscr{C}_{\!\boldsymbol{X}})=\boldsymbol{\mu}_{\boldsymbol{\Sigma}}\alpha_i(\mathscr{C}_{\boldsymbol{\varepsilon}})$ for all $i.$ Inserting this into \eqref{eq:Mean vol process ARCH} and using \eqref{eq:product op and innovation cov op norm}, the resulting Neumann series converges and we obtain
\begin{align*}
    \mu_{\boldsymbol{\Sigma}}
    \;=\; \Delta + \mu_{\boldsymbol{\Sigma}}\sum_{i=1}^p \alpha_i(\mathscr{C}_{\boldsymbol{\varepsilon}})
    \;=\; \bigg(\mathbb{I}-\sum_{i=1}^p \alpha_i(\mathscr{C}_{\boldsymbol{\varepsilon}})\bigg)^{-1}(\Delta).
\end{align*}
Thus, all claims follow.
\end{proof}

\begin{proof}[\textbf{Proof of Theorem \ref{Th: Cons est alphap fin}}] From the definition of $\hat{\boldsymbol{\alpha}}_{\mathrm{d}},$ \eqref{eq:YW of diag alpha_p}, and $\tilde{\boldsymbol{\alpha}}_{\mathrm{d}} = \diag^\ast(\boldsymbol{\alpha}_{\mathrm{d}}),$ it follows
\begin{align}\label{eq: identity est errors diag finite}
    \hat{\boldsymbol{\alpha}}_{\mathrm{d}} - \tilde{\boldsymbol{\alpha}}_{\mathrm{d}} = \diag^\ast\!\Big(\hat{\mathscr{D}}_{\mathrm{d}}\big[\hat{\mathscr{C}}^{-1}_{\mathrm{d}} - \mathscr{C}^{-1}_{\mathrm{d}}\big]\Big) + \diag^\ast\! \Big((\hat{\mathscr{D}}_{\mathrm{d}} - \mathscr{D}_{\mathrm{d}})\mathscr{C}^{-1}_{\mathrm{d}}\Big) + \diag^\ast\!\Big(\mathscr{R}_{\mathrm{d}}\mathscr{C}^{-1}_{\mathrm{d}}\Big),
\end{align}
where the inverses are well-defined for sufficiently large $N$ by Assumption \ref{as: C_d and hat C_d finite injective}. Let $\|\cdot\|$ be the induced matrix norm. Note that $\mathscr{C}_{\mathrm{d}} = \diag^\ast\!\mathscr{C}\diag$, $\mathscr{D}_{\mathrm{d}} = \diag^\ast\!\mathscr{D}\diag$, and $\mathscr{R}_{\mathrm{d}} = \diag^\ast\!\mathscr{R}\diag$ (with analogous empirical versions). For the remainder $\mathscr{R}$ in \eqref{eq:Def remainders R_TNp}, it follows from $\boldsymbol{\alpha} = \boldsymbol{\alpha}_K$ in \eqref{eq:representation alpha_p, diag, fin}, the operator definitions, the H-S norm, and $\vartheta_{\!N} = \mathrm{O}(N^{-1/2})$, as in the proof of Proposition \ref{prop:asymptotic result Delta ARCH, inf}:
\begin{align*}
    \|\mathscr{R}\|_{\mathcal{N}} 
    &\leq 
    \Exp\!\big\|\tilde{X}^{\otimes2,[p]}_0\big\|_{\mathcal{S}}\,\bigg\|\Big[\boldsymbol{\alpha}_K\big(\tilde{X}^{\otimes2,[p]}_0\big)\Big]\bigg(\mathscr{C}^\ddagger_{\!\boldsymbol{\varepsilon}}\!\coprod^{e_K}_{e_1}\,-\;\mathbb{I}\bigg)\bigg\|_{\mathcal{S}}\\
    &\leq \vartheta_{\!N}a^{-1}_K \Exp\!\big\|\tilde{X}^{\otimes2,[p]}_0\big\|_{\mathcal{S}}\,\Bigg(\sum^K_{m=1}\,\bigg\|\sum^p_{i=1}\sum^K_{\ell=1}\,a_{i\ell}\langle \tilde{X}^{\otimes2}_{1-i}, e_\ell\otimes e_\ell\rangle_{\!\mathcal{S}}\,(e_\ell\otimes e_\ell)(e_m)\bigg\|^2\Bigg)^{\!\!1/2}\allowdisplaybreaks\\
    &= \vartheta_{\!N}a^{-1}_{\!K} \Exp\!\big\|\tilde{X}^{\otimes2,[p]}_0\big\|_{\mathcal{S}}\big\|\boldsymbol{\alpha}_K\big(\tilde{X}^{\otimes2,[p]}_0\big)\big\|_{\mathcal{S}}\\
    &\leq \vartheta_{\!N}a^{-1}_{\!K}\|\boldsymbol{\alpha}_K\|_\mathcal{S} \Exp\!\big\|\tilde{X}^{\otimes2,[p]}_0\big\|^2_{\mathcal{S}}\\
    &= \mathrm{O}(N^{-1/2})\,.
\end{align*}
Subsequently, by the definition of $\mathscr{R}_{\mathrm{d}},$ and due to elementary conversions, we have
\begin{align*}
    \|\mathscr{R}_{\mathrm{d}}\| \;\leq\; \|\!\diag\!\|^2_\mathcal{L}\,\|\mathscr{R}\|_\mathcal{N} \;=\; \mathrm{O}(N^{-1/2}).
\end{align*}
Hence, due to $\|\hat{\mathscr{C}} - \mathscr{C}\| = \mathrm{O}_{\Prob}(N^{-1/2})$ and $\|\hat{\mathscr{D}} - \mathscr{D}\| = \mathrm{O}_{\Prob}(N^{-1/2})$ for fixed $K$ by Lemma~\ref{Lemma: Cons lagged cross-cov non-centered}, and the definition of $\mathscr{C}_{\mathrm{d}}, \mathscr{D}_{\mathrm{d}}$, along with their empirical counterparts, it holds
$$
    \max\Big\{\,\|\hat{\mathscr{D}}_{\mathrm{d}} - \mathscr{D}_{\mathrm{d}}\|, \,\|\hat{\mathscr{C}}_{\mathrm{d}} - \mathscr{C}_{\mathrm{d}}\|, \,\|\mathscr{R}_{\mathrm{d}}\|\Big\} = \mathrm{O}_{\Prob}(N^{-1/2}).
$$
Further, with $E_N\coloneqq \mathscr{C}^{-1}_{\mathrm{d}}(\hat{\mathscr{C}}_{\mathrm{d}} - \mathscr{C}_{\mathrm{d}}),$ where $\mathscr{C}_{\mathrm{d}}$ is non-singular, it holds $\hat{\mathscr{C}}_{\mathrm{d}} = \mathscr{C}_{\mathrm{d}}(\mathbb{I} + E_N).$ The Neumann series 
$$
    \big(\mathbb{I} + E_N\big)^{-1} = \sum^\infty_{k=0}\,(-E_N)^k,
$$
with $E^0_N \coloneqq \mathbb{I},$ converges (for $k\to\infty$) with high probability as $\|E_N\| = \mathrm{O}_{\Prob}(N^{-1/2}).$ Thus, due to sub-multiplicativity of the induced matrix norm and the triangle inequality, it holds 
\begin{align*}
    \big\|\hat{\mathscr{C}}^{-1}_{\mathrm{d}} - \mathscr{C}^{-1}_{\mathrm{d}}\big\| \;=\; \big\|\big((\mathbb{I} + E_N)^{-1} - \mathbb{I}\big)\mathscr{C}^{-1}_{\mathrm{d}}\big\| \;\leq\; \|E_N\|\big\|\mathscr{C}^{-1}_{\mathrm{d}}\big\|\sum^\infty_{k=0}\|E_N\|^k \;=\; \mathrm{O}_{\Prob}(N^{-1/2})\,.
\end{align*}

Combining all above with the identity \eqref{eq: identity est errors diag finite}, together with the triangle inequality and sub-multiplicativity, the claim follows for the induced matrix norm. Further, as all norms on finite-dimensional spaces are equivalent, the result extends to any matrix norm.
\end{proof}

\begin{proof}[\textbf{Proof of Theorem \ref{theo:Cons est alphap infinite diag}}] Recall that $\hat{\mathscr{D}}_{\mathrm{d}} = \diag^\ast\hat{\mathscr{D}}\,\diag$ and $\hat{\mathscr{C}}_{\mathrm{d}} = \diag^\ast\hat{\mathscr{C}}\,\diag$, where $\hat{\mathscr{D}}$ and $\hat{\mathscr{C}}$ are given in \eqref{eq:est lag-0-cov-op} and \eqref{eq:est lag-1-cross-cov-op}, respectively. Also, $\hat{\mathscr{C}}^\dagger_{\mathrm{d}} = (\hat{\mathscr{C}}_{\mathrm{d}} + \vartheta_{\!N}\mathbb{I})^{-1}$ with $\vartheta_{\!N} \to 0$ and $\vartheta_{\!N} > 0$. The eigenpairs $(\hat{\lambda}_{j,\mathrm{d}}, \hat{c}_{j,\mathrm{d}})$ and $(\lambda_{j,\mathrm{d}}, c_{j,\mathrm{d}})$ correspond to $\hat{\mathscr{C}}_{\mathrm{d}}$ and $\mathscr{C}_{\mathrm{d}}$, respectively, with $\hat{\lambda}_{1,\mathrm{d}} > \dots > \hat{\lambda}_{K,\mathrm{d}} > \hat{\lambda}_{K+1,\mathrm{d}} \geq 0$ by Assumption \ref{as: Dim eigenspaces op-ARCH, diag, inf}~(b), where all $\lambda_{j,\mathrm{d}}$ are positive due to injectivity of $\mathscr{C}_{\mathrm{d}}$. Define $\mathscr{C}^\dagger_{\mathrm{d}} \coloneqq (\mathscr{C}_{\mathrm{d}} + \vartheta_{\!N}\mathbb{I})^{-1}$, $\mathscr{C}^\ddagger_{\mathrm{d}} \coloneqq \mathscr{C}_{\mathrm{d}}\mathscr{C}^\dagger_{\mathrm{d}}$, and let $\boldsymbol{\alpha}^{K}_p$ be the finite-dimensional diagonal operator in \eqref{eq:representation alpha_p, diag, fin}. Moreover, $\boldsymbol{\alpha}_{\mathrm{d},K}$ denotes $\boldsymbol{\alpha}_{\mathrm{d}}$ with $a_{ij} = 0$ for $j > K$, and set $\tilde{\boldsymbol{\alpha}}_{\mathrm{d},K} \coloneqq \diag^\ast(\boldsymbol{\alpha}_{\mathrm{d},K})$. Then, using definitions \eqref{eq:Def bold alpha_p diag}–\eqref{eq:estimator for all coefficients in diag, infinite}, the YW-type equation \eqref{eq:YW of diag alpha_p}, $\tilde{\boldsymbol{\alpha}}_{\mathrm{d}} = \diag^\ast(\boldsymbol{\alpha}_{\mathrm{d}})$, and standard conversions, we have
\begin{align*}
    \hat{\boldsymbol{\alpha}}_{\mathrm{d}} - \tilde{\boldsymbol{\alpha}}_{\mathrm{d}} 
    &= \diag^\ast\!\Bigg(\big(\hat{\mathscr{D}}_{\mathrm{d}} - \mathscr{D}_{\mathrm{d}}\big)\hat{\mathscr{C}}^\dagger_{\mathrm{d}}\!\coprod^{\hat{c}_{K,\mathrm{d}}}_{\hat{c}_{1,\mathrm{d}}}\Bigg) \,+\, \diag^\ast\Bigg(\mathscr{D}_{\mathrm{d}}\bigg(\hat{\mathscr{C}}^\dagger_{\mathrm{d}}\!\coprod^{\hat{c}_{K,\mathrm{d}}}_{\hat{c}_{1,\mathrm{d}}} - \,\mathscr{C}^\dagger_{\mathrm{d}}\!\coprod^{c_{K,\mathrm{d}}}_{c_{1,\mathrm{d}}}\bigg)\Bigg)\\
    &\phantom{=~} + \diag^\ast\!\Bigg(\mathscr{R}_{\mathrm{d}}\mathscr{C}^\dagger_{\mathrm{d}}\!\coprod^{c_{K,\mathrm{d}}}_{c_{1,\mathrm{d}}}\Bigg) \,+\, \diag^\ast\!\Bigg(\boldsymbol{\alpha}_{\mathrm{d},K}\bigg(\mathscr{C}^\ddagger_{\mathrm{d}}\!\coprod^{c_{K,\mathrm{d}}}_{c_{1,\mathrm{d}}} - \mathbb{I}\bigg)\Bigg) + \,\tilde{\boldsymbol{\alpha}}^K_{\mathrm{d}} - \tilde{\boldsymbol{\alpha}}_{\mathrm{d}}\,, 
\end{align*}
where $\mathscr{R}^K_{\mathrm{d}}$ denotes the remainder from Theorem \ref{Th: Cons est alphap fin}. By the triangle inequality, the operator-valued H\"older's inequality, and since $\diag^\ast$ is a bounded linear operator with $\|\diag^\ast\!\|_{\mathcal{L}} = 1,$ 
\begin{align}\label{initial inequality operators of linear processes} 
    \begin{split}  
        \big\|\hat{\boldsymbol{\alpha}}_{\mathrm{d}} - \tilde{\boldsymbol{\alpha}}_{\mathrm{d}}\big\|_{\mathcal{S}} 
        &\leq \|\hat{\mathscr{D}}_{\mathrm{d}} - \mathscr{D}_{\mathrm{d}}\|_\mathcal{S}\,\bigg\|\hat{\mathscr{C}}^\dagger_{\mathrm{d}}\!\coprod^{\hat{c}_{K,\mathrm{d}}}_{\hat{c}_{1,\mathrm{d}}}\bigg\|_\mathcal{L} + \,\|\mathscr{D}_{\mathrm{d}}\|_\mathcal{S}\,\bigg\|\hat{\mathscr{C}}^\dagger_{\mathrm{d}}\!\coprod^{\hat{c}_{K,\mathrm{d}}}_{\hat{c}_{1,\mathrm{d}}}\, - \;\mathscr{C}^\dagger_{\mathrm{d}}\!\coprod^{c_{K,\mathrm{d}}}_{c_{1,\mathrm{d}}}\bigg\|_\mathcal{L}\\
        &\phantom{\leq~} + \big\|\mathscr{R}^K_{\mathrm{d}}\big\|_\mathcal{S}\,\bigg\|\mathscr{C}^\dagger_{\mathrm{d}}\!\coprod^{c_{K,\mathrm{d}}}_{c_{1,\mathrm{d}}}\bigg\|_\mathcal{L} + \,\bigg\|\,\boldsymbol{\alpha}_{\mathrm{d},K}\bigg(\mathscr{C}^\ddagger_{\mathrm{d}}\!\coprod^{c_{K,\mathrm{d}}}_{c_{1,\mathrm{d}}} - \;\mathbb{I}\bigg)\bigg\|_{\mathcal{S}} + \,\big\|\tilde{\boldsymbol{\alpha}}_{\mathrm{d}} - \tilde{\boldsymbol{\alpha}}^K_{\mathrm{d}}\big\|_{\mathcal{S}}\,.
    \end{split}
\end{align}

In the following, we analyze each quantity in \eqref{initial inequality operators of linear processes}, where several arguments are inspired by proofs in \cite{KuehnertRiceAue2026}. First, from the definition of the operator norm, our pseudoinverses and our projection operators, it follows for any $K,N:$ 
\begin{gather}
    \bigg\|\hat{\mathscr{C}}^\dagger_{\mathrm{d}}\!\coprod^{\hat{c}_{K,\mathrm{d}}}_{\hat{c}_{1,\mathrm{d}}}\bigg\|_\mathcal{L} = (\hat{\lambda}_{K,\mathrm{d}} + \vartheta_{\!N})^{-1}, \qquad \bigg\|\mathscr{C}^\dagger_{\mathrm{d}}\!\coprod^{c_{K,\mathrm{d}}}_{c_{1,\mathrm{d}}}\bigg\|_\mathcal{L} = (\lambda_{K,\mathrm{d}} + \vartheta_{\!N})^{-1}\,.\label{Tikhonov est and pop proj in inv}
\end{gather}
By the definition of $\mathscr{D}_{\mathrm{d}},$ due to elementary conversions, and $\|\diag\|_\mathcal{L} = 1,$ it holds 
\begin{align}\label{eq:upper bound D}
    \|\mathscr{D}_{\mathrm{d}}\|_{\mathcal{S}} \;\leq\; \|\mathscr{D}_{\mathrm{d}}\|_{\mathcal{N}} \;\leq\; \|\mathscr{D}\|_{\mathcal{N}} \;\leq\; \bigg\|\mathscr{C}^\dagger_{\!\boldsymbol{\varepsilon}}\!\coprod_{e_1}^{e_K}\bigg\|_{\mathcal{L}}\Exp\!\big\|X^{\otimes2,[p]}_0\big\|\big\|X^{\otimes2}_1\big\| \;=\; \mathrm{O}(a^{-2}_K)\,.
\end{align}

\noindent Furthermore, with $K_b$ in the definition of $\Lambda_{K,\mathrm{d}}=(\lambda_{K,\mathrm{d}}-\lambda_{K_b,\mathrm{d}})^{-1}$ in \eqref{eq:Def Lambda_K}, we get 
\begin{align}\label{eq:decomposition est Tikh projected}
    \bigg\|\hat{\mathscr{C}}^\dagger_{\mathrm{d}}\!\coprod^{\hat{c}_{K,\mathrm{d}}}_{\hat{c}_{1,\mathrm{d}}}\, - \;\mathscr{C}^\dagger_{\mathrm{d}}\!\coprod^{c_{K,\mathrm{d}}}_{c_{1,\mathrm{d}}}\bigg\|_\mathcal{L}
    &\leq \,\bigg\|\hat{\mathscr{C}}^\dagger_{\mathrm{d}}\!\!\coprod^{\hat{c}_{K_b-1,\mathrm{d}}}_{\hat{c}_{1,\mathrm{d}}} - \;\mathscr{C}^\dagger_{\mathrm{d}}\!\!\coprod^{c_{K_b-1,\mathrm{d}}}_{c_{1,\mathrm{d}}}\bigg\|_\mathcal{L} + \,\bigg\|\hat{\mathscr{C}}^\dagger_{\mathrm{d}}\!\!\coprod^{\hat{c}_{K_b-1,\mathrm{d}}}_{\hat{c}_{K+1,\mathrm{d}}} - \;\mathscr{C}^\dagger_{\mathrm{d}}\!\!\coprod^{c_{K_b-1,\mathrm{d}}}_{c_{K+1,\mathrm{d}}}\bigg\|_\mathcal{L}\,.
\end{align}
Moreover, 
\begin{align*}
    \begin{split}
         \bigg\|\,\hat{\mathscr{C}}^\dagger_{\mathrm{d}}\!\!\coprod^{\hat{c}_{K_b-1,\mathrm{d}}}_{\hat{c}_{1,\mathrm{d}}} - \;\mathscr{C}^\dagger_{\mathrm{d}}\!\!\coprod^{c_{K_b-1,\mathrm{d}}}_{c_{1,\mathrm{d}}}\bigg\|_\mathcal{L} 
         &\leq \,\bigg\|\sum^{K_b-1}_{i=1}\,(\hat{\lambda}_{i,\mathrm{d}} +\vartheta_{\!N})^{-1}\Big[(\hat{c}_{i,\mathrm{d}}\otimes \hat{c}_{i,\mathrm{d}}) - (c_{i,\mathrm{d}}\otimes c_{i,\mathrm{d}})\Big]\bigg\|_\mathcal{L}\\
         &\qquad\;\, + \,\bigg\|\sum^{K_b-1}_{i=1}\,\Big[(\hat{\lambda}_{i,\mathrm{d}} +\vartheta_{\!N})^{-1} -(\lambda_{i,\mathrm{d}} +\vartheta_{\!N})^{-1}\Big](c_{i,\mathrm{d}}\otimes c_{i,\mathrm{d}})\bigg\|_\mathcal{L}.  
    \end{split}
\end{align*}
\noindent Further, by \citet[][Lemmas 3.1--3.2]{Reimherr2015}, $\|\cdot\|_\mathcal{L} \leq \|\cdot\|_\mathcal{S},$ $(\sum_k a_k)^{1/2} \leq \sum_k a_k^{1/2}$ for non-negative $a_k$, $\Lambda_{K,\mathrm{d}}=\Lambda_{K_b-1,\mathrm{d}},$ $K_b \leq K + \xi$ by Assumption \ref{as: Dim eigenspaces op-ARCH, diag, inf}~(a), and basic conversions, with the empirical reciprocal eigengaps $\hat{\Lambda}_{K,\mathrm{d}} = (\hat{\lambda}_{K,\mathrm{d}} - \hat{\lambda}_{K+1,\mathrm{d}})^{-1}$ in \eqref{eq:Def hat Lambda_K}, it holds
\begin{align}
    \bigg\|\,\hat{\mathscr{C}}^\dagger_{\mathrm{d}}\!\!\coprod^{\hat{c}_{K_b-1,\mathrm{d}}}_{\hat{c}_{1,\mathrm{d}}} - \;\mathscr{C}^\dagger_{\mathrm{d}}\!\!\coprod^{c_{K_b-1,\mathrm{d}}}_{c_{1,\mathrm{d}}}\bigg\|_\mathcal{L} 
    &\leq 2(\hat{\lambda}_{K_b-1,\mathrm{d}}+\vartheta_{\!N})^{-1}K^{1/2}_b\,\|\hat{\mathscr{C}}_{\mathrm{d}} - \mathscr{C}_{\mathrm{d}}\|_\mathcal{L}\notag\\
    &\qquad\;\,\times \Big(\hat{\Lambda}_{K_b-1,\mathrm{d}} + \Lambda_{K_b-1,\mathrm{d}} + (\lambda_{K_b-1,\mathrm{d}} + \vartheta_{\!N})^{-1} + 1 \Big)\notag\\
    &\leq 2\hat{\Lambda}_{K_b-1, \mathrm{d}}\,(K+\xi)^{1/2}\,\|\hat{\mathscr{C}}_{\mathrm{d}} - \mathscr{C}_{\mathrm{d}}\|_\mathcal{L}\,\big(\hat{\Lambda}_{K_b-1, \mathrm{d}} + 2\Lambda_{K,\mathrm{d}} + 1 \big).\label{eq:est error Tikhonov proj, main theorem, main part}
\end{align}
Moreover, analogous arguments and $K_b-K \leq \xi$ yield 
\begin{align}\label{eq:est error Tikhonov proj, main theorem, remainder}
    \bigg\|\,\hat{\mathscr{C}}^\dagger_{\mathrm{d}}\!\!\coprod^{\hat{c}_{K_b-1,\mathrm{d}}}_{\hat{c}_{K+1,\mathrm{d}}} - \;\mathscr{C}^\dagger_{\mathrm{d}}\!\!\coprod^{c_{K_b-1,\mathrm{d}}}_{c_{K+1,\mathrm{d}}}\bigg\|_\mathcal{L} &\leq 2\hat{\Lambda}_{K_b-1, \mathrm{d}}\,\xi^{1/2}\,\|\hat{\mathscr{C}}_{\mathrm{d}} - \mathscr{C}_{\mathrm{d}}\|_\mathcal{L}\big(\hat{\Lambda}_{K_b-1, \mathrm{d}} + 2\Lambda_{K,\mathrm{d}} + 1 \big).
\end{align}
Notice that $\hat{\Lambda}_{K_b-1, \mathrm{d}}$ is well-defined due to $K_b-1 \leq K+\xi$ and Assumption \ref{as: Dim eigenspaces op-ARCH, diag, inf}~(b). 

The remainder $\mathscr{R}^K_{\mathrm{d}}$ is identical to $\mathscr{R}_{\mathrm{d}}$ in the proof of Theorem \ref{Th: Cons est alphap fin}, where we deduced
\begin{align}\label{eq:inequality remainder R^K_d}
    \big\|\mathscr{R}^K_{\mathrm{d}}\big\|_\mathcal{S} \leq \vartheta_{\!N}a^{-1}_{\!K}\|\!\diag\!\|^2_\mathcal{L}\,\|\boldsymbol{\alpha}_K\|_\mathcal{S} \Exp\!\big\|\tilde{X}^{\otimes2,[p]}_0\big\|^2_{\mathcal{S}} = \mathrm{O}\big(\vartheta_{\!N}a^{-1}_{\!K}\big)\,.
\end{align}

Similar arguments as in the proof of Theorem \ref{Th: Cons est alphap fin} and the definition of $\boldsymbol{\alpha}_{\mathrm{d},K}$ yield 
\begin{align}
    \bigg\|\boldsymbol{\alpha}_{\mathrm{d},K}\bigg(\mathscr{C}^\ddagger_{\mathrm{d}}\!\coprod^{c_{K,\mathrm{d}}}_{c_{1,\mathrm{d}}} - \;\mathbb{I}\bigg)\Big\|_{\mathcal{S}} 
    &= \bigg(\sum^K_{m=1} \big(\lambda_{m,\mathrm{d}}(\lambda_{m,\mathrm{d}} + \vartheta_{\!N})^{-1} - 1\big)^2\,\big\|\boldsymbol{\alpha}_{\mathrm{d},K}(c_{m, \mathrm{d}})\big\|^2\bigg)^{\!1/2}\notag\\
    &\leq \vartheta_{\!N}\lambda^{-1}_{K,\mathrm{d}}\bigg(\sum^K_{m=1}\,\big\|\boldsymbol{\alpha}_{\mathrm{d},K}(c_{m, \mathrm{d}})\big\|^2\bigg)^{\!1/2}\notag\\
    &\leq \vartheta_{\!N}\lambda^{-1}_{K,\mathrm{d}}\,\big\|\boldsymbol{\alpha}_{\mathrm{d},K}\big\|_\mathcal{S} = \mathrm{O}\big(\vartheta_{\!N}\lambda^{-1}_{K,\mathrm{d}}\big)\,.\label{eq:norm of fin alpha p d on diff pseudo id to id}
\end{align}

Furthermore, due to the definition of the H-S norm and Assumption \ref{eq: Sob cond op-ARCH, infinite diag}, it holds
\begin{align}\label{eq:est error of diag bold alphap and finite-dim proj}
    \big\|\tilde{\boldsymbol{\alpha}}_{\mathrm{d}} - \tilde{\boldsymbol{\alpha}}^K_{\mathrm{d}}\big\|^2_{\mathcal{S}} \;=\; \sum^p_{i=1}\sum_{\ell>K}\,a^2_{i\ell\ell} \;\leq\; K^{-2\gamma}\sum^p_{i=1}\sum_{\ell>K}\,a^2_{i\ell\ell}(1+\ell^{2\gamma}) \;=\; \mathrm{O}(K^{-2\gamma})\,.
\end{align}

Overall, by combining \eqref{Tikhonov est and pop proj in inv}--\eqref{eq:est error of diag bold alphap and finite-dim proj} with \eqref{initial inequality operators of linear processes}, and using that $(\lambda_{K,\mathrm{d}} + \vartheta_{\!N})^{-1} \leq \Lambda_{K,\mathrm{d}}$ and $(\hat{\lambda}_{K,\mathrm{d}} + \vartheta_{\!N})^{-1} \leq \hat{\Lambda}_{K,\mathrm{d}} = \mathrm{O}_{\Prob}(\Lambda_{K,\mathrm{d}}),$ together with Lemma \ref{Lemma: Cons lagged cross-cov non-centered}, it holds
\begin{align*}
    \big\|\hat{\boldsymbol{\alpha}}_{\mathrm{d}} - \tilde{\boldsymbol{\alpha}}_{\mathrm{d}}\big\|_{\mathcal{S}} \leq \mathrm{O}_{\Prob}\big(K^{1/2}a^{-2}_K\Lambda^2_{K,\mathrm{d}}N^{-1/2}\big) + \mathrm{O}\big(\vartheta_{\!N}a^{-1}_{\!K}\big) + \mathrm{O}\big(\vartheta_{\!N}\lambda^{-1}_{K,\mathrm{d}}\big) + \mathrm{O}(K^{-\gamma})\,. 
\end{align*}
Thus, as $K^{\gamma+1/2}a^{-2}_K\Lambda^2_{K,\mathrm{d}} = \mathrm{O}(N^{1/2})$ and $\vartheta_{\!N} = \mathrm{O}(\min(a_K,\lambda_{K, \mathrm{d}})K^{-\gamma}),$ the claim is proved. 
\end{proof}

\begin{proof}[\textbf{Proof of Proposition \ref{prop:asymptotic result Delta ARCH, inf}}] First, let $\Delta_K \coloneqq \sum_{i=1}^K d_i (e_i \otimes e_i)$ with $K$ as in the definition of the estimator $\hat{\Delta}$ in Eq.~\eqref{eq:Esimator Delta ARCH}. Then, 
\begin{align}
    \|\hat{\Delta} - \Delta\|_{\mathcal{S}} &\leq \bigg\|\hat{\Delta} - \Delta_K\mathscr{C}^\ddagger_{\!\boldsymbol{\varepsilon}}\!\coprod^{e_K}_{e_1}\bigg\|_\mathcal{S} + \bigg\|\Delta_K\bigg(\mathscr{C}^\ddagger_{\!\boldsymbol{\varepsilon}}\!\coprod^{e_K}_{e_1} - \mathbb{I}\bigg)\bigg\|_\mathcal{S} + \|\Delta - \Delta_K\|_{\mathcal{S}}\,.\label{eq:inequality proof of est error hat Delta - Delta}
\end{align}
Moreover, with 
$\mathscr{C}^\dagger_{\!\boldsymbol{\varepsilon}} = (\mathscr{C}_{\boldsymbol{\varepsilon}} + \vartheta_{\!N}\mathbb{I})^{-1}$ and $\mathscr{C}^\ddagger_{\!\boldsymbol{\varepsilon}} \coloneqq \mathscr{C}_{\boldsymbol{\varepsilon}}\mathscr{C}^\dagger_{\!\boldsymbol{\varepsilon}}$, we have
$$
    \bigg\|\mathscr{C}^\dagger_{\!\boldsymbol{\varepsilon}}\coprod^{e_K}_{e_1}\bigg\|_\mathcal{L} = \sup_{1 \leq j \leq K} (a_j + \vartheta_{\!N})^{-1} \leq a^{-1}_K, \qquad \bigg\|\mathscr{C}^\ddagger_{\!\boldsymbol{\varepsilon}}\coprod^{e_K}_{e_1}\bigg\|_\mathcal{L} = \sup_{1 \leq j \leq K} a_j(a_j + \vartheta_{\!N})^{-1} \leq 1.
$$
For the first term in \eqref{eq:inequality proof of est error hat Delta - Delta}, using \eqref{eq:Identity for Delta in ARCH}, the definition of $\hat{\Delta}$ in \eqref{eq:Esimator Delta ARCH}, the triangle inequality, and the operator-valued H\"older inequality, we obtain
\begin{align}
    \bigg\|\hat{\Delta} - \Delta\mathscr{C}^\ddagger_{\!\boldsymbol{\varepsilon}}\!\coprod^{e_K}_{e_1}\bigg\|_\mathcal{S} 
    &\leq \|\hat{\mathscr{C}}_{\!\boldsymbol{X}} - \mathscr{C}_{\!\boldsymbol{X}}\|_\mathcal{S}\,\bigg\|\mathscr{C}^\dagger_{\!\boldsymbol{\varepsilon}}\coprod^{e_K}_{e_1}\bigg\|_\mathcal{L} 
    + \big\|\boldsymbol{\hat{\alpha}}(\hat{m}_p) - \boldsymbol{\alpha}(m_p)\big\|_\mathcal{S}\,\bigg\|\mathscr{C}^\ddagger_{\!\boldsymbol{\varepsilon}}\coprod^{e_K}_{e_1}\bigg\|_\mathcal{L}\notag\\
    &\leq a^{-1}_K\|\hat{\mathscr{C}}_{\!\boldsymbol{X}} - \mathscr{C}_{\!\boldsymbol{X}}\|_\mathcal{S} 
    + \|\boldsymbol{\hat{\alpha}} - \boldsymbol{\alpha}\|_\mathcal{S}\|\hat{m}_p\|_\mathcal{S} 
    + \|\boldsymbol{\alpha}\|_\mathcal{S}\|\hat{m}_p - m_p\|_\mathcal{S}\notag\\
    &= \mathrm{O}_{\Prob} \big(a^{-1}_KN^{-1/2} \big) + \mathrm{O}_{\Prob}\big(\|\boldsymbol{\hat{\alpha}} - \boldsymbol{\alpha}\|_\mathcal{S} \big).\label{eq:1st term in upper bound for hat Delta - Delta}
\end{align}

For the second term in \eqref{eq:inequality proof of est error hat Delta - Delta}, noting that $(e_i)$ is a complete orthonormal system, we have
\begin{align}
    \bigg\|\Delta_K\bigg(\mathscr{C}^\ddagger_{\!\boldsymbol{\varepsilon}}\!\coprod^{e_K}_{e_1} - \mathbb{I}\bigg)\bigg\|^2_\mathcal{S} 
    &= \sum^\infty_{j=1}\bigg\|\sum^K_{i=1}d_i(e_i\otimes e_i)\bigg(\mathscr{C}^\ddagger_{\!\boldsymbol{\varepsilon}}\!\coprod^{e_K}_{e_1}-\mathbb{I}\bigg)(e_j)\bigg\|^2\notag\\
    &= \sum^K_{j=1} \Big(a_j(a_j + \vartheta_{\!N})^{-1} - 1\Big)^{\!2}\bigg\|\sum^K_{i=1}d_i(e_i\otimes e_i)(e_j)\bigg\|^2\allowdisplaybreaks\notag\\
    &\leq \vartheta^2_{\!N}a^{-2}_{\!K}\sum^K_{j=1}\bigg\|\sum^K_{i=1}d_i(e_i\otimes e_i)(e_j)\bigg\|^2\allowdisplaybreaks\notag\\
    &= \vartheta^2_{\!N}a^{-2}_{\!K}\|\Delta_K\|^2_\mathcal{S}.\label{eq:2nd term in upper bound for hat Delta - Delta}
\end{align}

For the last term in \eqref{eq:inequality proof of est error hat Delta - Delta}, condition \eqref{eq:Sobolev cond Delta} yields
\begin{align}
    \|\Delta - \Delta_K\|^2_{\mathcal{S}} \;=\; \sum_{i>K}d^2_i 
    \;\leq\; K^{-2\delta}\sum_{i>K}d^2_i(1+i^{2\delta}) 
    \;=\; \mathrm{O}(K^{-2\delta}).\label{eq:3rd term in upper bound for hat Delta - Delta}
\end{align}

Finally, substituting \eqref{eq:1st term in upper bound for hat Delta - Delta}--\eqref{eq:3rd term in upper bound for hat Delta - Delta} into \eqref{eq:inequality proof of est error hat Delta - Delta}, and using $\vartheta_{\!N} = \mathrm{O}(a_KK^{-\delta}),$ where $K=K_{\!N} \to \infty,$ together with $\sup_K\|\Delta_K\|_\mathcal{S}\leq \|\Delta\|_\mathcal{S}<\infty$, the claim follows.
\end{proof}

\end{document}